\newcommand{\R}{{\mathbb R}}\newcommand{\N}{{\mathbb N}}
\newcommand{\Z}{{\mathbb Z}}
\let\epsilon\varepsilon
\let\theta\vartheta
\let\hat\widehat
\DeclareMathOperator{\sech}{sech}
\title{On the Existence of Generalized Breathers and Transition Fronts in Time-Periodic Nonlinear Lattices}
\author{Christopher Chong\thanks{Department of Mathematics, Bowdoin College, Brunswick, ME 04011, USA \\e-mail: \email{cchong@bowdoin.edu}.}
\and Dmitry E. Pelinovsky\thanks{Department of Mathematics and Statistics, McMaster University, Hamilton, Ontario, Canada, L8S 4K1 \\e-mail: \email{dmpeli@math.mcmaster.ca} }
\and Guido Schneider \thanks{Institut f\"ur Analysis, Dynamik und Modellierung, Universit\"at Stuttgart, 70569 Stuttgart, Germany \\ e-mail: \email{guido.schneider@mathematik.uni-stuttgart.de}}}
\newcommand*{\addFileDependency}[1]{% argument=file name and extension
  \typeout{(#1)}% latexmk will find this if $recorder=0 (however, in that case, it will ignore #1 if it is a .aux or .pdf file etc and it exists! if it doesn't exist, it will appear in the list of dependents regardless)
  \@addtofilelist{#1}% if you want it to appear in \listfiles, not really necessary and latexmk doesn't use this
  \IfFileExists{#1}{}{\typeout{No file #1.}}% latexmk will find this message if #1 doesn't exist (yet)
}
\begin{document}

\maketitle

\begin{abstract}
We prove the existence of a class of time-localized and space-periodic 
breathers (called $q$-gap breathers) in nonlinear lattices with time-periodic coefficients. These $q$-gap breathers are the counterparts to the classical space-localized and time-periodic breathers found in space-periodic systems. Using normal form transformations, we establish rigorously the existence of such
solutions with oscillating tails (in the time domain) that can be made arbitrarily small, but finite. Due to the presence of the oscillating tails, these solutions are coined generalized $q$-gap breathers. Using a multiple-scale analysis, we also derive a tractable amplitude equation that describes
the dynamics of breathers in the limit of small amplitude. In the presence of damping, we demonstrate
the existence of transition fronts that connect the trivial state to the time-periodic ones. The analytical results are corroborated by systematic numerical simulations.
\end{abstract}

\vspace{-.3cm}

\section{Introduction}

The classical discrete breather is a fundamental coherent structure of nonlinear lattices. They can
be found in many fields, ranging from photonics, electrical circuits, condensed matter physics, molecular biology, chemistry, and phononics \cite{flach_discrete_2008}.
Breathers are relevant for applications, such as information storage and transfer in the context of photonic crystals \cite{photonic1}, but are also rich mathematically and have inspired countless numerical and analytical studies \cite{pgk:2011,Dmitriev_2016}.
The discrete breather is localized in space and periodic in time  with temporal frequency
lying within a frequency gap \cite{flach_discrete_2008}.   Spatially periodic media can have frequency gaps, and hence, discrete breathers are possible in such systems \cite{Huang2}.

If breathers can be found in the frequency gap of spatially periodic media,
what can be found in the wavenumber gap of temporally periodic media? While
this question is a natural one to ask, it has only been very recently addressed.
In the context of a photonic time crystal, it was formally shown in \cite{Superluminal}
that structures that are localized in time and periodic in space can be found in the wavenumber
bandgap of temporally periodic media. The structure reported on had same defining features as the classic breather, but with the role of space and time switched. Such solutions are called  $q$-gap breathers, where $q$ stands for the wavenumber.

In the presence of damping, so-called transition fronts
are possible in a $q$-gap, which connect
the trivial state to time-periodic ones. $q$-gap breathers
and transition fronts were studied numerically and experimentally
in the context of a nonlinear phononic lattice in \cite{chong2}. 
The experimental platform therein was based on the one developed in \cite{chong1}, where bifurcations of time-periodic solutions were studied.

%based on the experiments with the time-modulated nonlinear phononic lattices \cite{chong1}. 

It is the purpose of this paper to establish rigorously the existence of $q$-gap
breathers and transition fronts and to provide a tractable analytical approximation of their dynamics.
$q$-gap breathers are a new type of structure, and are distinct from $q$-breathers, which are localized in wavenumber and periodic in time \cite{Qbreather}. Temporal localization can also be achieved via other mechanisms, including zero-wavenumber gain modulation instability \cite{ZeroGain} and nonlinear resonances \cite{FrequencyComb2,MicroBreather}.
Integrable equations admit such solutions explicitly, 
e.g., the Akhmediev breathers of the nonlinear Schr\"odinger (NLS) equation \cite{AkhmedievNLS} and its discrete counterpart, the Ablowitz-Ladik
lattice \cite{AkhmedievAL}. A feature that distinguishes $q$-gap breathers from other temporally localized structures, like the ones just described, is the fact that the underlying wavenumber lies in a $q$-gap.

Wavenumber bandgaps for the (possible) existence of $q$-gap breathers can be found in
a wide class of temporally periodic lattices.  Indeed, there have been
many recent advances in experimental platforms for time-varying systems, including
photonic \cite{soljacic_optimal_2002,wang_optical_2008,wen_tunable_2020,Rechtsman2021},
electric \cite{powell_multistability_2008,kozyrev_parametric_2006,powell_asymmetric_2009}, and phononic examples \cite{reyes-ayona_observation_2015, trainiti_time-periodic_2019,marconi_experimental_2020,nassar_nonreciprocity_2020, Kim2023}.
Controllable temporal localization has potential applications in
the creation of phononic frequency combs \cite{FrequencyComb1} (see also \cite{FrequencyComb2,MicroBreather}), energy
harvesting \cite{harvesting2,harvesting3}, or acoustic signal processing \cite{Hartmann2007}.
The alternate mechanism for temporal localization that $q$-gap breathers afford and the wide availability
of platforms in which they may be implemented suggest the potential utility of $q$-gap breathers in  photonic, phononic, electrical, and even chemical or biological applications \cite{chong2}.

\subsection{Model equations and physical motivation} \label{sec:model}

The mathematical model for the present study is a time-periodic
nonlinear lattice,
\begin{equation} 
\underline{m} \ddot{u}_n + c \dot{u}_n + k(t) u_n  = F(u_{n+1}-u_n) - F(u_n-u_{n-1})
\label{FPU-pert}
\end{equation}
with mass $ \underline{m}$, damping parameter $ c \geq 0 $,  
the time-periodic modulation of the spring parameter $ k(t) = k(t+T) $ for a 
period $ T > 0 $, and the inter-particle force $F$.  Assuming Dirichlet boundary conditions $ u_0(t) = u_{N+1}(t) = 0 $ for some integer $N$, we have a $ 2N $-dimensional dynamical system obtained from \eqref{FPU-pert} at $ n = 1, 2, \ldots, N$. We use $U := (u_1,u_2,\dots,u_N)$ for further references in the main results.

We will consider a polynomial form of the inter-particle force
\begin{equation} 
F(w) = K_2 w - K_3 w^2 + K_4 w ^3, \qquad K_2 >0
\label{FPUpoly}
\end{equation}
in which  Eq.~\eqref{FPU-pert} corresponds to the classical
Fermi-Pasta-Ulam-Tsingou (FPUT) lattice 
if $k(t)=0$ and $ c = 0 $ \cite{fermi_studies_1955,FPUreview}. The FPUT lattice is a central equation in the study
of nonlinear waves \cite{Vainstein}, partly due to its relevance as a model
in phononic, electrical, and biological systems (among others), its mathematical
richness \cite{berman_fermi-pasta-ulam_2005}, and its place in history
as the first test-bed for numerical simulations \cite{fpupop}.

One concrete motivation for studying system \eqref{FPU-pert} with a time-periodic stiffness term $k(t) = k(t+T)$ is that it describes
an array of repelling magnets surrounded by time modulated coils. It was in this setting that $q$-gap breathers
were observed experimentally \cite{chong2}. In this case $F(w)$ models
the repulsive force of the magnets, and is given by 
\begin{equation} 
F(w) = -\frac{a_1}{(d+w)^{a_2}}, \quad 
\label{FPUmagnet}
\end{equation}
where $d,a_1,a_2 > 0$ are material parameters. Using the Taylor expansion of
Eq.~\eqref{FPUmagnet} at $w = 0$ gives a correspondence to the FPUT
model with $F$ given by Eq.~\eqref{FPUpoly} with
\begin{equation} 
\label{Taylor}
K_2 = \frac{a_2 a_1}{d^{a_2 + 1}}, \quad 
K_3 = \frac{a_2 (a_2+ 1) a_1}{2 d^{a_2 + 2}}, \quad 
K_4 = \frac{a_2 (a_2 + 1) (a_2+ 2) a_1}{6 d^{a_2+ 3}}.
\end{equation}

For $k(t) = k(t+T)$, we will
use a specific choice for illustrations that is motivated 
by the experimental set-up of \cite{chong2}. In particular, we consider a piecewise constant 
time-periodic parameter function $k(t)$ in the form: 
\begin{equation} \label{spec2}
k(t) = \left\{ \begin{array}{cl}  k_a ,& t \in [0,\tau_d T), \\
k_b ,& t \in [\tau_d T,T), \end{array} \right.
\end{equation}
for a $ \tau_d \in [0,1] $ and where $k_a,k_b$ are the so-called modulation amplitude
parameters and $\tau_d$ is the duty-cycle. Using the rescaling 
$$
u_n(t) \rightarrow \frac{K_2}{K_3} u_n\left(\sqrt{\frac{K_2}{\underline{m}}} t \right)
$$ 
leads to the normalized parameter values with 
$\underline{m},K_2,K_3 \rightarrow 1$. We note that the results of this paper
are applicable for more general parameter choices and time-periodic coefficients $k(t) = k(t+T)$ and to  lattices with $K_j(t) = K_j(t+T)$ for $ j =2,3,\ldots $.

\begin{figure} %  figure placement: here, top, bottom, or page
    \centerline{
   \begin{tabular}{@{}p{0.5\linewidth}@{}p{0.5\linewidth}@{}}
     \rlap{\hspace*{5pt}\raisebox{\dimexpr\ht1-.1\baselineskip}{\bf (a)}}
 \includegraphics[height=6cm]{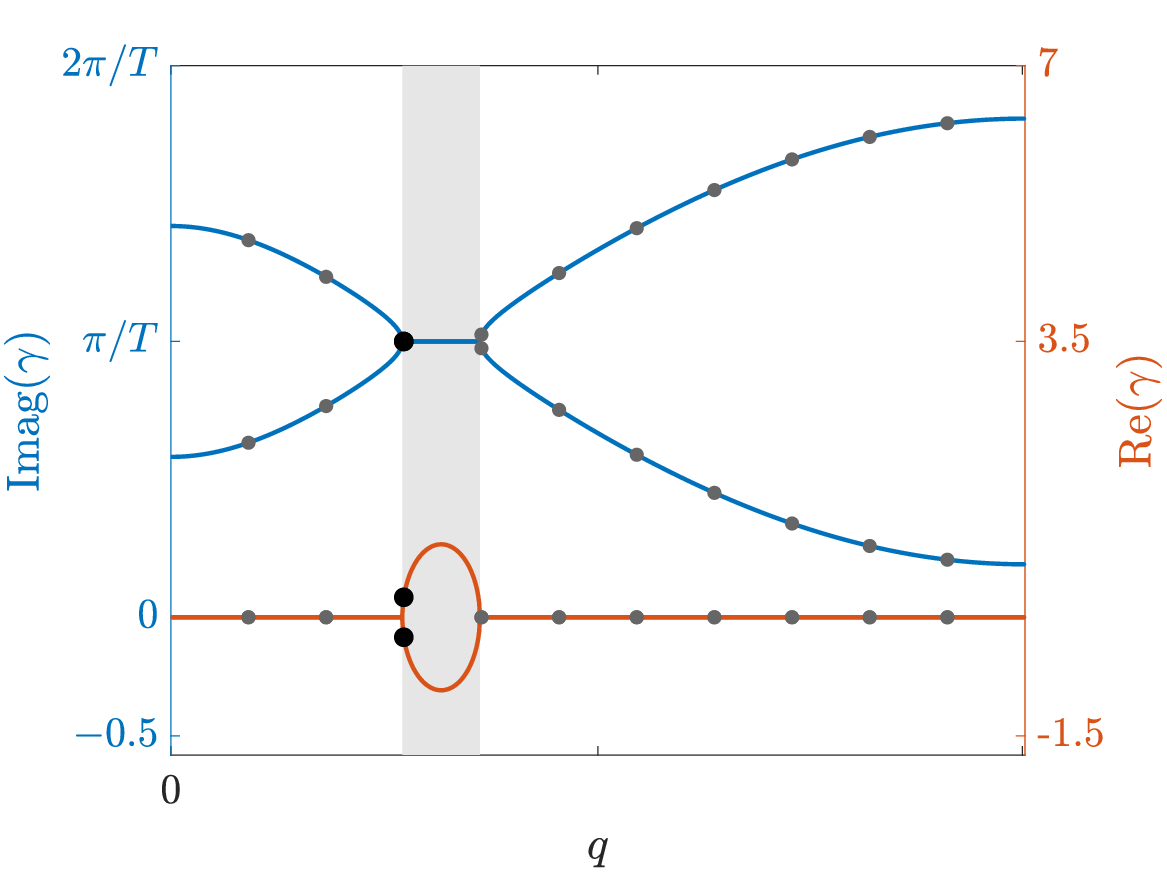}  &
   \rlap{\hspace*{5pt}\raisebox{\dimexpr\ht1-.1\baselineskip}{\bf (b)}}
 \includegraphics[height=6cm]{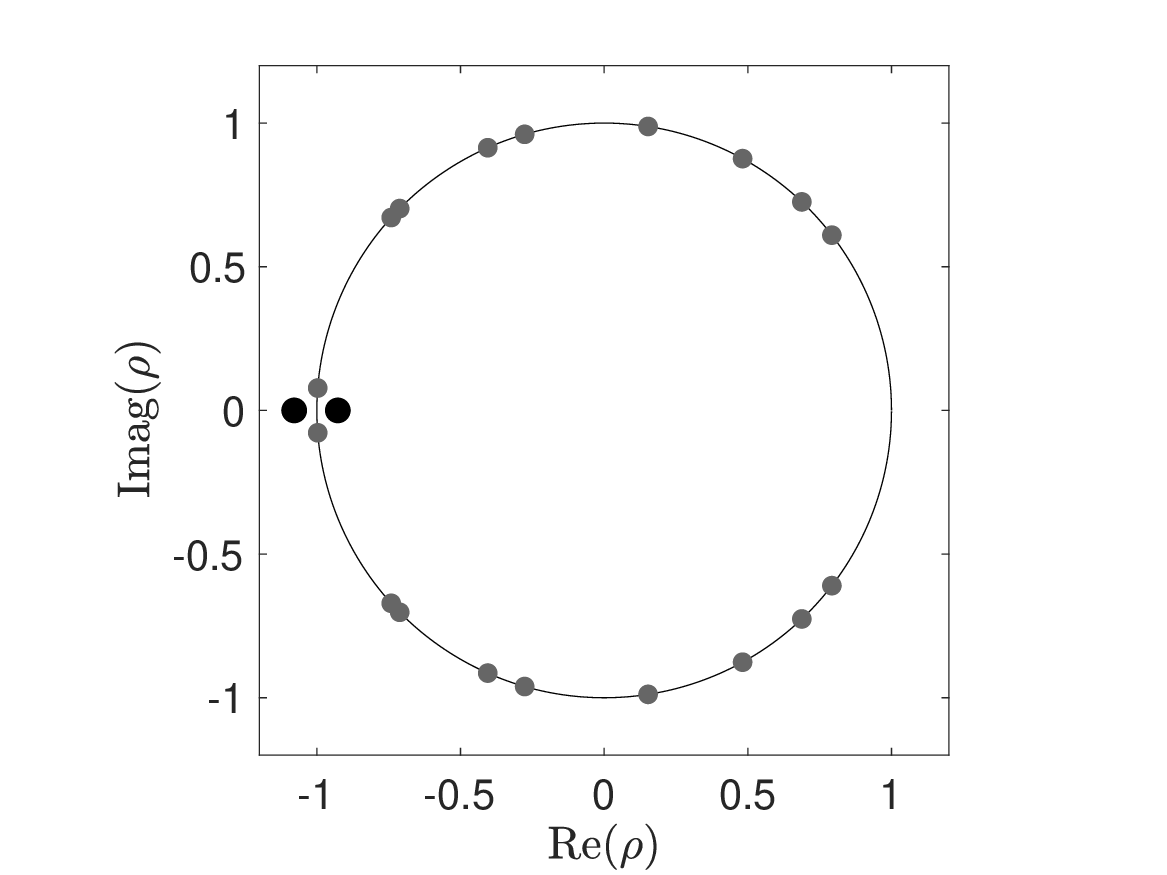} 
  \end{tabular}
  }
\centerline{
   \begin{tabular}{@{}p{0.5\linewidth}@{}p{0.5\linewidth}@{} }
     \rlap{\hspace*{5pt}\raisebox{\dimexpr\ht1-.1\baselineskip}{\bf (c)}}
 \includegraphics[height=6cm]{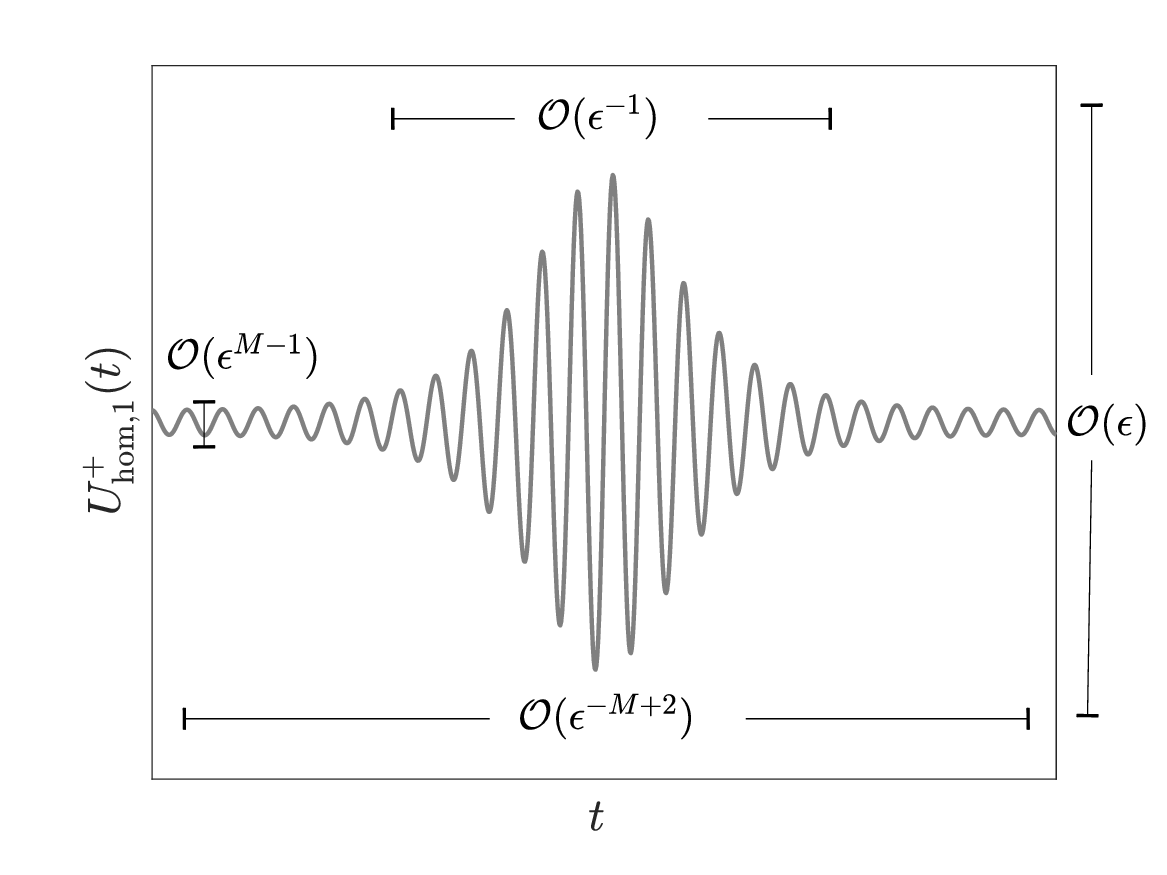}  &
   \rlap{\hspace*{5pt}\raisebox{\dimexpr\ht1-.1\baselineskip}{\bf (d)}}
 \includegraphics[height=6cm]{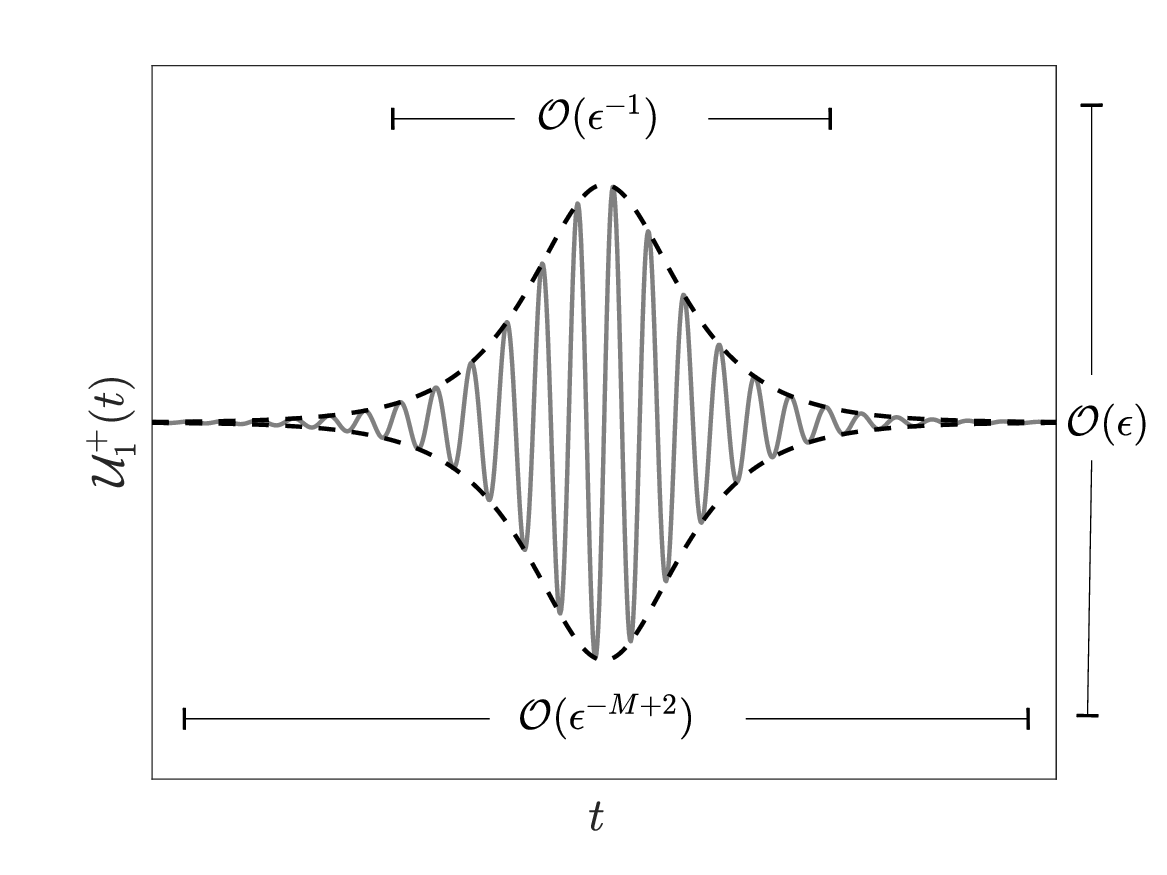} 
  \end{tabular}}
   \caption{ \textbf{(a)}
   The real (red) and imaginary (blue) part of the Floquet exponent $\gamma$
   as a function of Fourier wavenumber $q\in[0,\pi]$ in the infinite lattice. 
   The gray shaded region corresponds to the wavenumber bandgap.  The gray markers
   correspond to the Floquet exponents with a lattice size of $N=10$.  The
    $m_0=3$ exponents lie in wavenumber bandgap (larger black markers).
    The parameter values are $\underline{m}=K_2=1$, $c=0$, $T=1/0.37$, $\tau_d=0.5$, $k_a = 0.6$ and $k_b = 0.8$.
   \textbf{(b)} Floquet multipliers corresponding to panel (a).  One of the
    $m_0=3$ multipliers (larger black markers) has modulus exceeding unity.
   \textbf{(c)} Conceptualization of a generalized $q$-gap breather. The first component of $U^{+}_{\rm hom} = (u_1,u_2,\dots,u_N)$ is shown. 
   \textbf{(d)} The analytical approximation of the solution shown in panel (c), namely $\mathcal{U}^{+}$. 
   The first component of $\mathcal{U}^{+}$ is shown.
    }
   \label{fig:breather_idea}
\end{figure}

\subsection{Summary of main results}

We will develop rigorous proofs of the existence of 
oscillating homoclinic  solutions for $ c = 0 $ and heteroclinic solutions for small damping $ c>0 $ of Eq.~\eqref{FPU-pert} with time-periodic stiffness $k(t)$. Since the tails of homoclinic solutions 
have small oscillations that do not vanish at infinity, the solutions can also be thought of as generalized $q$-gap breathers. Similar
nomenclature has been adopted of the description of classical breathers with non-zero tails in space-time continuous systems \cite{Groves} and with spatially periodic coefficients \cite{DPS24}. See \cite{Fukuizumi2022} for discussion of how the interchange of time and space variables affects derivation and justification of the homoclinic solutions. 

Before stating the main theorems of the paper, let us describe intuitively the generalized $q$-gap breathers for $c=0$.
In the presence of time-periodic stiffness $k(t)$, 
some wavenumbers may fit into the gap in the dispersion relationship, as seen in Fig.~\ref{fig:breather_idea}(a).
The corresponding Floquet multipliers are shown in panel (b) (details on the Floquet theory follow
in Sec.~\ref{sec2}). Unlike the situation for frequency gaps in (linear) spatially periodic media,
exponential growth of Fourier modes occurs if the wavenumber is inside 
the gap of the dispersion relation since the Floquet exponent in the gap has positive real part, or equivalently, 
the Floquet multiplier has modulus exceeding unity. Due to this (parametric) instability,
initializing  Eq.~\eqref{FPU-pert} with such a Fourier mode will initially lead
to growth. However, as the amplitude increases, the nonlinearity of the system comes
into play, and, as we shall prove later, has a localizing affect on the dynamics,
see Fig.~\ref{fig:breather_idea}(c). This solution, however, cannot decay to zero.
This is due to the presence of neutrally stable modes (i.e., the Floquet
multipliers lying on the unit circle). During the dynamic evolution, all of the Fourier
modes will couple (due to the nonlinearity). The presence of the neutrally stable
modes causes the small oscillations, as seen in the tails of Fig.~\ref{fig:breather_idea}(c).
From a dynamical systems point of view, the trivial state has one unstable direction, one stable direction, and $ 2 N-2 $ neutral directions. 
While genuine homoclinic solutions could be found in the intersection of the associated one-dimensional stable and unstable manifolds, it cannot be expected that such an intersection 
exists in an $  2 N $-dimensional phase space. Thus, only homoclinic solutions with small oscillating ripples for $ |t| \to \infty $ exist. These solutions lie in the intersection of the $ 2 N -1 $-dimensional  center-stable manifold with 
the $ 2 N -1 $-dimensional  center-unstable manifold of the origin, for which we use the time-reversibility of the system \eqref{FPU-pert} with $k(t)$ given by \eqref{spec2}.   
The distance the wavenumber of the unstable Fourier mode is to the edge of the gap
defines a small parameter $ \varepsilon > 0  $ (how exactly is detailed
in Sec. \ref{sec2}). With normal form transformations for time-periodic systems (details in Secs.~\ref{sec3}-\ref{sec7}) it can be shown that the oscillating ripples can be made arbitrarily small, i.e., of order $ \mathcal{O}(\varepsilon^{M-1}) $ at the time scale of $\mathcal{O}(\varepsilon^{-M+2})$ with some $  M \in \N$ arbitrarily large but fixed, see Fig.~\ref{fig:breather_idea}(c). Using a multiple-scale analysis (details in Sec.~\ref{sec8}), one can derive an explicit approximation of a genuine homoclinic orbit, see Fig.~\ref{fig:breather_idea}(d), which agrees 
with the leading order of the generalized breather's profile. 

We are now ready to present the main theorem on the existence of two homoclinic orbits with oscillating ripples, i.e., generalized $q$-gap breathers. The assumptions {\bf (Spec)} and {\bf (Coeff)} are described in Secs.~\ref{sec2} and~\ref{sec6} respectively. The small parameter $\varepsilon$ is defined in {\bf (Spec)} and explicitly 
in the representation for $k(t) = k(t+T)$.

\begin{theorem}
	\label{theorem-1}
	Assume the spectral condition {\bf (Spec)} and the normal form coefficient condition {\bf (Coeff)} are satisfied. Then 
	for every $M \in \mathbb{N}$ with $ M \geq 3 $
	there exists an $ \varepsilon_0 > 0 $ and $ C_0 > 0 $ such that for all $ \varepsilon \in (0,\varepsilon_0) $  the system \eqref{FPU-pert} with the time-periodic coefficient \eqref{spec2} and $c = 0$ possesses two generalized homoclinic solutions  
	$ U_{\rm hom}^{\pm} \in C^1([-\varepsilon^{-M+2}, \varepsilon^{-M+2}] ,\R^{N}) $ 
	satisfying
	$$ 
	\sup_{t \in [-\varepsilon^{-M+2}, \varepsilon^{-M+2}]} 
	\| U^{\pm}_{\rm hom}(t) - \mathcal{U}^{\pm}(t)  \| + 
		\| (U^{\pm}_{\rm hom})'(t) - (\mathcal{U}^{\pm})'(t)  \| \leq  C_0 \varepsilon^{M-1}
	$$ 
where $\mathcal{U}^{\pm}(t) : \mathbb{R} \to \mathbb{R}^N$ satisfy
$ \lim\limits_{|t| \to \infty} \| \mathcal{U}^{\pm}(t) \| + \| (\mathcal{U}^{\pm})'(t) \| = 0 $ and can be approximated as 
$$
(\mathcal{U}^{\pm})_n(t) = \pm \varepsilon A(\varepsilon t) g(t) \sin(q_{m_0} n) + \mathcal{O}(\varepsilon^2), 
$$
where $g(t + T) = -g(t)$ and $A(\tau) = \alpha {\rm sech}(\beta \tau)$ are uniquely defined, real-valued functions with some $\alpha, \beta > 0$, 
see Eq.~\eqref{A-soliton} below.
\end{theorem}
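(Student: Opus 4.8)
The plan is to pass to the Fourier (sine) basis in which the linearisation of \eqref{FPU-pert} decouples, to isolate the two-dimensional hyperbolic directions belonging to the $q$-gap mode $m_0$ from the $2N-2$ neutral directions, to reduce the hyperbolic part to a scalar amplitude equation whose homoclinic orbit is the $\sech$-profile, and finally to use the time-reversibility of the lattice to turn that profile into a genuine solution on the long interval. First I would write $U$ in the Dirichlet sine basis, $u_n(t)=\sum_{m=1}^N a_m(t)\sin(q_m n)$ with $q_m=\pi m/(N+1)$, so that the linear part of \eqref{FPU-pert}--\eqref{FPUpoly} with $c=0$ becomes $N$ uncoupled Hill equations $\ddot a_m+\big(k(t)+4\sin^2(q_m/2)\big)a_m=0$ with $T$-periodic coefficients. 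Assumption \textbf{(Spec)} (see Section~\ref{sec2}) then says: the monodromy matrix of the $m_0$-th equation has real Floquet multipliers $\mu,\mu^{-1}$ with $|\mu|>1$, arising because the multiplier crosses $-1$ at the edge of an instability tongue --- whence the anti-periodic Floquet solution $g$ with $g(t+T)=-g(t)$, which, after translating time so that a reflection point of $k(t)$ (cf. \eqref{spec2}) sits at the origin, may be taken even, making the system reversible under $(U,\dot U,t)\mapsto(U,-\dot U,-t)$; while all other Hill equations have monodromy on the unit circle with multipliers bounded away from $\pm1$ and satisfying the non-resonance conditions used below. This decomposes the $2N$-dimensional phase space as $X_h\oplus X_c$, with $\dim X_h=2$ (one unstable and one stable direction, Floquet exponent $\gamma=\mathcal{O}(\varepsilon)$) and $\dim X_c=2N-2$ (neutral), the small parameter $\varepsilon$ being defined in \textbf{(Spec)} through the representation of $k(t)$ so that the $m_0$-th Hill equation carries an $\mathcal{O}(\varepsilon^2)$ detuning inside the tongue.

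Next I would perform the normal-form transformations of Sections~\ref{sec3}--\ref{sec7}: a near-identity, reversibility-preserving change of variables that, using the non-resonance part of \textbf{(Spec)}, eliminates all interaction terms coupling $X_h$ to $X_c$ up to order $\varepsilon^M$. In the transformed system the $X_c$-component obeys a linear $T$-periodic oscillation forced only by an $\mathcal{O}(\varepsilon^M)$-small reversible term from $X_h$, while the $X_h$-component, after the multiple-scale rescaling $a_{m_0}=\varepsilon A$, $\tau=\varepsilon t$ (Section~\ref{sec8}), obeys a small reversible perturbation of the stationary amplitude equation $c_2 A''=c_1 A-c_3 A^3$ produced by the Fredholm solvability condition at $\mathcal{O}(\varepsilon^3)$, whose coefficients come from averaging $g$, $g^2$, $g^4$ and the bounded $\mathcal{O}(\varepsilon^2)$ response generated by the quadratic term over the period. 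Here $c_1,c_2>0$ are forced by the position inside the tongue and the structure of the Hill equation, and condition \textbf{(Coeff)} is precisely the focusing requirement $c_3>0$, so that the unique (up to sign, and, by reversibility, centred at $t=0$) homoclinic orbit $A(\tau)=\alpha\,\sech(\beta\tau)$ exists with $\alpha=\sqrt{2c_1/c_3}$ and $\beta=\sqrt{c_1/c_2}$; the two signs give $A$ and $-A$. Carrying the multiple-scale expansion out to the order dictated by $M$ and undoing the changes of variables defines $\mathcal{U}^\pm:\R\to\R^N$, a globally defined, exponentially decaying profile with leading term $\pm\varepsilon A(\varepsilon t)g(t)\sin(q_{m_0}n)+\mathcal{O}(\varepsilon^2)$ that solves \eqref{FPU-pert} up to a residual small enough for the estimate below to close. (The hypothesis $M\ge3$ is what makes the time scale $\varepsilon^{-M+2}\ge\varepsilon^{-1}$ reach the scale on which the $\sech$-envelope $A(\varepsilon t)$ actually varies.)

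Finally I would upgrade $\mathcal{U}^\pm$ to a true solution on $[-\varepsilon^{-M+2},\varepsilon^{-M+2}]$. Since $A'(0)=0$ and $g'(0)=0$, the approximation lies on the reversibility section $\{\,(U,\dot U):\dot U=0\,\}$ at $t=0$; posing the initial-value problem there with data $\mathcal{O}(\varepsilon^{M-1})$-close to $(\mathcal{U}^\pm(0),0)$ and invoking reversibility, the resulting solution $U^\pm_{\rm hom}$ is automatically even in $t$, so it suffices to estimate it on $[0,\varepsilon^{-M+2}]$. Writing $U=\mathcal{U}^\pm+R$ and running a Gronwall/energy estimate in the scaled variables yields $\|R(t)\|+\|R'(t)\|\le C_0\varepsilon^{M-1}$ there: the $X_c$-component of $R$ grows at most polynomially because its coupling is $\mathcal{O}(\varepsilon^M)$, and the otherwise dangerous unstable direction in $X_h$ is tamed because on the symmetric section it is the time-reverse of the stable direction, so its amplification is governed by the decay of the $\sech$-profile rather than by $e^{\gamma t}$; undoing the near-identity normal-form map preserves the bound. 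The $\mathcal{O}(\varepsilon^{M-1})$ excitation of the neutral modes $X_c$ is exactly the non-vanishing oscillating ripple in the tails, so $U^\pm_{\rm hom}$ are the two generalized $q$-gap breathers.

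The heart of the argument --- and the step I expect to be the main obstacle --- is the normal-form reduction: one must iteratively remove the resonant, secular-producing interactions between the single unstable/single stable direction and the $2N-2$ neutral directions, which forces the quantitative non-resonance hypotheses bundled into \textbf{(Spec)} and demands careful bookkeeping so that the transformation remains near-identity, invertible, and reversibility-compatible through order $\varepsilon^M$. The second delicate point --- and the reason the theorem is stated on a finite (though long) interval rather than on all of $\R$ --- is that $X_h$ genuinely carries an instability with rate $\gamma\sim\varepsilon>0$, so a naive forward Gronwall bound would blow up; it is only the time-reversibility of \eqref{FPU-pert} with \eqref{spec2}, which reduces the error estimate to a half-line problem on the symmetric section where the unstable and stable directions swap roles, that makes the estimate close.
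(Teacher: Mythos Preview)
Your overall architecture matches the paper's: Fourier/Floquet decomposition, splitting into a two-dimensional hyperbolic subspace $X_h$ and a $(2N-2)$-dimensional neutral subspace $X_c$, iterated normal forms pushing the $X_h\!\to\!X_c$ coupling to order $\varepsilon^M$, reduction of $X_h$ to a cubic amplitude equation whose $\sech$ homoclinic supplies $\mathcal U^\pm$, and finally a reversibility-based construction plus an error estimate. The multiple-scale identification of the coefficients and of \textbf{(Coeff)} is also how the paper proceeds (Section~\ref{sec8}).

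The genuine gap is in your last step. Simply taking initial data on the reversibility section $\{\dot U=0\}$ close to $(\mathcal U^\pm(0),0)$ and saying ``the unstable direction in $X_h$ is tamed because on the symmetric section it is the time-reverse of the stable direction'' does not close the estimate. Linearising the $X_h$-dynamics around the $\sech$ homoclinic gives two independent solutions: the odd decaying translation mode and an \emph{even, exponentially growing} mode. An even error $R$ (forced by data on the section) generically couples to the growing even mode, so a straight Gronwall bound produces a factor $e^{\widetilde\varepsilon t}$; on $[0,\varepsilon^{-M+2}]$ this is $e^{\varepsilon^{-M+3}}$, which destroys the $\mathcal O(\varepsilon^{M-1})$ conclusion for every $M\ge 4$. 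Reversibility alone does not select the right initial data---it only tells you the solution is even once you have chosen data on the section.

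What the paper does instead (Section~\ref{sec6}, Theorem~\ref{th61}) is to \emph{select} the solution as lying on the $(2N-1)$-dimensional center-unstable manifold $W^{cu}$ of the origin and then intersect $W^{cu}$ transversally with the $N$-dimensional fixed space of reversibility. This intersection is nontrivial and picks out, within the one-parameter family of even solutions near $\mathcal U^\pm$, precisely the one whose stable $X_h$-component is slaved to the center/unstable components backward in time; by reversibility the unstable component is then slaved forward in time. With that choice the paper runs variation-of-constants estimates separately on the stable projection of $\widetilde W_0:=W_0-\mathcal W_0$ (integrated from $-\xi_0$) and on $W_{h,M}$ (integrated from $t=0$), using an additional normal form that removes the bilinear $W_0\cdot W_{h,M}$ terms so that the feedback into $X_h$ is $\mathcal O(\|W_0\|^2\|W_{h,M}\|+\|W_{h,M}\|^2)$. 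Summing gives a closed inequality $Y(\xi_0)\le C\widetilde\varepsilon Y(\xi_0)+C\widetilde\varepsilon^{-1}Y(\xi_0)^2+C\xi_0 Y(\xi_0)^2+C\widetilde\varepsilon^{M-1}$, which yields $Y\le C\widetilde\varepsilon^{M-1}$ precisely for $\xi_0\lesssim\widetilde\varepsilon^{-M+2}$. Your proposal would be repaired by replacing ``data $\mathcal O(\varepsilon^{M-1})$-close'' with ``data on $W^{cu}\cap\{\dot U=0\}$, which exists by a transversality/dimension count,'' and then arguing as above rather than with a direct Gronwall bound.
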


In case of small damping $ c = \mathcal{O}(\varepsilon) $ each of the two homoclinic orbits $U_{\rm hom}^{\pm}$ of Theorem \ref{theorem-1} breaks up. There exist two non-zero anti-periodic solutions $\mathcal{U}_{\rm per}^{\pm}(t + T) = -\mathcal{U}_{\rm per}^{\pm}(t)$ with a  $ 2 N $-dimensional stable manifold for $ c > 0 $.

Therefore, as can be seen by counting the dimensions, the one-dimensional unstable manifold from the zero equilibrium intersects the $ 2 N $-dimensional stable manifolds of one the two non-zero anti-periodic solutions $ \mathcal{U}_{\rm per}^{\pm} $ transversally. In contrast to the oscillating homoclinic orbits,  the heteroclinic orbits have no oscillating ripples as $ t \to -\infty $ and converge to the orbits of the anti-periodic solutions $\mathcal{U}_{\rm per}^{\pm}$ as $t \to +\infty$, see Figure \ref{fig3}.

\begin{figure}[htbp] %  figure placement: here, top, bottom, or page
   \centering
   \includegraphics[width=4in]{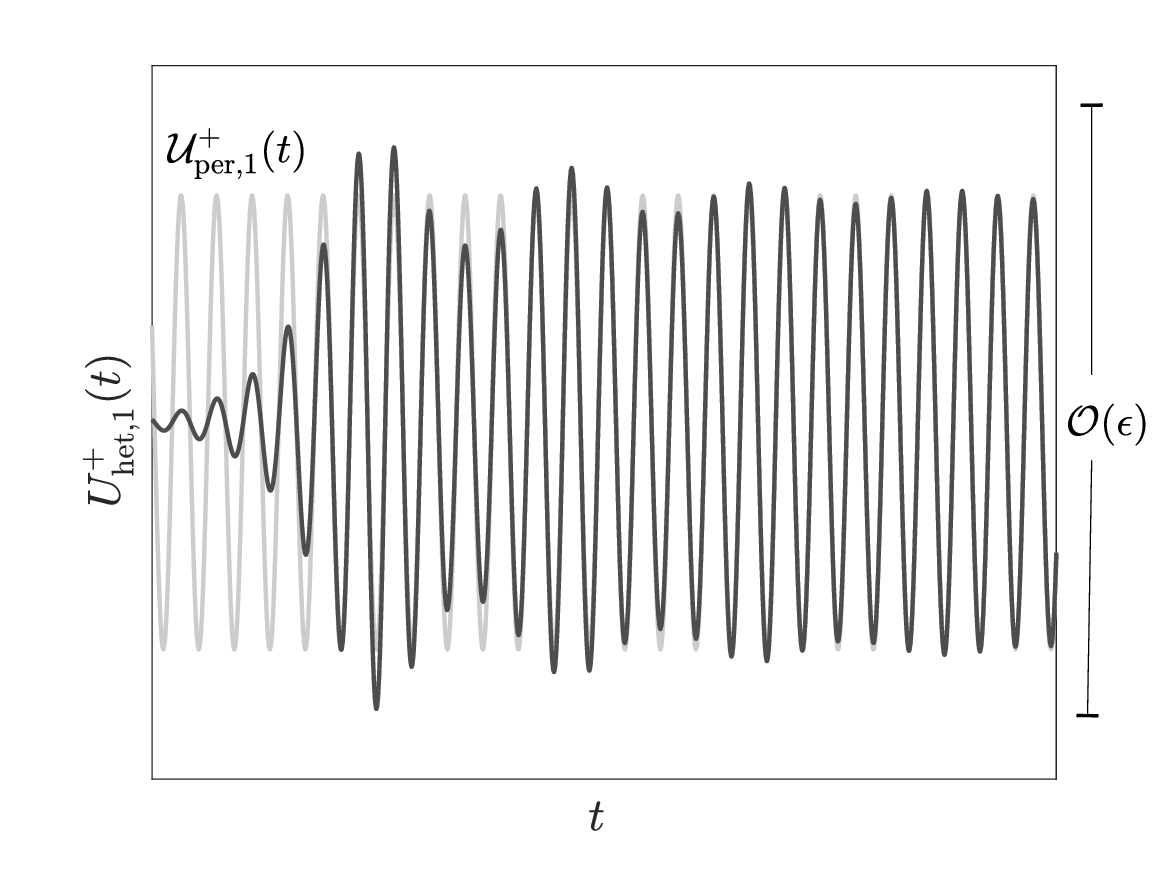} 
   \qquad 
   \caption{
   Example of the transition front $U^{+}_{\rm het} = (u_1,u_2,\dots,u_N)$, where the first component is shown. 
   The first component of the anti-periodic solution $\mathcal{U}^{+}_{\rm per}$ is also shown
   as the light gray line.  }
   \label{fig3}
\end{figure}

Existence of the anti-periodic solutions $\mathcal{U}_{\rm per}^{\pm}$ is guaranteed by the following theorem.

\begin{theorem}
	\label{theorem-2a}
	Assume the spectral condition {\bf (Spec)} and the normal form coefficient condition {\bf (Coeff)}. 
	Fix $ \widetilde{c} \geq  0 $.
	Then there exists an $ \varepsilon_0 > 0 $ and $ C_0 > 0 $ 
	such that for 
	all $ \varepsilon \in (0,\varepsilon_0) $ the system \eqref{FPU-pert} with the time-periodic coefficient \eqref{spec2} and $c = 
	\widetilde{c} \varepsilon > 0$ possesses anti-periodic solutions $\mathcal{U}_{\rm per}^{\pm}$ such that $\mathcal{U}_{\rm per}^{\pm}(t+T) = -\mathcal{U}_{\rm per}(t)$ and 
		$$ 
	\sup_{t \in \mathbb{R} } 
	\| \mathcal{U}^{\pm}_{\rm per}(t)  \| + 
	\| (\mathcal{U}^{\pm}_{\rm per})'(t)  \| \leq  C_0 \varepsilon.
	$$ 
\end{theorem}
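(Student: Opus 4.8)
The plan is to construct the anti-periodic solutions directly as small $2T$-periodic solutions by a Lyapunov--Schmidt reduction onto the critical Floquet mode. Rescale $U=\varepsilon V$, so that \eqref{FPU-pert} with $c=\widetilde c\,\varepsilon$ becomes $\mathcal L_\varepsilon V=\varepsilon\,\mathcal N(V,\varepsilon)$, where $\mathcal L_\varepsilon:=\underline m\,\partial_t^2+\widetilde c\,\varepsilon\,\partial_t+k(t)-K_2\Delta_d$ (with $\Delta_d$ the discrete Dirichlet Laplacian, so that the linear part of the force has been moved to the left-hand side) and $\mathcal N(V,\varepsilon)=-K_3\,\mathcal Q(V)+\varepsilon K_4\,\mathcal C(V)$ collects the rescaled quadratic and cubic terms. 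I would work on $X:=H^2(\R/2T\Z;\R^N)$, on which $\mathcal N$ is a smooth map ($H^2$ of the circle being a Banach algebra) and from which the claimed $C^1$-bound follows by Sobolev embedding. Expanding $V$ in the Dirichlet modes $\sin(q_m n)$ decouples $\mathcal L_\varepsilon$ into scalar Hill-type operators $H_m+\widetilde c\,\varepsilon\,\partial_t$ with $H_m=\underline m\,\partial_t^2+k(t)+2K_2(1-\cos q_m)$. By the spectral assumption {\bf (Spec)}, $H_m$ is boundedly invertible on $X$ uniformly in $\varepsilon$ for every $m\neq m_0$, while for $m=m_0$ the self-adjoint operator $H_{m_0}$ has a simple eigenvalue $\lambda_1(\varepsilon)$ tending to $0$ (at rate $\varepsilon^2$, by the scaling fixed in {\bf (Spec)}) with normalized eigenfunction $g_\varepsilon\to g$, where $g(t+T)=-g(t)$ is the anti-periodic band-edge mode of Theorem~\ref{theorem-1}; all remaining eigenvalues of $H_{m_0}$ are bounded away from $0$.

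With $\Phi_\varepsilon(t,n):=g_\varepsilon(t)\sin(q_{m_0}n)$, let $P$ be the $L^2$-projection onto $\mathrm{span}(\Phi_\varepsilon)$ and $Q:=I-P$, and write $V=B\,\Phi_\varepsilon+W$ with $B\in\R$ and $W\in\mathrm{Range}(Q)\cap X$. Since the only near-zero eigenvalue of $\mathcal L_\varepsilon$ lives on $\Phi_\varepsilon$ and $\widetilde c\,\varepsilon\,\partial_t$ is a vanishing relative perturbation, the restriction of $\mathcal L_\varepsilon$ to $\mathrm{Range}(Q)$ is boundedly invertible uniformly in $\varepsilon$; hence the $Q$-component of $\mathcal L_\varepsilon V=\varepsilon\mathcal N(V)$, namely $Q\mathcal L_\varepsilon W=-\widetilde c\,\varepsilon B\,Q\bigl(g_\varepsilon'(t)\sin(q_{m_0}n)\bigr)+\varepsilon\,Q\mathcal N(B\Phi_\varepsilon+W)$, has by the implicit function theorem a unique solution $W=W(B,\varepsilon)$ with $\|W\|_X=\mathcal O(\varepsilon)$ for $|B|$ bounded. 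Substituting into the $P$-component, using $\langle g_\varepsilon,g_\varepsilon'\rangle=0$ so that the direct $\mathcal O(\widetilde c\,\varepsilon)$ contribution to the linear coefficient cancels, and dividing by $\lambda_1(\varepsilon)$, one obtains a scalar bifurcation equation
$$ \mathcal F(B,\varepsilon,\widetilde c)\;:=\;\bigl(\mu_0-\kappa(\widetilde c)\bigr)\,B\;+\;\Gamma_3\,B^3\;+\;\varepsilon\,\mathcal R(B,\varepsilon,\widetilde c)\;=\;0, $$
where $\mu_0>0$ is the rescaled detuning into the $q$-gap, $\kappa(\widetilde c)\ge 0$ is the damping correction (entering only at order $\widetilde c^{\,2}$, through the auxiliary component $W$), $\Gamma_3$ is the cubic normal-form coefficient — the sum of the direct $K_4$-contribution and the cascaded $K_3^2$-contribution via the slaved second-harmonic part of $W$ — and $\mathcal R$ is smooth and bounded on bounded sets. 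This is precisely the constant-amplitude reduction of the amplitude equation \eqref{A-soliton}: {\bf (Coeff)} is the statement that $\mu_0>0$ and $\Gamma_3<0$ (the same sign conditions under which the homoclinic of Theorem~\ref{theorem-1} exists), and for the admissible range of $\widetilde c$ (equivalently, as long as $q_{m_0}$ stays in the $q$-gap of the damped linearization, so that $\mu_0>\kappa(\widetilde c)$) the equation $\mathcal F(\cdot,0,\widetilde c)=0$ has the two simple nonzero roots $B_\pm=\pm\bigl((\mu_0-\kappa(\widetilde c))/|\Gamma_3|\bigr)^{1/2}$.

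Since $\partial_B\mathcal F(B_\pm,0,\widetilde c)=2(\mu_0-\kappa(\widetilde c))\neq 0$, the implicit function theorem yields roots $B_\pm(\varepsilon,\widetilde c)=B_\pm+\mathcal O(\varepsilon)$ for all sufficiently small $\varepsilon$, and $\mathcal U_{\mathrm{per}}^\pm:=\varepsilon\bigl(B_\pm(\varepsilon,\widetilde c)\,\Phi_\varepsilon+W(B_\pm(\varepsilon,\widetilde c),\varepsilon)\bigr)$ are two $2T$-periodic solutions of \eqref{FPU-pert} with leading order $\pm\varepsilon\,B_\pm\,g(t)\sin(q_{m_0}n)$, hence anti-periodic up to the $\mathcal O(\varepsilon^2)$ second-harmonic part of $W$; the bound $\sup_t\|\mathcal U_{\mathrm{per}}^\pm\|+\|(\mathcal U_{\mathrm{per}}^\pm)'\|\le C_0\varepsilon$ follows from $\|\Phi_\varepsilon\|_X=\mathcal O(1)$, $\|W\|_X=\mathcal O(\varepsilon)$ and $X\hookrightarrow C^1$. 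Exact anti-periodicity is recovered once one passes to the normal-form coordinates of Sections~\ref{sec3}--\ref{sec7}, in which the resonant second-harmonic terms generated by the quadratic coefficient $K_3$ have been removed; I would phrase the statement in those coordinates. The main obstacle I anticipate is the uniform-in-$\varepsilon$ bounded invertibility of $\mathcal L_\varepsilon$ on $\mathrm{Range}(Q)$ together with the identification of the single near-resonant direction $\Phi_\varepsilon$: this rests entirely on the gap structure of the Floquet spectrum packaged in {\bf (Spec)} (Section~\ref{sec2}), the remaining work — tracking which nonlinear interactions survive the projection onto $\Phi_\varepsilon$ at order $\lambda_1(\varepsilon)$ so as to read off $\Gamma_3$ and $\kappa(\widetilde c)$, and verifying these estimates are uniform for bounded $B$ — being careful but routine.
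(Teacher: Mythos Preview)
Your Lyapunov--Schmidt reduction on $H^2(\R/2T\Z;\R^N)$ is a valid and genuinely different route from the paper's. The paper does not set up a functional-analytic bifurcation problem at all: having already built the normal-form machinery in Sections~\ref{sec3}--\ref{sec4}, it simply observes that in the damped reduced system \eqref{ampl1_damped} the two nontrivial equilibria $(\widetilde A,\widetilde B)_\pm=(\pm|f_{B,3,3,-1,2}|^{-1/2},0)$ are hyperbolic, and invokes persistence of hyperbolic fixed points of the time-$2T$ Poincar\'e map to conclude. That argument is one sentence precisely because all the work was front-loaded into the normal-form construction; your approach trades that preparation for a self-contained implicit-function-theorem argument, and has the merit of making the effective cubic coefficient (your $\Gamma_3$) and the $\widetilde c^{\,2}$ damping shift (your $\kappa(\widetilde c)$) appear explicitly rather than being buried in $\mathcal O(\widetilde\varepsilon)$ remainders.

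One point to sharpen: your proposed fix for exact anti-periodicity --- passing to normal-form coordinates --- does not actually deliver it. When $K_3\neq 0$ the map $u\mapsto -u$ composed with the time-$T$ shift is \emph{not} a symmetry of \eqref{FPU-pert} (the quadratic force term is even), so the space of $T$-anti-periodic functions is not invariant under the flow and one cannot simply restrict the Lyapunov--Schmidt problem to it. Normal-form transformations can relocate the second-harmonic corrections but cannot manufacture a symmetry the original system lacks; the solutions you (and the paper, via the time-$2T$ map) construct are genuinely $2T$-periodic with anti-periodic leading part $\varepsilon B_\pm g(t)\sin(q_{m_0}n)$. If you want to match the theorem's literal statement you should either note that exact anti-periodicity holds only for $K_3=0$ (where the odd symmetry is present), or be explicit that ``anti-periodic'' is to be read as $2T$-periodic with anti-periodic principal part --- which is also what the paper's argument actually yields.
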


The following theorem  presents the main result on the existence of the heteroclinic orbits (transition fronts) between the trivial solution $0$ and the anti-periodic solutions $\mathcal{U}_{\rm per}^{\pm}$.

\begin{theorem}
	\label{theorem-2}
	Assume the spectral condition {\bf (Spec)} and the normal form coefficient condition {\bf (Coeff)}. 
	Fix $ \widetilde{c} > 0 $.
	Then there exists an $ \varepsilon_0 > 0 $ such that for 
	all $ \varepsilon \in (0,\varepsilon_0) $ the system \eqref{FPU-pert} with the time-periodic coefficient \eqref{spec2} and  $c = 
	\widetilde{c} \varepsilon > 0$ possesses 
	two heteroclinic solutions  $ U_{\rm het}^{\pm} \in C^1(\R,\R^{N}) $ such that 
	$$ 
	\lim_{t \to - \infty} U^{\pm}_{\rm het}(t) = 0, \qquad 
	 	\lim_{t \to - \infty} (U^{\pm}_{\rm het})'(t) = 0
	$$ 
	and 
	$$ 
	\lim_{t \to  +\infty} \inf_{t_0 \in [0,T]} \| U^{\pm}_{\rm het}(t) - \mathcal{U}_{\rm per}^{\pm}(t+t_0) \| + \| (U^{\pm}_{\rm het})'(t) - (\mathcal{U}_{\rm per}^{\pm})'(t+t_0) \| = 0,
	$$
where $\mathcal{U}_{\rm per}^{\pm}$ are the anti-periodic   solutions of  \eqref{FPU-pert} from Theorem 
\ref{theorem-2a}.
\end{theorem}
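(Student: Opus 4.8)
The plan is to reduce Theorem~\ref{theorem-2} to a planar phase-plane analysis of the damped amplitude equation and then to lift the resulting heteroclinic orbit to the full lattice \eqref{FPU-pert}, using the normal form reduction of Secs.~\ref{sec3}--\ref{sec7}, the asymptotic stability of $\mathcal{U}_{\rm per}^{\pm}$ implicit in Theorem~\ref{theorem-2a}, and the shadowing estimate established in the proof of Theorem~\ref{theorem-1}. For $c = \widetilde{c}\,\varepsilon > 0$ the Floquet analysis of \textbf{(Spec)} gives, at the origin, one multiplier of modulus $1 + \mathcal{O}(\varepsilon)$ and $2N-1$ multipliers of modulus $1 - \mathcal{O}(\varepsilon)$, the formerly neutral directions now being damped at rate $\mathcal{O}(c) = \mathcal{O}(\varepsilon)$ by the term $c\dot u_n$. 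Hence $0$ has a one-dimensional unstable manifold $W^{u}(0)$, consisting of two branches, and a $(2N-1)$-dimensional stable manifold; since the unstable multiplier is real and simple, every orbit on $W^{u}(0)$ converges to $0$ exponentially and without oscillating ripple as $t \to -\infty$, which already yields the first two limits in the statement and the asserted absence of ripples at $-\infty$.

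In the amplitude-scaled variables $U = \varepsilon V$, $\tau = \varepsilon t$, the leading-order equation of Sec.~\ref{sec8} is, up to rescaling, a damped Duffing-type (focusing stationary NLS with friction) oscillator $A'' = \beta^{2} A - \kappa A^{3} - \widetilde{c}\,d\,A'$ with $\beta,\kappa,d > 0$, the friction term coming from $c\dot u_n$; for $\widetilde{c} = 0$ it possesses the homoclinic orbit $A(\tau) = \alpha\,\sech(\beta\tau)$ of Theorem~\ref{theorem-1}. Its equilibria are the saddle $0$ and the two sinks $\pm A_{*} = \pm\alpha/\sqrt{2}$, and the energy $E(A,A') = \tfrac12(A')^{2} - \tfrac{\beta^{2}}{2}A^{2} + \tfrac{\kappa}{4}A^{4}$ obeys $\dot E = -\widetilde{c}\,d\,(A')^{2} \le 0$, with equality only on $\{A' = 0\}$. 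The branch of $W^{u}(0)$ with $A > 0$ stays, for $\widetilde{c}$ small, in the bounded positively invariant region enclosed by the undamped homoclinic loop ($\dot E \le 0$ and $E$ increases outward across the loop), so by the monotonicity of $E$ and a standard $\omega$-limit argument it converges to the unique nontrivial equilibrium inside, $+A_{*}$; symmetrically the branch with $A < 0$ converges to $-A_{*}$ (the leading-order equation is odd in $A$ even though $F$ is not, the quadratic term of $F$ being absorbed into the effective coefficient $\kappa$ by the normal form, nondegenerate by \textbf{(Coeff)}). Thus the amplitude equation has exactly two heteroclinic orbits $A^{\pm}(\tau)$ joining $0$ to $\pm A_{*}$, and for each $\delta > 0$ there is a \emph{finite} slow time $\tau_{\delta}$ with $|A^{\pm}(\tau) \mp A_{*}| < \delta$ for all $\tau \ge \tau_{\delta}$.

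To transfer these, note first that the linearization of \eqref{FPU-pert} about $\mathcal{U}_{\rm per}^{\pm}$ reduces, at leading order in the scaled variables, to the linearization of the damped amplitude equation about $\pm A_{*}$ (a hyperbolic sink) together with $2N-2$ further modes damped at rate $\mathcal{O}(\varepsilon)$; hence $\mathcal{U}_{\rm per}^{\pm}$ is asymptotically stable, its basin being the $2N$-dimensional stable manifold mentioned before Theorem~\ref{theorem-2a}, and in the scaled variables $V$ this basin contains a ball of some $\varepsilon$-independent radius $\rho_{0} > 0$. Now parametrize a branch of $W^{u}(0)$ by its unstable coordinate and fix the time origin where the scaled amplitude first reaches an $\mathcal{O}(1)$ reference value; by the normal form of Secs.~\ref{sec3}--\ref{sec7} and the Gronwall-type estimate from the proof of Theorem~\ref{theorem-1}, the scaled orbit along $W^{u}(0)$ stays $\mathcal{O}(\varepsilon^{M-2})$-close to the multiple-scale profile built from $A^{\pm}$ on $t \in [0,\varepsilon^{-1}\tau_{\delta}]$, for $M$ chosen (internally) large enough that this interval lies inside the validity range. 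Choosing $\delta < \rho_{0}/2$, at time $t_{\delta} = \varepsilon^{-1}\tau_{\delta}$ the scaled orbit is then within $\delta + \mathcal{O}(\varepsilon^{M-2}) < \rho_{0}$ of the orbit of $\mathcal{U}_{\rm per}^{\pm}$, hence in its basin, and the asymptotic stability from Theorem~\ref{theorem-2a} gives convergence as $t \to +\infty$, uniformly up to the time shift $t_{0}$; this produces $U^{\pm}_{\rm het} \in C^{1}(\R,\R^{N})$. Since $\dim W^{u}(0) + \dim W^{s}(\mathcal{U}_{\rm per}^{\pm}) = 1 + 2N > 2N$ and $W^{s}(\mathcal{U}_{\rm per}^{\pm})$ is open, the intersection is automatically transversal.

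The main obstacle is exactly this last step: the amplitude heteroclinic reaches $\pm A_{*}$ only as $\tau \to +\infty$, while the rigorous shadowing estimate inherited from Theorem~\ref{theorem-1} is valid only on a time interval of length algebraic in $\varepsilon^{-1}$, so one cannot shadow for all forward time. The remedy — shadow only until the orbit enters a fixed, $\varepsilon$-independent neighbourhood of $\mathcal{U}_{\rm per}^{\pm}$ inside its basin, then invoke local attractivity — is what reconciles the $\mathcal{O}(\varepsilon^{0})$ basin radius with the $\mathcal{O}(\varepsilon^{M-2})$ approximation error, and it is what forces $\widetilde{c} > 0$ (so that $\pm A_{*}$ are genuine sinks with a robust basin, not the centres of the conservative case) and the freedom to take $M$ large, just as in Theorem~\ref{theorem-1}.
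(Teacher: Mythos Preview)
Your argument is correct and aligns with the paper's: both derive the damped planar amplitude system, locate the heteroclinic there, and lift it by observing that $W^{s}(\mathcal{U}_{\rm per}^{\pm})$ is $2N$-dimensional (hence open) so that the one-dimensional $W^{u}(0)$ intersects it transversally. The paper's Section~\ref{sec7} compresses the lifting step into a single sentence of dimension counting; your shadowing-plus-basin argument is the constructive content behind that sentence, making explicit why the relevant branch of $W^{u}(0)$ actually lands in that open set.

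One small caveat: the Gronwall-type estimate you borrow from the proof of Theorem~\ref{theorem-1} was set up using reversibility (the $W_{h}$ integral is started from the fixed space of reversibility at $t=0$), which is broken for $c>0$. This is not a gap, though, because the damped estimate is in fact \emph{easier}: the formerly neutral $W_{h}$ modes now contract at rate $\mathcal{O}(\varepsilon)$, so variation of constants from $t=-\infty$ along $W^{u}(0)$ yields $\|W_{h,M}\|=\mathcal{O}(\varepsilon^{M-1})$ directly, and on the finite slow-time window $[0,\varepsilon^{-1}\tau_{\delta}]$ that you actually need this is more than sufficient.
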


\begin{remark}
	Because of the quadratic nonlinearity in the FPUT system \eqref{FPU-pert}, the homoclinic,  anti-periodic, and heteroclinic orbits in Theorems \ref{theorem-1}-\ref{theorem-2} are not related by the sign reflection even though the leading orders obtained from the cubic normal form are related by the sign reflection, see Eq.  \eqref{amplitude-nonlinear}.  However, if $K_3 = 0$, these orbits are related by the sign reflection up to any orders due to the symmetry of the FPUT system \eqref{FPU-pert}.
\end{remark}

The article is organized as follows. Section \ref{sec2} presents the Floquet and spectral analysis of the linearized FPUT system and introduce assumption ({\bf Spec}). Preparations for the normal form transformations are described in Section \ref{sec3}. Normal form transformations are described in Section \ref{sec4}. The proof of Theorem \ref{theorem-1} is given in Section \ref{sec6}, where assumption ({\bf Coeff}) is introduced. Section \ref{sec7} contains the proof of 
Theorems \ref{theorem-2a} and \ref{theorem-2}. 
A multiple-scale analysis is carried out
in Section \ref{sec8}, which provides tractable approximations for both breathers
and fronts. It also allows verification of ({\bf Coeff}) through direct computation. Numerical illustrations of the main results are described in Section \ref{secdisc}. Section~\ref{theend} concludes the paper with a summary and brief discussions. 

\medskip 

{\bf Funding:} This work was partially supported by the National Science Foundation under grant number DMS-2107945 (C. Chong). D. E. Pelinovsky is partially supported by AvHumboldt Foundation as Humboldt Research Award. G. Schneider is partially supported by the Deutsche Forschungsgemeinschaft 
DFG through the cluster of excellence 'SimTech' under EXC 2075-390740016.

\section{The linearized system}
\label{sec2}

The linearized FPUT system at the trivial (zero) equilibrium is given by 
\begin{equation} 
\underline{m} \ddot{u}_n + c \dot{u}_n + k(t) u_n  = K_2(  u_{n+1}- 2 u_n+ u_{n-1})
\label{FPU-lin}
\end{equation}
for $ n = 1,2, \ldots , N $ with Dirichlet boundary conditions $ u_0(t) = u_{N+1}(t) = 0 $. The linearized system \eqref{FPU-lin} is solved by a linear superposition of the discrete Fourier sine modes:
$$
u_n(t) = \sum_{m=1}^{N} \hat{u}_m(t) \sin(q_m n), \quad q_m := \frac{\pi m}{N+1}, \quad 1 \leq m \leq N.
$$
The $m$-th Fourier mode has the amplitude $\hat{u}_m(t)$ for which the linear FPUT equation \eqref{FPU-lin} transforms to the linear Schr\"{o}dinger 
equation
\begin{equation}
\label{Schr}
\mathcal{L} \hat{u}_m = K_2 \omega^2(q_m) \hat{u}_m, 
\end{equation}
where
\begin{align*}
\mathcal{L} &:= -\underline{m} \partial_t^2 - c \partial_t - k(t), \quad k(t+T) = k(t)
\end{align*}
and
\begin{align*}
\omega^2(q) := 4 \sin^2\left(\frac{q}{2}\right), \quad q \in [0,\pi].
\end{align*}
We will review solutions of the spectral problem \eqref{Schr} by using the Floquet theory and the spectral theory of the Schr\"{o}dinger operators. 

\subsection{Floquet theory} \label{sec:Floquet}

To obtain the monodromy matrix associated to
\eqref{Schr} for general
time-periodic coefficients $k(t)$, one may resort to numerical
computation or perturbation analysis \cite{kovacic_mathieus_2018}. However, 
in the case of piecewise constant $k(t)$ as in \eqref{spec2},
this can be done explicitly \cite{Centurion,Rapti_2004} (see also \cite{chong2}). For the convenience of the readers, we summarize the relevant results.

Let $\lambda_m := K_2 \omega^2(q_m)$ for each $1 \leq m \leq N$, 
and define $s_a, s_b > 0$  by 
\begin{equation}
\label{notation-s}
s_{a,b} := \sqrt{\frac{\lambda_m + k_{a,b}}{\underline{m}} - \frac{c^2}{4 \underline{m}^2}}. 
\end{equation}

Note that $s_a, s_b$ also depend on $m = 1,2,\dots,N$ but the index $m$ is dropped from the notation for simplicity. We obtain the exact solution of \eqref{Schr}:
\begin{equation} \label{eq:bloch}
\hat{u}_m(t) = \left\{ \begin{array}{cl}  e^{-\frac{ct}{2 \underline{m}}} \left[ A_0 \cos(s_a t) + B_0 \sin(s_a t) \right], & t \in [0,\tau_d T), \\
e^{-\frac{ct}{2 \underline{m}}} \left[ C_0 \cos(s_b (t-\tau_d T)) + D_0 \sin(s_b (t-\tau_d T)) \right], & t \in [\tau_d T,T), \end{array} \right.
\end{equation}
with some constants $A_0, B_0, C_0, D_0$. By $C^1$-continuity across $t = \tau_d T$, 
we obtain 
\begin{equation} \label{eq:M1}
\left[ \begin{array}{c} C_0 \\ D_0 \end{array} \right] = \left[ \begin{array}{cc} 
\cos(s_a \tau_d T) & \sin(s_a \tau_d T) \\ 
-\frac{s_a}{s_b} \sin(s_a \tau_d T) & 
\frac{s_a}{s_b} \cos(s_a \tau_d T)
\end{array} \right] 
\left[ \begin{array}{c} A_0 \\ B_0 \end{array} \right].
\end{equation}
The monodromy matrix $J$ is obtained as a mapping 
\begin{equation}
\label{mapping}
\left\{ \begin{array}{l} \hat{u}_m(0) = A_0, \\ 
\hat{u}_m'(0) = s_a B_0 - \frac{c}{2\underline{m}} A_0 \end{array} \right. 
\qquad \Rightarrow \qquad
\left\{ \begin{array}{l} \hat{u}_m(T) = e^{-\frac{cT}{2 \underline{m}}} A_1, \\ 
\hat{u}_m'(T) = e^{-\frac{cT}{2 \underline{m}}} \left[ s_a B_1  - \frac{c}{2\underline{m}} A_1 \right], \end{array} \right. 
\end{equation}
with 
\begin{align}
\left[ \begin{array}{c} A_1 \\ B_1 \end{array} \right] &= \left[ \begin{array}{cc} 
\cos(s_b (1-\tau_d) T) & \sin(s_b (1-\tau_d) T) \\ 
-\frac{s_b}{s_a} \sin(s_b (1-\tau_d) T) & 
\frac{s_b}{s_a} \cos(s_b (1-\tau_d) T)
\end{array} \right]  
\left[ \begin{array}{c} C_0 \\ D_0 \end{array} \right] =: J \left[ \begin{array}{c} A_0 \\ B_0 \end{array} \right].
\label{eq:monodromy}
\end{align}
Since $\det(J) = 1$ and 
\begin{align}
\label{trace-J}
{\rm trace}(J) &= 2 \cos(s_a \tau_d T) \cos(s_b (1-\tau_d) T) - 
\frac{s_a^2 + s_b^2}{s_a s_b} \sin(s_a \tau_d T) \sin(s_b (1-\tau_d) T),
\end{align}
the eigenvalues $\rho_1$ and $\rho_2$ of $J$ satisfies 
$$
\rho_1 \rho_2 = 1, \quad \rho_1 + \rho_2 = {\rm trace}(J),
$$
with only three possibilities:
\begin{itemize}
	\item ${\rm trace}(J) > 2$ implies $0 < \rho_1 < 1 < \rho_2 = \rho_1^{-1}$, 
	\item $-2 \leq {\rm trace}(J) \leq 2$ implies $\rho_1 = \overline{\rho}_2 \in \mathbb{C}$ with $|\rho_{1,2}| = 1$, 
	\item ${\rm trace}(J) < -2$ implies $\rho_2 = \rho_1^{-1} < -1 < \rho_1 < 0$.
\end{itemize}
The Floquet exponents of the mapping \eqref{mapping} are given by 
$ \gamma_{1,2} = \upsilon_{1,2} - \frac{c}{2\underline{m}} $, where 
\begin{itemize}
	\item ${\rm trace}(J) > 2$ implies $\upsilon_{1,2} = \pm \log(\rho_2)/T$, 
	\item $-2 \leq {\rm trace}(J) \leq 2$ implies $\upsilon_{1,2} = \pm i \arg(\rho_1)/T$,
	\item ${\rm trace}(J) < -2$ implies $\upsilon_{1,2} = i \pi/T \pm \log(|\rho_2|)/T$.
\end{itemize}
If $c = 0$, the trivial solution $ U = 0 $ is spectrally stable if all Floquet exponents are purely imaginary. This corresponds to 
the case with $-2 \leq {\rm trace}(J) \leq 2$. Figure \ref{fig:FMs}(a) shows
the Floquet multipliers $\rho$ in the critical case where ${\rm trace}(J) = -2$
with $m=3$. The corresponding Floquet exponents $\gamma = \upsilon$ are shown in Fig. \ref{fig:FMs}(b).

\begin{figure} %  figure placement: here, top, bottom, or page
   \centerline{
   \begin{tabular}{@{}p{0.33\linewidth}@{}p{0.33\linewidth}@{}p{0.33\linewidth}@{} }
     \rlap{\hspace*{5pt}\raisebox{\dimexpr\ht1-.1\baselineskip}{\bf (a)}}
 \includegraphics[height=4cm]{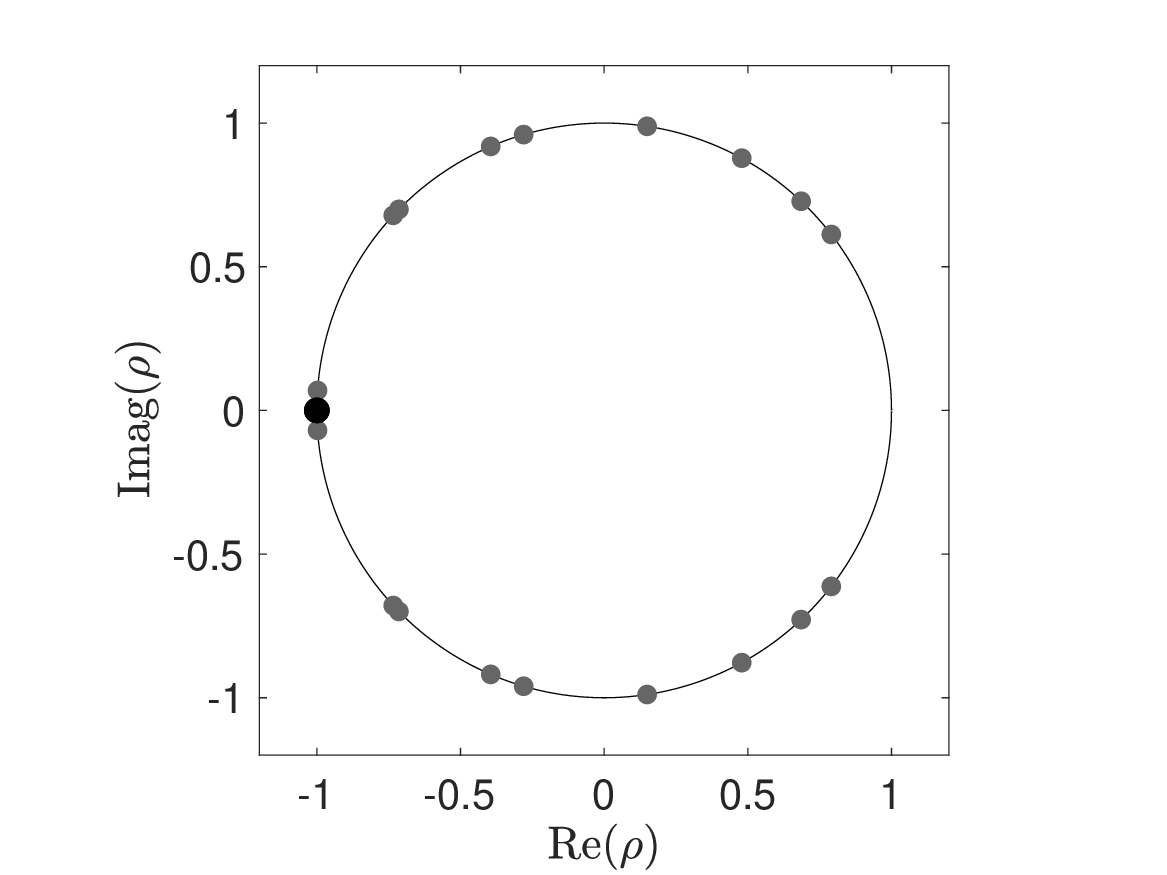}  &
  \rlap{\hspace*{5pt}\raisebox{\dimexpr\ht1-.1\baselineskip}{\bf (b)}}
 \includegraphics[height=4cm]{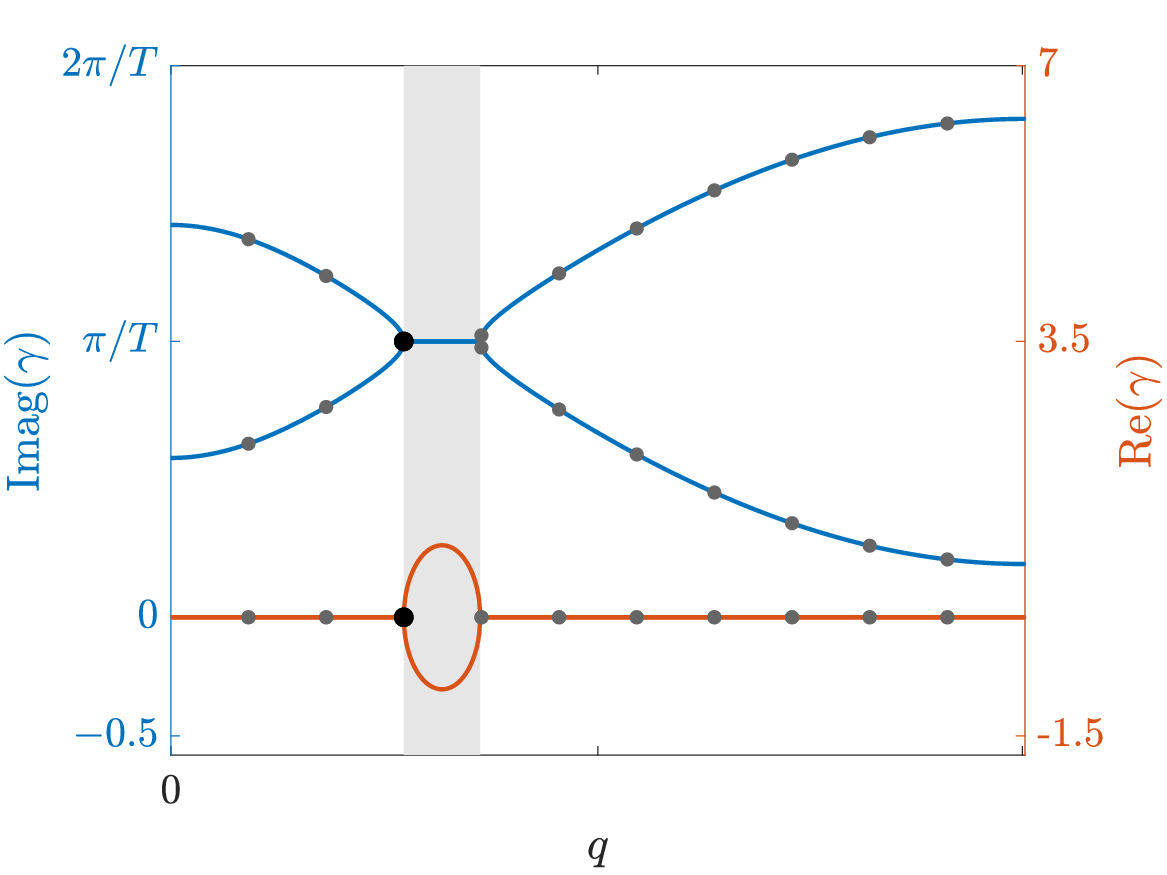} &
   \rlap{\hspace*{5pt}\raisebox{\dimexpr\ht1-.1\baselineskip}{\bf (c)}}
 \includegraphics[height=4cm]{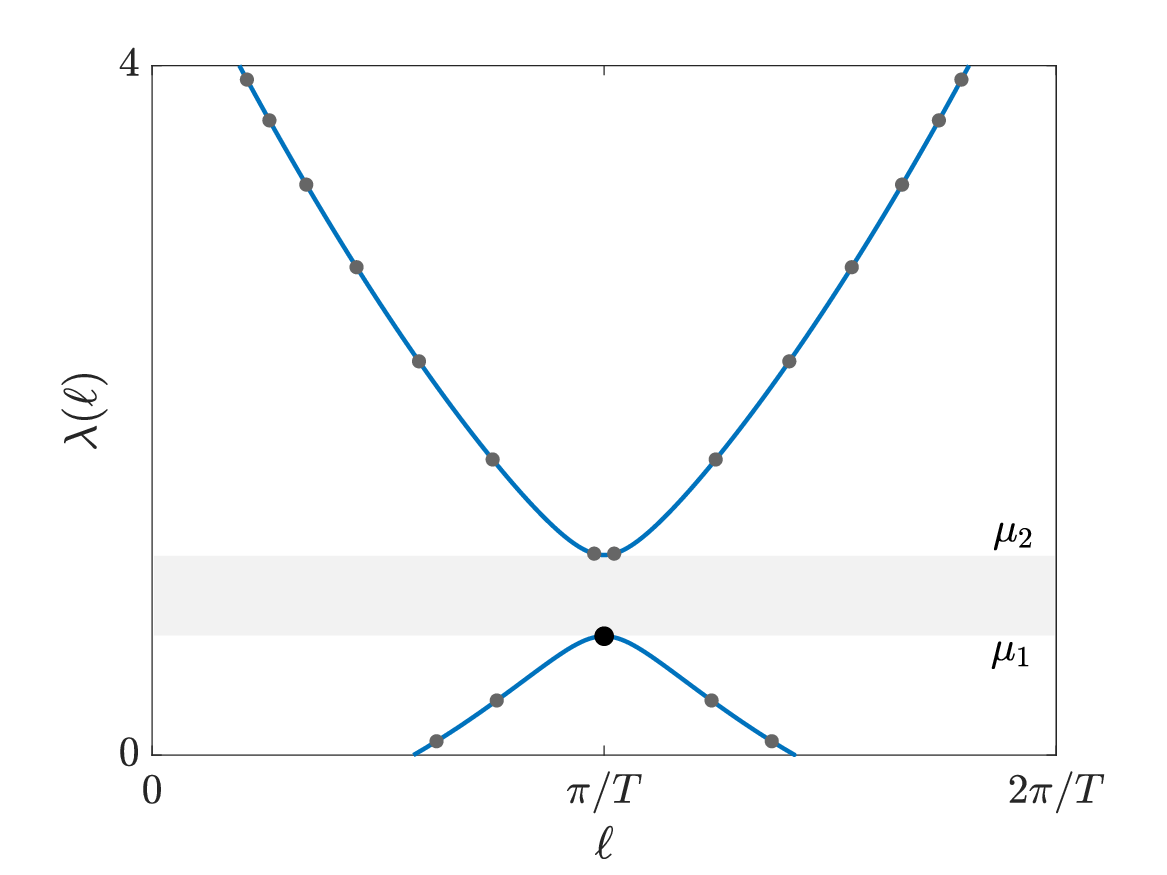} 
  \end{tabular}
  }
   \caption{Bifurcation scenario for the parameter set $\underline{m}=K_2=1$, $c=0$, $T=1/0.37$ and $\tau_d=0.5$. With the critical modulation amplitude parameters $k_a^0 = 0.5$ and $k_b^0 = 0.79$, the critical Fourier mode
   is $m_0 = 3$.
 \textbf{(a)} Plot of the Floquet multipliers, $\rho$, in the complex plane (the unit circle is shown for visual aid). The
    $m_0=3$ multiplier lies exactly at -1 on the unit circle (larger black marker).
 \textbf{(b)} The real (red) and imaginary (blue) part of the Floquet exponent $\gamma$
   as a function of Fourier wavenumber $q\in[0,\pi]$ in the infinite lattice. 
   The gray shaded region corresponds to the wavenumber bandgap.  The gray markers
   correspond to the Floquet exponents with a lattice size of $N=10$.  The
    $m_0=3$ exponent lies exactly at the left edge of the wavenumber bandgap (larger black marker).
   \textbf{(c)} Spectral bands (blue curves) of the Schr\"odinger operator as a function of $\ell = \mathrm{imag}(\gamma)$, the imaginary part of the Floquet exponent. The gray dots show the corresponding values
   in the finite lattice with $N=10$.  The eigenvalues $\mu_1$ and $\mu_2$ define
   the edges of the band gap, shown as the gray shaded region. The
   $m_0=3$ mode lies exactly at the top of the first spectral band (larger black marker). 
 }
   \label{fig:FMs}
\end{figure}

\subsection{Spectral theory}

Let us review the spectral properties of the Schr\"{o}dinger operator 
$$
\mathcal{L} : H^2(\R) \subset L^2(\R) \to L^2(\R)
$$
with a $T$-periodic coefficient $k(t) = k(t+T)$ in the particular case of $c = 0$. The spectrum of $\mathcal{L}$ is purely continuous and consists of bands disjoint from each other by some gaps:
\begin{equation}
\label{spectrum-Schr}
\sigma(\mathcal{L}) = [\nu_0,\mu_1] \cup [\mu_2,\nu_1] \cup [\nu_2,\mu_3] \cup [\mu_4,\nu_3] \cup \cdots
\end{equation}
where $\{ \nu_j \}_{j=0}^{\infty}$ are eigenvalues of $\mathcal{L} f = \nu f$ with periodic boundary conditions $f(t + T) = f(t)$ and 
$\{ \mu_j \}_{j = 1}^{\infty}$ are eigenvalues of $\mathcal{L}g = \mu g$ with anti-periodic boundary conditions $g(t+T) = -g(t)$. Eigenvalues $\{ \nu_j \}_{j=0}^{\infty}$ correspond to ${\rm trace}(J) = 2$ of the Floquet theory, 
whereas eigenvalues $\{ \mu_j \}_{j=0}^{\infty}$ correspond to ${\rm trace}(J) = -2$ and the spectral bands correspond to $-2 \leq {\rm trace}(J) \leq 2$. 

\begin{remark}
Note that $\lambda$ is parameterized by the wavenumber $q$ 
in $\lambda = K_2 \omega^2(q)$, which in turn determines the Floquet exponent $\gamma$. Within the spectral bands, the corresponding Floquet exponents
are purely imaginary and are of the form $\gamma = i \ell$. An example plot showing the dependence of $\lambda$ on $\ell$
is shown in Fig.~\ref{fig:FMs}(c). For this example there is one band gap, which is the shaded region 
in the figure. The band gap edges are given by $\mu_1$ (the bottom of the gap) and $\mu_2$ (the top of the gap).
This representation of the spectrum will be useful later when we derive an amplitude
equation for the description of the envelope of the breather in Sec.~\ref{sec8}. In particular, the concavity of $\lambda(\ell)$
will play a key role.
\end{remark}

We define {\em the bifurcation} for a particular stiffness 
$k(t) \equiv k_0(t)$, for which $\mathcal{L}\equiv \mathcal{L}_0$, 
if there exists an integer $1 \leq m_0 \leq N$ such that $\lambda_{m_0} = K_2 \omega^2(q_{m_0})$ coincides with the end point 
of $\sigma(\mathcal{L}_0)$ and $\lambda_m = K_2 \omega^2(q_m)$ for $m \neq m_0$ are located inside $\sigma(\mathcal{L}_0)$. For the bifurcation shown in  Figure \ref{fig:FMs}, $\lambda_{m_0}$ coincides with $\mu_1$, for which the two bifurcating Floquet exponents are given by $\gamma_0 = i \frac{\pi}{T}$ with $\ell_0 := \frac{\pi}{T}$. This bifurcation corresponds to $\lambda_1''(\ell_0) < 0$ and 
\begin{equation}
\label{bifurcating-mode}
[0,4K] \subset [\nu_0,\nu_1], \quad K_2 \omega^2(q_{m_0}) = \mu_1.
\end{equation}
Equivalently, we can obtain a bifurcation 
when $\lambda_{m_0}$ coincides with $\mu_2$ 
with $\lambda_2''(\ell_0) > 0$ and 
\begin{equation}
\label{bifurcating-mode-alternative}
[0,4K] \subset [\nu_0,\nu_1], \quad K_2 \omega^2(q_{m_0}) = \mu_2.
\end{equation}
The two bifurcating Floquet exponents $\gamma_0$ correspond to the Floquet multiplier $\rho$ at $ -1 $. All other Floquet exponents $\gamma$ are assumed to be on the imaginary axis bounded away from $ 0 $ and $ \gamma_0 $. The corresponding Floquet multipliers $\rho$ are on the unit circle bounded away from $ +1 $ and $ -1 $. 

The bifurcation in terms of Floquet multipliers is shown
in Fig.~\ref{fig:FMs}(a), whereas Fig.~\ref{fig:FMs}(b) shows the bifurcation
in terms of Floquet exponents. The spectral bands of the Schr\"odinger operator
$\mathcal{L}_0$ are shown in Fig.~\ref{fig:FMs}(c). Notice that the discrete mode $m_0 = 3$ lies exactly
at $(\ell,\lambda) = (\ell_0, \mu_1)$, where $\ell_0 = \frac{\pi}{T}$.

\subsection{Spectral assumption and defining the small parameter $\epsilon$}

With the linear theory in hand, we can now specify the spectral assumption as follows. 

{\bf (Spec)} {\it There exists a periodic coefficient $k_0(t+T) = k_0(t)$, for which all Floquet exponents lie on the imaginary  axis. With the exception of two exponents at $\frac{i \pi}{T}$, they are assumed to be simple and non-zero. For small $ \varepsilon > 0 $ we assume that the two Floquet exponents at $\frac{i \pi}{T}$ split symmetrically from the imaginary axis along the real axis to the order of $ \mathcal{O}(\varepsilon) $. }

We can define $\varepsilon$ more explicitly by using the decomposition
\begin{equation}
\label{k-varepsilon}
 k(t) = k_0(t) + \delta \varepsilon^2, 
\end{equation}
where $\delta$ is a proper sign factor. For $c = 0$, the small parameter $\epsilon$ is related to the distance of the critical Floquet exponents 
$\gamma = \upsilon$ from the imaginary axis in the following way. The real part of the Floquet exponent $\gamma$ which depends on $\varepsilon$ is given by 
$$
\mathrm{Re}(\gamma) = \frac{1}{T} \mathrm{cosh}^{-1}\left( -\frac{1}{2} \mathrm{trace}(J) \right).
$$ 
Since we know from ({\bf Spec}) that $\mathrm{trace}(J) = -2$ with $\epsilon=0$ (for $k(t) = k_0(t)$), a series expansion of the real part
of the Floquet exponent about $\epsilon=0$ yields $\mathrm{Re}(\gamma) = \mathcal{O}(\epsilon)$, where we used the fact that $\mathrm{cosh}^{-1}(1+w) \approx \sqrt{2w}$ and  $\mathrm{trace}(J) = -2 + \mathcal{O}(\varepsilon^2)$. In Sec.~\ref{sec:NLS_lin} we will show that
\begin{equation}
\label{gamma-varepsilon}
\mathrm{Re}(\gamma) = \epsilon  \frac{\sqrt{2}}{\sqrt{|\lambda''(\ell_0)|}} + \mathcal{O}(\epsilon^2),
\end{equation}
where $\lambda(\ell)$ is the corresponding band of $\mathcal{L}_0$ at $\ell_0 = \frac{\pi}{T}$ and the sign of $\delta$ is selected to be the opposite of the sign of $\lambda''(\ell_0)$. 

\begin{remark}
One can relate the small parameter $\epsilon$ to the distance of the bifurcating wavenumber $q_{m_0}$ to the band edge in the following way. For a fixed $\epsilon$, suppose the wavenumber bandgap is $[q_\ell ,q_r]$, where the left edge $q_\ell$ and right edge $q_r$ depend on $\epsilon$ and can be found by solving trace($J$)$=-2$ with $\lambda = K_2 \omega^2(q)$.  Suppose that the critical wavenumber $q_{m_0}$ coincides with the left band edge
at the bifurcation point, (i.e., $q_{m_0} = q_\ell$ when $\epsilon=0$). Then, for $\epsilon>0$, the distance 
to the band edge is $\Delta q =q_\ell - q_{m_0}$. By inspection of \eqref{notation-s} and \eqref{trace-J}, if one knows the critical values of $k_a^0$ and $k_b^0$ then
$\Delta q$ can be determined by solving
$\lambda(\ell(q_{m_0})) =\lambda(\ell(q_{m_0} + \Delta q)) + \delta \epsilon^2$
which yields 
\begin{equation} \label{eq:Dq}
     \Delta q = q_{m_0} - 2 \sin^{-1}\left(\sqrt{ \sin^2(q_{m_0}/2) - \frac{\delta \epsilon^2}{4K_2}    }   \right) = \frac{\delta \epsilon^2}{\partial_q\lambda(\ell(q_{m_0}))} + \mathcal{O}(\varepsilon^4).
\end{equation}
Thus $\Delta q = \mathcal{O}(\epsilon^2)$.
\end{remark}

\section{Normal form transformations}
\label{sec3}

The FPUT system \eqref{FPU-pert} consists of $ N $ oscillators with Dirichlet  boundary conditions. By augmenting the vector $U(t) \in \mathbb{R}^N$ with $U'(t) \in \mathbb{R}^N$ as the vector $V(t) \in \mathbb{R}^{2N}$, we rewrite the $ 2 N $-dimensional time-periodic system in the abstract form:
\begin{equation} \label{Msyst}
\dot{V}(t) = Q(t) V(t) + N(V(t)),
\end{equation}
with the time-periodic coefficient matrix $ Q(t) = Q(t+T) \in \R^{2 N \times 2 N}$ being piecewise continuous on $[0,T]$ for a period $ T > 0 $ 
and the nonlinear function $N(V) : \mathbb{R}^{2N} \to \mathbb{R}^{2N}$ being smooth at $V = 0$ with $N(0) = 0$ and $D_V N(0) = 0$. The solutions of the linear system
\begin{equation}
\label{linear-system}
\dot{V}(t) = Q(t) V(t) 
\end{equation}
are, according to Floquet's theorem, of the form
\begin{equation}
\label{Floquet}
V(t) = P(t) e^{\Lambda t} V(0)
\end{equation}
with a $ T $-time-periodic matrix function $ P(t) = P(t+T) \in \R^{2 N \times 2 N}$ and a  time-independent matrix $ \Lambda \in \R^{2 N \times 2 N} $, eigenvalues of which coincide with Floquet exponents in Section \ref{sec2}.

\begin{remark}
Eigenvalues $\gamma$ of the matrix $\Lambda$ are uniquely defined in the strip: 
\begin{equation}
\label{stripe}
-\frac{\pi}{T} < {\rm Imag}(\gamma) \leq \frac{\pi}{T}. 
\end{equation}
Eigenvalues of $\Lambda$ are generally complex-valued but we use the presentation \eqref{Floquet} with real $P(t)$ and real $\Lambda$ for convenience of the normal form transformations. For example, if $\gamma = \alpha \pm i \beta$ are two complex-conjugate eigenvalues of $\Lambda$, then the canonical form for the corresponding block of $\Lambda \in \R^{2 N \times 2 N} $ is 
$$
\left[ \begin{array}{cc} \alpha & \beta \\ -\beta & \alpha \end{array} \right].
$$
\end{remark}

\begin{remark}
As preparations for the normal form transformation, we can consider the time-periodic system \eqref{Msyst} on the double period $ 2 T $. The advantage of this approach is that the bifurcating Floquet exponents in the stripe \eqref{stripe} correspond to zero Floquet exponents in the $2T$-periodic system. The solution \eqref{Floquet} of the linear system \eqref{linear-system} 
can be rewritten in the form:
\begin{equation}
\label{Floquet-double}
V(t) = \widetilde{P}(t) e^{\widetilde{\Lambda} t} V(0),
\end{equation}
with the time-periodic matrix function $\widetilde{P}(t) = \widetilde{P}(t + 2 T) \in \R^{2 N \times 2 N}$ and the time-independent matrix $\widetilde{\Lambda} \in \R^{2 N \times 2 N} $. According to the assumption  ({\bf Spec}) at the bifurcation point, $\widetilde{\Lambda}$ has a double zero eigenvalue and all other (purely imaginary) eigenvalues of $\widetilde{\Lambda}$ are simple and bounded away from $ 0 $.
\end{remark}

\begin{remark}
For the normal form transformation in Section \ref{sec4} 
we need the following property of $ \widetilde{P}(t) $.
Comparing the two representations of the 
fundamental matrix solution
$$ 
\Phi(t) = P(t) e^{\Lambda t} =  \widetilde{P}(t) e^{\widetilde{\Lambda} t},
$$ 
we obtain
$$ 
\widetilde{P}(t) = P(t)  e^{\pi i t/T} = \sum_{m \in \Z} P_m e^{2 \pi im t /T} e^{\pi i t/T},
$$ 
with $ P_m $ being constant $ 2 N \times 2 N $-matrices.
\end{remark}

We now transform the system \eqref{Msyst} on the double period to a convenient form for which  the linear part is autonomous in $t$. Let $ V(t) = \widetilde{P}(t) W(t) $, then $W(t) \in \mathbb{R}^{2N}$ satisfies the time-periodic system:
\begin{equation} \label{Weq}
\dot{W}(t) = \widetilde{\Lambda} W(t) + \widetilde{P}(t)^{-1} N(\widetilde{P}(t) W(t) ).
\end{equation}
We define the projection $ \Pi_0 $ on the subspace associated with the double zero eigenvalue of $  \widetilde{\Lambda} $ by 
\begin{equation} \label{proj0}
\Pi_0 = \frac{1}{2 \pi i} \int_{\Gamma_0} (\lambda I -  \widetilde{\Lambda})^{-1} d \lambda,
\end{equation}
where $ \Gamma_0 $ is a closed curve surrounding the origin in the $\lambda$ plane counter-clockwise. The projection on the two other $ (2N-2) $ eigenvalues of $ \widetilde{\Lambda} $ on the imaginary axis is defined by $ \Pi_h = I- \Pi_0 $. The range of $ \Pi_0 $ is two-dimensional and the range of $ \Pi_h $ is $(2N-2)$-dimensional.

We apply these projections on system \eqref{Weq} 
and find for $ W_0 = \Pi_0 W $ and $ W_h = \Pi_h W $ that 
\begin{eqnarray} \label{Weq0}
\dot{W}_0(t) & = &  \Lambda_0 W_0(t) + N_0(W_0,W_h),\\
\dot{W}_h(t) & = &  \Lambda_h W_h(t) + N_h(W_0,W_h) + H(W_0),
\label{Weqh}
\end{eqnarray}
where we have introduced $ \Pi_0  \widetilde{\Lambda} = \Lambda_0 \Pi_0 $,  $ \Pi_h  \widetilde{\Lambda} = \Lambda_h \Pi_h $,  
\begin{align*}
	N_0(W_0,W_h) & := \Pi_0  \widetilde{P}(t)^{-1} N( \widetilde{P}(t) W(t) ),\\
	N_h(W_0,W_h) + H(W_0) & := \Pi_h  \widetilde{P}(t)^{-1} N( \widetilde{P}(t) W(t) ).
\end{align*}
The splitting into $ N_h(W_0,W_h) + H(W_0) $ is justified with 
$ N_h(W_0,0) = 0 $. System \eqref{Weq0}-\eqref{Weqh} is extended by the additional equation $ \dot{\varepsilon} = 0 $, where $\varepsilon$ is the bifurcation parameter in ({\bf Spec}). 

\begin{remark}
In the context of the time-periodic system \eqref{Weq} on the double period, we recall that $ W_0 $ represents the modes associated 
to the two Floquet exponents which split from the double zero and leave the imaginary axis and that $ W_h $ represents the modes associated 
to the other $ (2 N - 2) $  Floquet exponents which stay 
on the imaginary axis for small bifurcation parameter $\varepsilon$.
\end{remark}

We use the normal form transformations to reduce the order 
of $ H(W_0) $ in terms of powers of $ \| W_0 \| $. 
\begin{lemma}
For every $ M \geq 2 $ there exists an $ \varepsilon_0 > 0 $ such that
for all $ \varepsilon \in (0,\varepsilon_0) $ 
there exists a change of coordinates 
$$
W_{0,M}  = W_0, \qquad 
W_{h,M}  = W_h + G(W_0)
$$
such that system \eqref{Weq0}-\eqref{Weqh} transforms into 
\begin{eqnarray} \label{Wteq0}
\dot{W}_0 & = &  \Lambda_0 W_0 + N_{0,M}(W_0,W_{h,M}),\\
\dot{W}_{h,M} & = &  \Lambda_h W_{h,M} + N_{h,M}(W_0,W_{h,M})+ H_M(W_0),
\label{Wteqh}
\end{eqnarray}
with 
$N_{h,M}(W_0,0) = 0 $ 
and
$ H_M(W_0)= \mathcal{O}(\|W_0\|^M) $.
\end{lemma}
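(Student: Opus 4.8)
\emph{Plan of proof.} I would argue by induction on $M$, at each step removing one more homogeneous degree (in $W_0$) of the $W_0$-only term in \eqref{Weqh}, using a near-identity change of coordinates that touches only the $W_h$-variable. For $M=2$ there is nothing to do: $N(0)=0$ and $D_V N(0)=0$ already force $H(W_0)=\mathcal{O}(\|W_0\|^2)$, so \eqref{Weq0}--\eqref{Weqh} is of the desired form with $H_2:=H$. Suppose now \eqref{Wteq0}--\eqref{Wteqh} holds for some $M\geq 2$, and write $H_M(W_0,t)=\mathcal{H}(W_0,t)+\mathcal{O}(\|W_0\|^{M+1})$, where $\mathcal{H}$ is homogeneous of degree $M$ in $W_0\in\mathrm{range}\,\Pi_0\cong\R^2$ with $2T$-periodic coefficients in $t$. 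I would seek the new coordinate $W_{h,M+1}=W_{h,M}+G(W_0,t)$ with $G$ real-valued, $2T$-periodic in $t$, and homogeneous of degree $M$ in $W_0$; since $G$ does not involve $W_{h,M}$, this map is a global polynomial diffeomorphism (inverse $W_{h,M}=W_{h,M+1}-G$), so no smallness of $W_0$ is needed. Differentiating and substituting $\dot W_0=\Lambda_0 W_0+N_{0,M}(W_0,W_{h,M})$ together with $W_{h,M}=W_{h,M+1}-G$, the part of the new $\dot W_{h,M+1}$-equation that depends on $W_0$ alone and is homogeneous of degree $M$ is exactly $\partial_t G+(D_{W_0}G)\Lambda_0 W_0-\Lambda_h G+\mathcal{H}$; the remaining new terms, namely $N_{h,M}(W_0,-G)=\mathcal{O}(\|W_0\|\,\|G\|)$ (because $N_{h,M}(W_0,0)=0$), $(D_{W_0}G)\,N_{0,M}(W_0,-G)=\mathcal{O}(\|W_0\|^{M-1})\cdot\mathcal{O}(\|W_0\|^2)$, and the Taylor remainder $H_M-\mathcal{H}$, are all $\mathcal{O}(\|W_0\|^{M+1})$ and may be absorbed into $H_{M+1}$ (the $W_{h,M+1}$-independent pieces) and into $N_{h,M+1}$ (the rest), with $N_{h,M+1}(W_0,0)=0$ by construction. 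Hence it remains to choose $G$ solving the homological equation $\partial_t G+(D_{W_0}G)\Lambda_0 W_0-\Lambda_h G=-\mathcal{H}$; accumulating the individual steps yields the total transformation $G=\sum_{j=2}^{M-1}G_j$.

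The substantive point is inverting the homological operator $\mathcal{A}_M(\varepsilon):G\mapsto \partial_t G+(D_{W_0}G)\Lambda_0 W_0-\Lambda_h G$ on the space of degree-$M$ homogeneous polynomials in $W_0$ with $H^s$-in-$t$, $2T$-periodic, $\mathrm{range}\,\Pi_h$-valued coefficients. I would first treat $\varepsilon=0$: expanding in the Fourier basis $e^{\pi i k t/T}$, $k\in\Z$, on the $2T$-torus and diagonalizing $\Lambda_h$ (eigenvalues $i\omega_1,\dots,i\omega_{2N-2}$, all purely imaginary, simple, and nonzero by ({\bf Spec}), and, after the passage to the double period that sends the bifurcating exponents $i\pi/T$ to $0$, bounded away from every $\tfrac{\pi k}{T}$), one sees that $\mathcal{A}_M(0)$ equals $\partial_t-\Lambda_h$ plus the nilpotent operator $G\mapsto(D_{W_0}G)\Lambda_0(0)W_0$, which commutes with it, so its spectrum is $\{\tfrac{\pi i k}{T}-i\omega_\ell:k\in\Z\}$. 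This set avoids $0$ — for $k=0$ because $\omega_\ell\neq 0$, for $k\neq 0$ because $|\tfrac{\pi k}{T}|\geq\tfrac{\pi}{T}>|\omega_\ell|$ — and the spectral values grow like $|k|$, so $\mathcal{A}_M(0)$ has a bounded inverse. Since $\Lambda_0(\varepsilon)-\Lambda_0(0)=\mathcal{O}(\varepsilon^2)$ by \eqref{k-varepsilon}, a Neumann series then gives bounded invertibility of $\mathcal{A}_M(\varepsilon)$ for $\varepsilon\in(0,\varepsilon_0)$ with $\varepsilon_0=\varepsilon_0(M)$ allowed to shrink with $M$, and $G:=-\mathcal{A}_M(\varepsilon)^{-1}\mathcal{H}$ (real, since $\mathcal{A}_M(\varepsilon)$ preserves real-valued functions) closes the induction.

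The main obstacle is precisely this uniform-in-$\varepsilon$ solvability, i.e. the non-resonance $\tfrac{\pi k}{T}\neq\omega_\ell$ between the forcing frequencies available on the $2T$-periodic torus and the imaginary parts of the non-bifurcating Floquet exponents, with the inverse bounded as $\varepsilon\to 0$. This is exactly why one first passes to the double period $2T$ (so that the bifurcating exponents, originally at $i\pi/T$ and hence resonant with the modes $k=\pm1$ on the $T$-torus, are relocated to $0$ and cleanly separated from the rest of the spectrum) and why ({\bf Spec}) insists that all other exponents be simple, purely imaginary, and bounded away from both $0$ and $i\pi/T$; if this failed, $\mathcal{A}_M(0)^{-1}$ would be unbounded and the scheme would stall. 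Everything else — verifying that $N_{h,M}(W_0,0)=0$ is preserved at each step and that the discarded terms are genuinely of order $\|W_0\|^{M+1}$ — is routine bookkeeping once the homological equation is in place.
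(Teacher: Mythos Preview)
Your proposal is correct and follows essentially the same inductive strategy as the paper: remove one homogeneous degree of the $W_0$-only term in the $W_h$-equation at each step via a near-identity shift $W_{h,n+1}=W_{h,n}+G_n(W_0)$, and solve the resulting homological equation using the $2T$-periodic Fourier expansion together with {\bf (Spec)}. Your write-up is in fact more explicit than the paper's on two points: you retain the $\partial_t G$ term in the homological equation (the paper's displayed equation \eqref{elim1} suppresses it, though it is implicitly used when the paper invokes the Fourier series), and you spell out the invertibility argument (nilpotent plus semisimple splitting at $\varepsilon=0$, non-resonance $\tfrac{\pi k}{T}\neq\omega_\ell$, then Neumann series for small $\varepsilon$) where the paper simply asserts solvability from {\bf (Spec)}.
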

\begin{proof}
We set $ W_{h,2} = W_h $ and then 
inductively 
$$
W_{h,n+1} = W_{h,n} + G_n(W_0) , 
$$
with $  G_n(W_0) $ being a $ n $-linear mapping in $ W_0 $. 
After the transformations  we have a system of the form 
\begin{eqnarray} \label{wq0}
\dot{W}_0 & = & \Lambda_0 W_0 + N_{0,n+1}(W_0,W_{h,n+1}) ,\\
\dot{W}_{h,n+1} & = & \Lambda_h W_{h,n+1} + N_{h,n+1}(W_0,W_{h,n+1}) + H_{n+1}(W_0),
\label{wh0}
\end{eqnarray}
with
\begin{eqnarray*} 
N_{0,n+1}(W_0,W_{h,n+1}) & = & N_{0,n}(W_0,W_{h,n} - G_n(W_0)) ,\\
N_{h,n+1}(W_0,W_{h,n+1}) & = & N_{h,n}(W_0,W_{h,n} - G_n(W_0)) \\ && - (D_{W_0} G_n(W_0)) ( N_{0,n}(W_0,W_{h,n}) - N_{0,n}(W_0,0)), \\
H_{n+1}(W_0) & = & H_{n}(W_0) +  \Lambda_h G_n(W_0) \\ && - (D_{W_0} G_n(W_0)) ( \Lambda_0 W_0 + N_{0,n}(W_0,0) ),
\end{eqnarray*}
and so $ N_{h,n+1}(W_0,0) = 0 $.
In order to have $ H_{n+1}(W_0)  =  \mathcal{O}(\|W_0\|^{n+1}) $ if $ H_{n}(W_0)  
=  \mathcal{O}(\|W_0\|^{n}) $
we have to choose $ G_n $ such that 
\begin{equation} \label{elim1}
H_{n,n}(W_0) +  \Lambda_h G_n(W_0)  - (D_{W_0} G_n(W_0)) \Lambda_0 W_0 = 0  ,
\end{equation}
where $ H_{n,n}(W_0)  $ is the $ n $-linear part of $
H_{n}(W_0) $, i.e. $ H_{n}(W_0) - H_{n,n}(W_0)  = \mathcal{O}(\|W_0\|^{n+1}) $.
Since 
$$ 
H_{n,n}(W_0)(t) = H_{n,n}(W_0)(t+2 T) 
$$ 
we also have to choose  $ G_{n}(W_0)(t) = G_{n}(W_0)(t+2 T) $.
Using Fourier series $$ G_n(W_0)(t) = \sum_{m \in \Z} G_n(W_0)[m] e^{ \pi i m t/T} $$ it is easy to see that \eqref{elim1} can be solved w.r.t. $ G_n $ since 
according to the assumption ({\bf Spec}), none of the eigenvalues $\lambda_j$ of $\Lambda_h$ are located at $ 0 $ and $\Lambda_0$
has a double zero eigenvalue.  As a result, the term $ H(W_0) $ can be made arbitrarily small  in terms of powers of $ \| W_0 \| $.  
\end{proof}

\begin{remark}
The sequence of normal form transformations is not convergent and so we stop after $ M-1 $ transformations with some large but fixed $M \in \mathbb{N}$, for which we have $ H_{M}(W_0) = \mathcal{O}(\|W_0\|^{M}) $. Note that the minimum of $H_{M}(W_0)$ is attained for $ M = \mathcal{O}(1/\| W_0 \|) $-many transformations, after which $ H_M $ is exponentially small  in terms  of $ \| W_0 \| $, cf. \cite{Lombardi}.
\end{remark}

\section{Normal form transformations for the reduced system}
\label{sec4}

If we ignore the terms  $ H_{M}(W_0)  $ in \eqref{Wteq0}-\eqref{Wteqh},  
then $ \{ W_{h,M}  = 0 \} $ is an invariant subspace for  \eqref{Wteq0}-\eqref{Wteqh}.
In this two-dimensional subspace 
the reduced system is obtained by setting $ W_{h,M}  = 0 $ in the first equation.
So we consider the two-dimensional ODE
\begin{equation} 
\label{reduced}
\dot{W}_0 = \Lambda_0 W_0 + N_{0,M} (W_0,0).
\end{equation}
At the bifurcation point 
$ \Lambda_0 $ posseses a double eigenvalue  $ \lambda =  0 $ with algebraic multiplicity two and geometric multiplicity one.
Thus, we have a Jordan-block of size two. The eigenvector of $ \widetilde{\Lambda} $ is denoted with $ \varphi_1 $ and the generalized eigenvector with $ \varphi_2 $, i.e. $ \widetilde{\Lambda}  \varphi_1 = 0 $
and $ \widetilde{\Lambda}  \varphi_2 =  \varphi_1 $. If we introduce coordinates $ A $, $ B $ by 
\begin{equation}
\label{A-def}
W_0 = A \varphi_1 + B \varphi_2 
\end{equation}
we can rewrite \eqref{reduced} as the following two-dimensional system 
\begin{subequations}\label{ABeqneu}
\begin{eqnarray} 
\label{Aeqneu}
\dot{A} & = & B  + f_A(t,A,B), \\
\dot{B} & =  & \widetilde{\varepsilon}^2 A + f_B(t,A,B), \label{Beqneu}
\end{eqnarray}
\end{subequations}
where $ f_A $ and $ f_B $ stand for real-valued  nonlinear terms which are of the form
$$ 
f_A(t,A,B) = \sum_{n = 2}^{\infty} \sum_{j=0}^n \sum_{m \in \Z} f_{A,n,j,m} A^j B^{n-j} e^{2 im\pi t/T} e^{i(n-1) \pi t/T}
$$
and similarly for $ f_B $, where $ f_{A,n,j,m} $ and $ f_{B,n,j,m} $ are independent  of time. To simplify 
notations, we introduce here the normalized small parameter $\widetilde{\varepsilon}$ for the distance of the two Floquet exponents from the imaginary axis. According to the expansion \eqref{gamma-varepsilon}, 
we have 
\begin{equation}
\label{varepsilon-def}
\widetilde{\varepsilon} = \varepsilon \frac{\sqrt{2}}{\sqrt{|\lambda''(\ell_0)|}} + \mathcal{O}(\varepsilon^2). 
\end{equation}
For analyzing \eqref{ABeqneu} we simplify this system by eliminating 
a number of terms by another normal form transformation

\begin{lemma}
There exists an $ \tilde{\varepsilon}_0 > 0 $ such that
for all $ \tilde{\varepsilon} \in (0,\tilde{\varepsilon}_0) $ 
there exists a change of coordinates 
$$
A  = A_3 + F_A(A_3,B_3)  , \qquad B  = B_3 + F_B(A_3,B_3), 
$$
with $ F_A $ and $ F_B $ polynomials not containing linear terms,
such that system \eqref{Aeqneu}-\eqref{Beqneu} transforms into 
\begin{align*}
\dot{A}_3 & = B_3 + f_{A,3,3,-1,2} A_3^3 + f_{A,3,2,-1,2}  A_3^2 B_3 + f_{A,3,1,-1,2}  A_3 B_3^2  + f_{A,3,0,-1,2}  B_3^3 + \mathcal{O}(|A_3|^4+|B_3|^4)
, \\
\dot{B}_3 & =  \widetilde{\varepsilon}^2 A_3 + f_{B,3,3,-1,2} A_3^3 + f_{B,3,2,-1,2}  A_3^2 B_3 + f_{B,3,1,-1,2}  A_3 B_3^2  + f_{B,3,0,-1,2}  B_3^3 + \mathcal{O}(|A_3|^4+|B_3|^4)
,
\end{align*}
with real-valued  coefficients $ f_{A,n,j,m,2} $ and $ f_{B,n,j,m,2} $.
\end{lemma}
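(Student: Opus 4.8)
My plan is to read the statement as a finite-order normal form reduction of the two-dimensional, $2T$-periodic system \eqref{ABeqneu}, carried out in the same spirit as the transformation in the previous lemma but applied to the reduced equation \eqref{reduced} and truncated at cubic order. Write the linear part of \eqref{ABeqneu} as $\dot W_0=\Lambda_0 W_0$ with $W_0$ as in \eqref{A-def} and $\Lambda_0=\begin{pmatrix}0&1\\ \widetilde{\varepsilon}^{2}&0\end{pmatrix}$, which for $\widetilde{\varepsilon}>0$ is diagonalizable with eigenvalues $\pm\widetilde{\varepsilon}$, and look for a near-identity change of coordinates $W_0=\widetilde W_0+F(t,\widetilde W_0)$ with $F=(F_A,F_B)$ polynomial in $\widetilde W_0$, with $2T$-periodic coefficients (the $t$-dependence of $F_A,F_B$ is suppressed in the notation of the statement) and no constant or linear part. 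Such a substitution turns \eqref{ABeqneu} into $\dot{\widetilde W}_0=\Lambda_0\widetilde W_0+N(t,\widetilde W_0)-\mathcal{H}[F](t,\widetilde W_0)+(\text{terms of higher homogeneity})$, where $N$ collects $(f_A,f_B)$ and $\mathcal{H}[F]:=\partial_t F+(D_{\widetilde W_0}F)\Lambda_0\widetilde W_0-\Lambda_0 F$ is the homological operator. Since $F$ enters $N(\widetilde W_0+F)$ only through monomials of strictly higher degree than $F$, I would eliminate terms degree by degree: first choose the quadratic part of $F$ with $\mathcal{H}[F_2]$ equal to the quadratic part of $N$, then the cubic part of $F$ with $\mathcal{H}[F_3]$ equal to the (possibly modified) non-resonant cubic part; two steps give the cubic normal form, with everything else in the remainder $\mathcal{O}(|A_3|^{4}+|B_3|^{4})$.

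To identify which terms are removable I would expand each homogeneous degree-$n$ piece on the monomial vector fields $\widetilde A^{\,j}\widetilde B^{\,n-j}e^{i\pi k t/T}e_i$ ($i=1,2$, $0\le j\le n$, $k\in\Z$). On the $k$-th Fourier sector $\mathcal{H}$ acts as $\tfrac{i\pi k}{T}\,\mathrm{Id}+\mathcal{S}_n$ with $\mathcal{S}_n:F\mapsto(D_{\widetilde W_0}F)\Lambda_0\widetilde W_0-\Lambda_0 F$ on degree-$n$ vector polynomials; in the eigenbasis of $\Lambda_0$ the monomial $z_1^{a}z_2^{b}e_i$ ($a+b=n$) is an eigenvector of $\mathcal{S}_n$ with eigenvalue an integer multiple of $\widetilde{\varepsilon}$, hence $\mathcal{O}(\widetilde{\varepsilon})$. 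Therefore $\mathcal{H}$ is boundedly invertible uniformly in small $\widetilde{\varepsilon}$ exactly on the sectors with $k\neq0$, and those are the only terms I would eliminate. Reading the time factors in \eqref{Aeqneu}--\eqref{Beqneu}, a quadratic monomial ($n=2$) carries $e^{2im\pi t/T}e^{i\pi t/T}=e^{i(2m+1)\pi t/T}$, so $k=2m+1\neq0$ for every $m\in\Z$ and the whole quadratic nonlinearity is removed; a cubic monomial ($n=3$) carries $e^{2im\pi t/T}e^{2i\pi t/T}=e^{i(2m+2)\pi t/T}$, so $k=2m+2$ vanishes exactly for $m=-1$. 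Hence every cubic term with $m\neq-1$ is removed and the surviving cubic part is precisely the time-independent one, i.e.\ the terms $f_{A,3,j,-1,2}A_3^{j}B_3^{3-j}$ and $f_{B,3,j,-1,2}A_3^{j}B_3^{3-j}$, $j=0,1,2,3$, of the statement; these coefficients are real because they are $t$-averages of the real functions $f_A,f_B$. The two near-identity polynomial transformations compose to a diffeomorphism of a neighbourhood of the origin whose size is uniform in small $\widetilde{\varepsilon}$, and \eqref{varepsilon-def} is unaffected.

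The main obstacle is the degeneracy of $\Lambda_0$ at the bifurcation point $\widetilde{\varepsilon}=0$, where it is a nilpotent Jordan block: there $\mathcal{S}_n$ has not only a kernel (the formally resonant monomials) but also nonzero eigenvalues of size $\mathcal{O}(\widetilde{\varepsilon})$, so the time-independent ($k=0$) cubic terms cannot be reduced any further without generating factors $\widetilde{\varepsilon}^{-1}$; that is why the normal form must keep the entire time-independent cubic part rather than a smaller resonant piece. Keeping $\widetilde{\varepsilon}$ strictly positive and eliminating only the $k\neq0$ monomials — for which the divisor $\tfrac{i\pi k}{T}+\mathcal{O}(\widetilde{\varepsilon})$ stays bounded away from zero — is exactly what makes the transformation well defined with estimates uniform for $\widetilde{\varepsilon}\in(0,\widetilde{\varepsilon}_0)$. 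Since only the quadratic and cubic orders are treated, the procedure terminates with a polynomial change of coordinates and no convergence issue arises, in contrast with the non-convergent infinite sequence of the previous lemma.
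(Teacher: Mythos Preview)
Your proposal is correct and follows essentially the same approach as the paper: a two-step near-identity, $2T$-periodic transformation that removes all time-oscillating terms at quadratic order (where every monomial carries a factor $e^{i(2m+1)\pi t/T}$ and hence oscillates) and all time-oscillating cubic terms (leaving only the $m=-1$ constant-prefactor ones). The paper's proof is terser, invoking the standard averaging/normal-form result (Normal Form Theorem III of Iooss) rather than writing out the homological operator; your explicit analysis of the divisors $\tfrac{i\pi k}{T}+\mathcal{O}(\widetilde{\varepsilon})$ and of the $\widetilde{\varepsilon}\to 0$ degeneracy makes transparent precisely why the eliminable terms are eliminable uniformly and why the full time-independent cubic block must be kept, which the paper leaves implicit.
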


\begin{proof}
It is well known that all terms which have a pre-factor which is oscillating in time can be eliminated by a normal form transform 
or equivalently by averaging, cf. \cite{GH83,Sanders}. The technique is elaborated in the Normal Form Theorem III of \cite[Theorem III.13]{Iooss}. 
For the quadratic terms there is no term which has a pre-factor  which is constant in time and so all quadratic terms can be eliminated by  a transformation 
\begin{align*}
A & = A_2 +  \sum_{j=0}^2 \sum_{m \in \Z} g_{A,2,j,m,1} A_2^j B_2^{2-j} e^{2 im\pi t/T} e^{i \pi t/T}, \\
B & = B_2 +  \sum_{j=0}^2 \sum_{m \in \Z} g_{B,2,j,m,1} A_2^j B_2^{2-j}e^{2 im\pi t/T} e^{i \pi t/T}.
\end{align*}
By suitably choosing the coefficients $ g_{A,2,j,m,1} $ and $ g_{B,2,j,m,1} $
we find
\begin{align*}
\dot{A}_2 & =  B_2 + \sum_{n = 3}^{\infty} \sum_{j=0}^n \sum_{m \in \Z} f_{A,n,j,m,1} A_2^j B_2^{n-j}  e^{2 im\pi t/T} e^{i(n-1) \pi t/T}, \\
\dot{B}_2 & =  \widetilde{\varepsilon}^2 A_2 + \sum_{n = 3}^{\infty} \sum_{j=0}^n \sum_{m \in \Z} f_{B,n,j,m,1} A_2^j B_2^{n-j} e^{2 im\pi t/T} e^{i(n-1) \pi t/T} ,
\end{align*}
with new time-independent coefficients $ f_{A,n,j,m,1} $ and $ f_{B,n,j,m,1} $.
For simplifying the cubic terms we make a near identity transformation
\begin{align*}
A_2 & = A_3 +  \sum_{j=0}^3 \sum_{m \in \Z}g_{A,3,j,m,2} A_3^j B_3^{3-j} e^{2 im\pi t/T} , \\
B_2 & = B_3 +  \sum_{j=0}^3 \sum_{m \in \Z} g_{B,3,j,m,2} A_3^j B_3^{3-j} e^{2 im\pi t/T} .
\end{align*}
Again by suitably choosing the coefficients $ g_{A,3,j,m,2} $ and $ g_{B,3,j,m,2}  $
we find
\begin{align*}
\dot{A}_3 & = B_3 + f_{A,3,3,-1,2} A_3^3 + f_{A,3,2,-1,2}  A_3^2 B_3 + f_{A,3,1,-1,2}  A_3 B_3^2  + f_{A,3,0,-1,2}  B_3^3
\\ & \quad  +  \sum_{n = 4}^{\infty} \sum_{j=0}^n \sum_{m \in \Z} f_{A,n,j,m,2} A_3^j B_3^{n-j} e^{2 im\pi t/T} e^{i(n-1) \pi t/T}, \\
\dot{B}_3 & =  \widetilde{\varepsilon}^2 A_3 + f_{B,3,3,-1,2} A_3^3 + f_{B,3,2,-1,2}  A_3^2 B_3 + f_{B,3,1,-1,2}  A_3 B_3^2  + f_{B,3,0,-1,2}  B_3^3
\\ & \quad +  \sum_{n = 4}^{\infty} \sum_{j=0}^n \sum_{m \in \Z} f_{B,n,j,m,2} A_3^j B_3^{n-j} e^{2 im\pi t/T} e^{i(n-1) \pi t/T},
\end{align*}
with new coefficients $ f_{A,n,j,m,2} $ and $ f_{B,n,j,m,2} $.  
\end{proof}

\section{Proof of Theorem \ref{theorem-1}}
\label{sec6}

Here we obtain the oscillating homoclinic solutions with small tails  for $ c = 0 $. The bifurcating solutions scale as $ A_3(t) =  \widetilde{\varepsilon} \widetilde{A}(\tau) $
and $ B_3(t) =  \widetilde{\varepsilon}^2 \widetilde{B}(\tau) $, with $ \tau = \widetilde{\varepsilon} t $.
For the rescaled variables we find 
\begin{subequations}\label{ampl1}
\begin{eqnarray} 
\partial_{\tau}\widetilde{A} & = &  \widetilde{B}  + \mathcal{O}(\widetilde{\varepsilon}), \\
\partial_{\tau}\widetilde{B} & = & \widetilde{A} + f_{B,3,3,-1,2} \widetilde{A}^3 + \mathcal{O}(\widetilde{\varepsilon}). %\label{ampl2}
\end{eqnarray}
\end{subequations}
Ignoring the terms of order $ \mathcal{O}(\widetilde{\varepsilon}) $ we find two homoclinic solutions to the origin, see the left panel of Figure \ref{fig:homoclinic}, if the following sign condition holds.

{\bf (Coeff)} {\it Assume that 
$$  f_{B,3,3,-1,2} < 0 .$$} 

\begin{remark}
	It is shown in Section \ref{sec8} that {\bf (Coeff)} can generally be satisfied   either at the bifurcation \eqref{bifurcating-mode} or \eqref{bifurcating-mode-alternative}.
\end{remark}

\begin{remark}
	The truncated system \eqref{ampl1}, with $\mathcal{O}(\widetilde{\varepsilon})$ terms neglected,  admits an explicit solution 
	\begin{equation}
	\label{stat-NLS}
	\left\{ \begin{array}{l} 
	\widetilde{A}_{\rm hom,0}(\tau) = \sqrt{2 |f_{B,3,3,-1,2}|^{-1}} {\rm sech}(\tau-\tau_0), \\
	\widetilde{B}_{\rm hom,0}(\tau) = -\sqrt{2 |f_{B,3,3,-1,2}|^{-1}} {\rm tanh}(\tau-\tau_0) {\rm sech}(\tau-\tau_0),
	\end{array} \right.
	\end{equation}
	where $f_{B,3,3,-1,2} < 0$ and $\tau_0 \in \mathbb{R}$ is to our disposal. 
\end{remark}

In the invariant subspace $ \{ W_{h,M}  = 0 \}$ the homoclinic orbits persist 
in the reduced system \eqref{reduced} if the following  reversibility condition 
holds.

{\bf (Rev)} {\it Assume that there exists $t_0 \in [0,T]$ such that $k(t-t_0) = k(t_0-t)$.}  

\begin{remark} \label{reversibility-k}
	Condition {\bf (Rev)} is satisfied for $ k(t) $ defined in \eqref{spec2}
with either $t_0 = \frac{1}{2} \tau_d T$ or $t_0 = \frac{1}{2} (\tau_d+1) T$.
There exist many other possible time-periodic coefficients $ k(t) $ 
satisfying the reversibility condition \eqref{reversibility-k} which lead to the same kind of solutions as in Theorems \ref{theorem-1}. 
\end{remark}

As a consequence of the reversibility,
if $ t \mapsto (\widetilde{A}(t) ,\widetilde{B}(t))  $ is a solution, 
so is 
$$ 
t \mapsto (\widetilde{A}(2t_0-t) ,- \widetilde{B}(2t_0-t)). 
$$ 
Hence, the one-dimensional unstable manifold of the origin of the time $ 2 T $-mapping transversely  intersects the fixed space 
of reversibility $ \{ ( \widetilde{A} ,\widetilde{B} ) : \widetilde{B} = 0 \} $ and continue from the upper half to the lower half of the phase plane, see the right panel of Figure \ref{fig:homoclinic}. Hence, by extending 
$ (\widetilde{A}(t) ,\widetilde{B}(t))_{t \leq t_0} $ by its mirror picture 
$ (\widetilde{A}(2t_0-t) ,-\widetilde{B}(2t_0-t))_{t \geq t_0} $
at the fixed space of reversibility we constructed 
a homoclinic orbit 
$$
(A,B)(t)  = (\widetilde{\varepsilon} \widetilde{A}_{\rm hom}(\varepsilon t) ,\widetilde{\varepsilon}^2 \widetilde{B}_{\rm hom}(\widetilde{\varepsilon} t))
$$
for the reduced system \eqref{ABeqneu}, where
$\widetilde{A}_{\rm hom} = \widetilde{A}_{\rm hom,0} + \mathcal{O}(\widetilde{\varepsilon})$ and $\widetilde{B}_{\rm hom} = \widetilde{B}_{\rm hom,0} + \mathcal{O}(\widetilde{\varepsilon})$ with the leading-order solution given by \eqref{stat-NLS}. 
In the original variables the homoclinic orbit is denoted with $ \mathcal{W} $ 
and corresponds to the truncation $ W_{h,M} = 0 $ for a given $M \in \mathbb{N}$.

\begin{figure}[htbp] %  figure placement: here, top, bottom, or page
	\centerline{
		\begin{tabular}{@{}p{0.5\linewidth}@{}p{0.5\linewidth}@{}}
			\rlap{\hspace*{5pt}\raisebox{\dimexpr\ht1-.1\baselineskip}{\bf (a)}}
			\includegraphics[height=6cm]{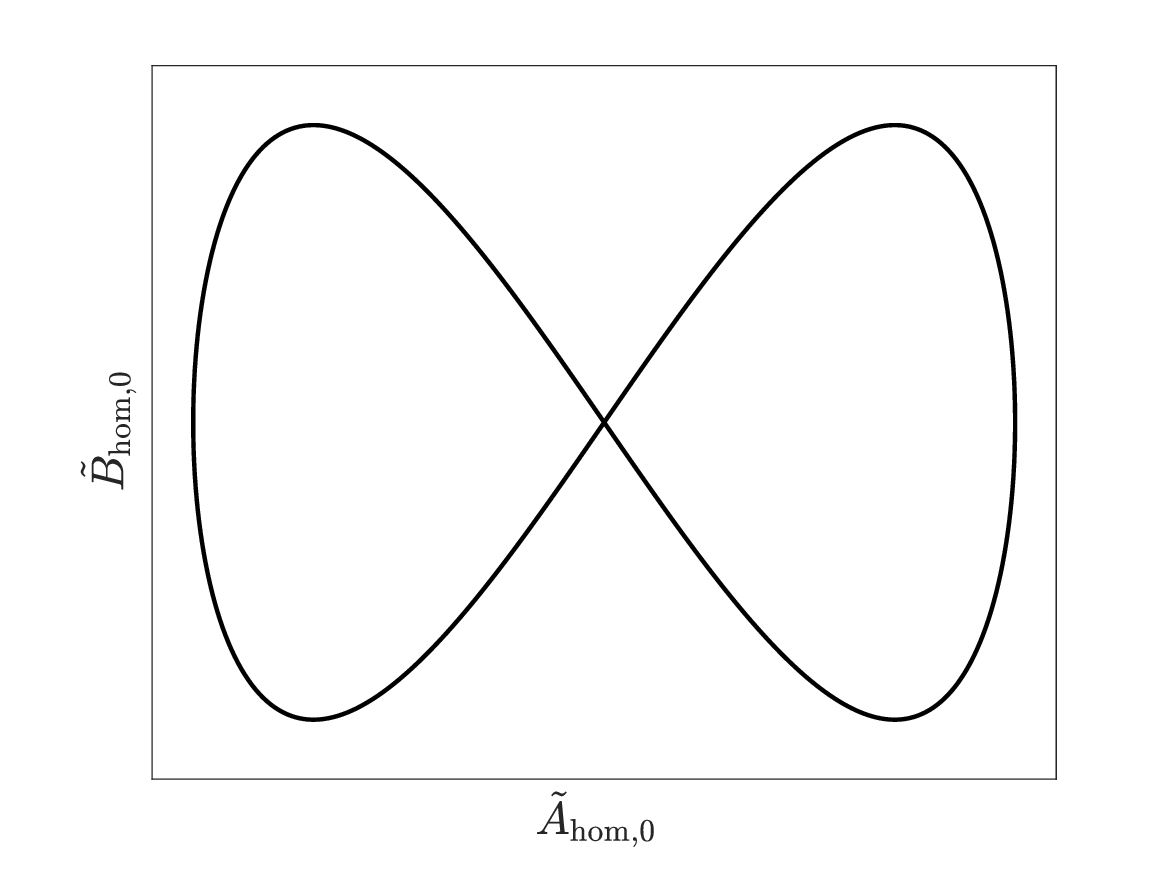}  &
			\rlap{\hspace*{5pt}\raisebox{\dimexpr\ht1-.1\baselineskip}{\bf (b)}}
			\includegraphics[height=6cm]{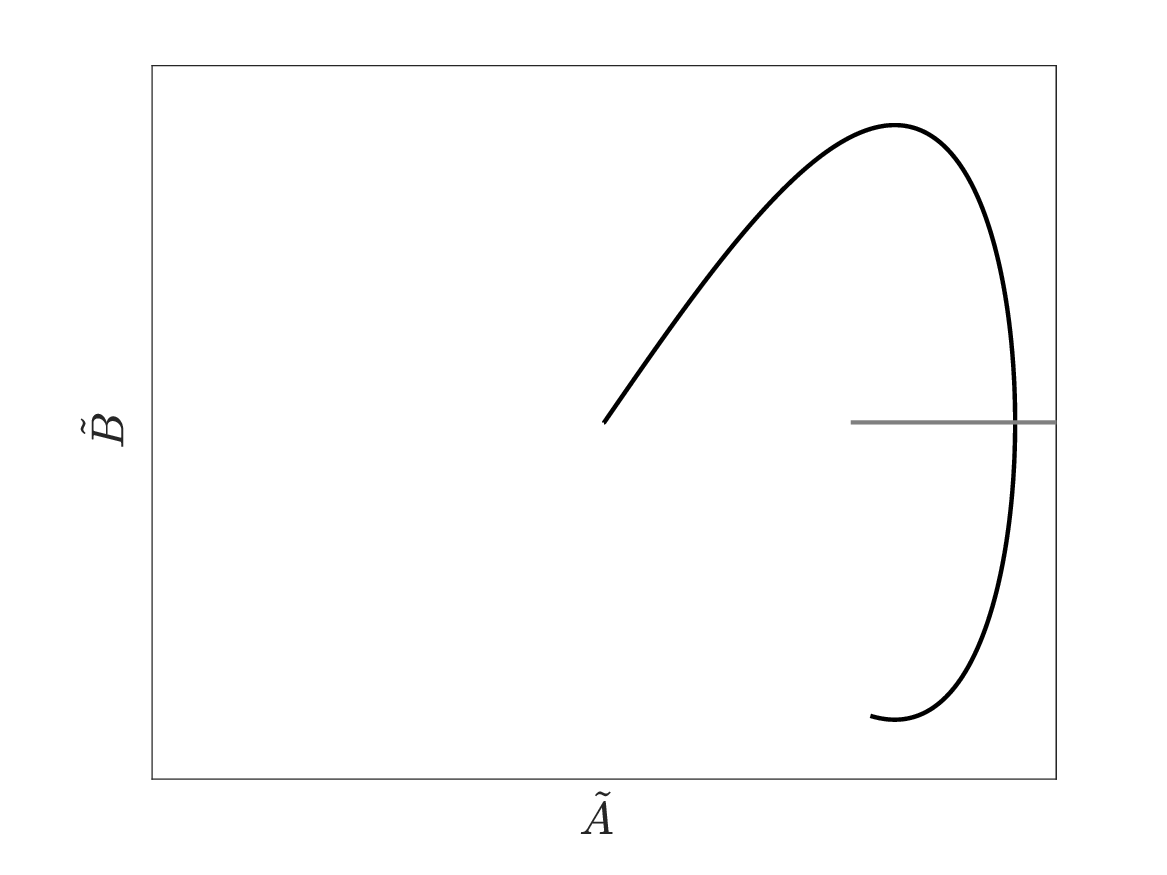} 
		\end{tabular}
	}
	\caption{\textbf{(a)} Two homoclinic solutions
 of Eq.~\eqref{ampl1} with $\mathcal{O}(\widetilde{\varepsilon})$ terms neglected.
 \textbf{(b)} A sketch of the transversal intersection of the unstable manifold with the fixed space of reversibility.}
	\label{fig:homoclinic}
\end{figure}

The rest of this section contains the proof of Theorem \ref{theorem-1}, which we rewrite in notations of Sections \ref{sec3} and \ref{sec4} as follows.
 
\begin{theorem}\label{th61}
	Assume the validity of {\bf (Spec)}, {\bf (Coeff)}, and {\bf (Rev)}. Then there exists an $ \widetilde{\varepsilon}_0 > 0 $ and $ C_0 > 0 $ such that for 
	all $ \widetilde{\varepsilon} \in (0,\widetilde{\varepsilon}_0) $ and every $M \in \mathbb{N}$ with $M \geq 3$ the system \eqref{Weq0}-\eqref{Weqh} possesses a generalized homoclinic solution  
	$ W_{\rm hom} : [-\widetilde{\varepsilon}^{-M+2}, \widetilde{\varepsilon}^{- M+2}] \to \R^{2N} $ 
	with 
	$$ 
	\sup_{t \in [-\widetilde{\varepsilon}^{- M+2}, \widetilde{\varepsilon}^{- M+2}]} 
	\| W_{\rm hom} (t) - \mathcal{W}(t)  \| \leq  
	C_0 \widetilde{\varepsilon}^{M-1}
	$$ 
	with $ \lim\limits_{|t| \to \infty} \mathcal{W}(t) = 0 $. Moreover, for $ \mathcal{W}(t) = (\mathcal{W}_0(t), \mathcal{W}_h(t)) $ we have 
	$$ 
	\sup_{t \in [-\widetilde{\varepsilon}^{- M+2}, \widetilde{\varepsilon}^{- M+2}]}  \| \mathcal{W}_h(t) \| \leq C_0 \widetilde{\varepsilon}^2 
	$$ 
	and
	$$ 
	\sup_{t \in [-\widetilde{\varepsilon}^{- M+2}, \widetilde{\varepsilon}^{- M+2}]}  
	\|\mathcal{W}_0(t) - \widetilde{\varepsilon} A_{\rm hom,0}(\widetilde{\varepsilon}^2 t) \varphi_1 \|
	\leq C_0 \widetilde{\varepsilon}^2,
	$$ 
	with $ A_{\rm hom,0} $ given by \eqref{stat-NLS}.
\end{theorem}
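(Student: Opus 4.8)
The plan is to prove Theorem~\ref{th61} by a perturbative fixed-point argument on the normal-form system \eqref{Wteq0}--\eqref{Wteqh}, treating $H_M(W_0) = \mathcal{O}(\|W_0\|^M)$ as a small forcing that breaks the persistence of the genuine homoclinic orbit $\mathcal{W}$ constructed (via reversibility) in the invariant subspace $\{W_{h,M}=0\}$ of the truncated system. First I would record the scales: along $\mathcal{W}$ we have $\|W_0\| = \mathcal{O}(\widetilde{\varepsilon})$ (from \eqref{A-def}, \eqref{stat-NLS}, with $A = \widetilde{\varepsilon}\widetilde{A}_{\rm hom}$, $B = \widetilde{\varepsilon}^2\widetilde{B}_{\rm hom}$), so $H_M(W_0) = \mathcal{O}(\widetilde{\varepsilon}^M)$, and the equation for $W_{h,M}$ forced by $H_M$ produces, at worst, a correction of size $\mathcal{O}(\widetilde{\varepsilon}^2)$ on the long interval — but the point is to get the sharper $\mathcal{O}(\widetilde{\varepsilon}^{M-1})$ \emph{difference} between the true solution and the truncated $\mathcal{W}$ on the time interval $[-\widetilde{\varepsilon}^{-M+2},\widetilde{\varepsilon}^{-M+2}]$. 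The slow time is $\tau = \widetilde{\varepsilon} t$, so this interval is $\tau \in [-\widetilde{\varepsilon}^{-M+3},\widetilde{\varepsilon}^{-M+3}]$, which is long in $\tau$ but still polynomially bounded, and a Gronwall estimate with the right constants will close on it.

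The key steps, in order: (1) Set $W_0 = \mathcal{W}_0 + \widetilde{\varepsilon}^{M-1} R_0$, $W_{h,M} = \mathcal{W}_h + \widetilde{\varepsilon}^{M-1} R_h$ (with $\mathcal{W}_h$ the genuine-truncated-orbit $h$-component, which is $\mathcal{O}(\widetilde{\varepsilon}^2)$ or, if one builds $\mathcal{W}$ purely from the reduced 2D system, simply $\mathcal{W}_h \equiv 0$; either normalization works, I would take the latter and absorb the true $h$-dynamics into $R_h$, so that $\|\mathcal{W}_h\| \le C_0\widetilde{\varepsilon}^2$ is trivially the bound $0$ and instead one proves $\|W_{h,M}\| \le C_0\widetilde{\varepsilon}^2$ directly). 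Derive the variational system for $(R_0,R_h)$; its linear part is the linearization of \eqref{Wteq0}--\eqref{Wteqh} about $\mathcal{W}$, and its inhomogeneity is $\widetilde{\varepsilon}^{-(M-1)}H_M(\mathcal{W}_0) = \mathcal{O}(\widetilde{\varepsilon})$ together with quadratic-in-$R$ remainders. (2) Handle the $R_h$ equation first: $\Lambda_h$ has all eigenvalues on the imaginary axis bounded away from $0$ (by {\bf (Spec)}), so the variation-of-constants solution of $\dot R_h = \Lambda_h R_h + (\text{forcing})$ grows at most linearly in $t$, hence is $\mathcal{O}(\widetilde{\varepsilon}^{-M+3})\cdot\mathcal{O}(1)$ over the interval; combined with the $\widetilde{\varepsilon}^{M-1}$ prefactor this is $\mathcal{O}(\widetilde{\varepsilon}^2)$, giving the claimed $h$-bound — but here one must be careful that the "forcing" includes $N_{h,M}(W_0,W_{h,M})$ which vanishes when $W_{h,M}=0$, so it is itself $\mathcal{O}(\|W_{h,M}\|)$ and can be moved to the left side / absorbed, keeping the linear growth honest. (3) Handle the $R_0$ equation: this is the delicate one because $\Lambda_0$ is the $2\times2$ Jordan block with eigenvalues $\pm\widetilde{\varepsilon}$ (after rescaling $\pm 1$ in slow time), so the linearized flow about the sech-homoclinic has a one-dimensional unstable and one-dimensional stable direction plus the usual translational near-kernel; the homoclinic's persistence as a \emph{generalized} (rippled, but here one-sided-clean by reversibility) orbit is enforced by the reversibility condition {\bf (Rev)}, exactly as in the truncated case: one solves the half-line problem on $[-\widetilde{\varepsilon}^{-M+2}, t_0/\widetilde{\varepsilon}\ \text{(or simply }0)]$ by a fixed point in the stable/center-stable manifold of the origin, obtains a solution whose value at the fixed space of reversibility $\{B=0\}$ can be matched transversally (the transversality is inherited from the $\widetilde{\varepsilon}\to 0$ picture, Figure~\ref{fig:homoclinic}(b), by a standard implicit-function / non-degeneracy argument), and then reflects via $t\mapsto 2t_0 - t$ using time-reversibility of \eqref{FPU-pert} to get the solution on the full interval. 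The resulting solution differs from $\mathcal{W}$ by $\mathcal{O}(\widetilde{\varepsilon}^{M-1})$ in sup norm because the forcing $H_M$ has that size relative to the natural $\mathcal{O}(\widetilde{\varepsilon})$ amplitude and the hyperbolic/center estimates lose at most the polynomial factor $\widetilde{\varepsilon}^{-M+3}$ that is budgeted into the interval length.

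The main obstacle I expect is step (3), specifically controlling the $R_0$-dynamics over the long time interval in the presence of the Jordan-block (non-hyperbolic-in-the-fast-variable) structure: the unstable eigenvalue is only $\mathcal{O}(\widetilde{\varepsilon})$, so exponential contraction/expansion is weak and the naive Gronwall bound over $|t| \le \widetilde{\varepsilon}^{-M+2}$ would give $e^{\widetilde{\varepsilon}\cdot\widetilde{\varepsilon}^{-M+2}} = e^{\widetilde{\varepsilon}^{-M+3}}$, which is catastrophic. The remedy — and the technical heart of the proof — is to work in the slow time $\tau = \widetilde{\varepsilon} t$ where the linearization about the sech-profile is $\tau$-independent to leading order with $\mathcal{O}(1)$ hyperbolic rates, project onto the stable/unstable/center subspaces of that $\tau$-autonomous operator, and run the fixed point there; then the interval $\tau \in [-\widetilde{\varepsilon}^{-M+3},\widetilde{\varepsilon}^{-M+3}]$ is long but the growth is $e^{\mathcal{O}(\widetilde{\varepsilon})\cdot|\tau|}$ coming only from the $\mathcal{O}(\widetilde{\varepsilon})$ corrections to the limiting linearization, i.e. $e^{\mathcal{O}(\widetilde{\varepsilon}^{-M+4})}$ — wait, this is still too large, so in fact one does \emph{not} try to control the solution uniformly by Gronwall but rather uses the stable-manifold-on-a-half-line construction (the fixed point is set up so that $R_0$ decays toward the origin as $\tau \to -\infty$ within the interval, which is what picks out the one unstable direction and uses reversibility for the other half), so only the center/translational near-zero mode can grow, and it grows only polynomially; pinning it down is then an $\mathcal{O}(\widetilde{\varepsilon}^{M-1})$ Lyapunov–Schmidt solvability condition satisfied thanks to {\bf (Coeff)} giving the non-degenerate (strictly concave, $f_{B,3,3,-1,2}<0$) sech and hence a simple translational kernel. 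So the genuine obstacle is bookkeeping the hierarchy of scales $\widetilde{\varepsilon}^{M-1}$ (amplitude of correction), $\widetilde{\varepsilon}^2$ (size of $W_{h,M}$), $\widetilde{\varepsilon}^{-M+2}$ (interval), $\widetilde{\varepsilon}$ (hyperbolic rate) so that every estimate closes — this is exactly the kind of argument in \cite{Lombardi} and in the spatial-dynamics breather literature, adapted here to the time-periodic setting.
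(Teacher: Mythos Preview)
Your strategy is essentially the paper's: perturb off the truncated homoclinic $\mathcal{W}$, use the center-unstable manifold of the origin to fix the half-line $t\le t_0$, and reflect via {\bf (Rev)} to get the full interval. The paper's bookkeeping is somewhat different from your $R_0,R_h$ scaling, but the architecture is the same.

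There are two concrete points where the paper is sharper than your sketch and which resolve exactly the obstacle you flag in your last paragraph. First, before setting up the perturbation, the paper performs one \emph{further} normal form step on \eqref{Wteq0}--\eqref{Wteqh} to eliminate all bilinear terms that are simultaneously linear in $W_0$ and linear in $W_{h,M}$. After this, $\|N_{0,M}\|\le C(\|W_0\|^3+\|W_0\|^2\|W_{h,M}\|+\|W_{h,M}\|^2)$ and $\|N_{h,M}\|\le C(\|W_0\|^2\|W_{h,M}\|+\|W_{h,M}\|^2)$. Your ``absorb the $\mathcal{O}(\|W_{h,M}\|)$ term on the left'' would, without this, leave a coefficient $\mathcal{O}(\|W_0\|)=\mathcal{O}(\widetilde\varepsilon q(\widetilde\varepsilon t))$ in the $W_{h,M}$-equation, and the resulting bootstrap inequality reads $Y\le CY+\ldots$ with $C=\mathcal{O}(1)$ (not small), which does not close. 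With the extra normal form the coefficient becomes $\mathcal{O}(\|W_0\|^2)=\mathcal{O}(\widetilde\varepsilon^2 q^2)$, and the bootstrap reads $Y\le C\widetilde\varepsilon Y+\ldots$, which closes for small $\widetilde\varepsilon$.

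Second, the catastrophic-Gronwall worry is handled not by passing to slow time and fighting $e^{\mathcal{O}(\widetilde\varepsilon^{-M+3})}$, but by (i) only estimating the \emph{stable} component $\widetilde W_{0,s}$ of the deviation via the variation-of-constants integral with kernel $e^{-\widetilde\varepsilon(t-\tau)}$, and (ii) using that the homoclinic is pointwise bounded by $\widetilde\varepsilon q(\widetilde\varepsilon t)$ with $q\in L^1(\R)$. Then
\[
\int_{-\infty}^{t} e^{-\widetilde\varepsilon(t-\tau)}\,\widetilde\varepsilon^2 q^2(\widetilde\varepsilon\tau)\,d\tau \le C\widetilde\varepsilon,
\qquad
\int_{0}^{t} \widetilde\varepsilon^M q^M(\widetilde\varepsilon\tau)\,d\tau \le C\widetilde\varepsilon^{M-1},
\]
and summing the stable-$W_0$ and $W_{h,M}$ estimates into $Y(t)=\sup_{[-\xi_0,t]}(\|\widetilde W_0\|+\|W_{h,M}\|)$ gives
\[
Y(\xi_0)\le C\widetilde\varepsilon\,Y(\xi_0)+C\widetilde\varepsilon^{-1}Y(\xi_0)^2+C|\xi_0|\,Y(\xi_0)^2+C\widetilde\varepsilon^{M-1},
\]
which yields $Y(\xi_0)\le C\widetilde\varepsilon^{M-1}$ precisely for $\xi_0\le\widetilde C\,\widetilde\varepsilon^{-M+2}$. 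The unstable component of $\widetilde W_0$ never enters the estimate: it is fixed by requiring the solution to lie on the center-unstable manifold (decay as $t\to-\infty$), exactly as you say, and then reversibility supplies the other half. So your Lyapunov--Schmidt / translational-kernel discussion is unnecessary here; the transversal intersection of the $(2N-1)$-dimensional center-unstable manifold with the $N$-dimensional fixed space of reversibility is automatic by dimension count, inherited from the truncated picture.
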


\begin{proof}
To prove persistence of $\mathcal{W}$ if the terms $ H_{M}(W_0) = \mathcal{O}(\|W_0\|^{M}) $ are taken into account, we use again the reversibility. 
Obviously, it is impossible that the one-dimensional 
unstable manifold transversally intersects the $ N $-di\-men\-sional fixed space of reversibility for 
the full system \eqref{Wteq0}-\eqref{Wteqh}. Therefore, it can only be expected that the 
homoclinic solutions persist as solutions 
with small tails for $ |t| \to \infty $. This can rigorously be shown by intersecting the $ 2 N-1 $-dimensional center-unstable 
manifold with the fixed space of reversibility. Obviously this intersection  is a transversal intersection.

On the center-unstable manifold the solutions converge towards the center manifold for $ t \to - \infty $ with some exponential rate. However, the solutions on the center manifold can grow slowly, hence it remains to obtain bounds for such  solutions.
In a first step we apply another normal form transformation 
$$
W_{0,M} = \widetilde{W}_{0,M} + Q_0(W_0,W_{h,M} ), \qquad 
W_{h,M} = \widetilde{W}_{h,M} + Q_h(W_0,W_{h,M} )
$$
to eliminate the bilinear terms which 
are linear in $ W_0 $ and linear in $ W_{h,M} $ from the full system \eqref{Wteq0}-\eqref{Wteqh},
where the 
$$ 
Q_{0,h}(W_0,W_{h,M} ) = \sum_{m \in \Z} Q_{0,h}(W_0,W_{h,M} )[m] e^{2 im\pi t/T} e^{i \pi t/T} 
$$
are bilinear mappings in their arguments.  
With some slight abuse of notation we skip the tildes and reconsider 
the  system \eqref{Wteq0}-\eqref{Wteqh} but now with $ N_{0,M} $ and $ N_{h,M} $
additionally satisfying
\begin{eqnarray*}
\|N_{0,M}(W_0,W_{h,M}) \| & \leq & C (\| W_0 \|^3 + \| W_0 \|^2 \| W_{h,M} \| + \| W_{h,M} \|^2 ) ,\\
\|N_{h,M}(W_0,W_{h,M}) \| & \leq & C (\| W_0 \|^2 \| W_{h,M} \| + \| W_{h,M} \|^2 ) .
\end{eqnarray*}
The transformations are possible due to the spectral assumption {\bf (Spec)}.

In a second step we introduce the deviation $ \widetilde{W}_{0} $ from the homoclinic orbit $ \mathcal{W}_0 $ by
$ W_0 =  \mathcal{W}_0 + \widetilde{W}_{0} $. 
The subsequent estimates on the deviation $ \widetilde{W}_{0} $ have already been carried out in a number of papers,  cf.  \cite{Groves,DPS24}. 
We use the cutoff functions to estimate the solutions on $[-\xi_0,\xi_0]$ with a suitable chosen large $\xi_0$ as $\tilde{\varepsilon} \to 0$.

The homoclinic orbit can be estimated 
pointwise by $ \| \mathcal{W}_0(t)  \| \leq \widetilde{\varepsilon} q (\widetilde{\varepsilon} t) $ with a smooth $ q \in L^1(\R) $.
We denote the stable part  of $ \widetilde{W}_0 $ by $ \widetilde{W}_{0,s} $ and the projection on the stable part by $ \Pi_s $. We find for a large $ \xi_0 $ determined below that
\begin{align*}
\| \widetilde{W}_{0,s}(t) \| & = \left\| \int_{-\xi_0}^t e^{\Lambda_0 (t-\tau)} \Pi_0  \left[ N_{0,M}(\mathcal{W}_0 + \widetilde{W}_{0},W_{h,M}) - 
N_{0,M}(\mathcal{W}_0,0) \right](\tau) d\tau \right\| \\
& \leq C Y(t) \int_{-\infty}^t e^{- \widetilde{\varepsilon} (t-\tau)} \widetilde{\varepsilon}^2 q^2(\widetilde{\varepsilon} \tau) d\tau  + 
C Y(t)^2 \int_{-\infty}^t e^{- \widetilde{\varepsilon} (t-\tau)} d\tau  \\
& \leq C \widetilde{\varepsilon} Y(t) + C \widetilde{\varepsilon}^{-1} Y(t)^2,
\end{align*}
where 
$$ 
Y(t) := \sup_{\tau \in [-\xi_0,t]} (\|   \widetilde{W}_{0}(\tau) \| + \|   W_{h,M}(\tau) \|).
$$
If the solution is in the fixed space of reversibility at $ t = 0 $ we find  
\begin{align*}
\| W_{h,M}(t) \| & = \left\| \int_0^t e^{\Lambda_h (t-\tau)}  (N_{h,M}(W_0,W_{h,M})(\tau) + H_{M}(W_0)(\tau)) d\tau \right\| \\
& \leq C Y(t)  \int_0^t \widetilde{\varepsilon}^2 q^2(\widetilde{\varepsilon} \tau) d\tau 
+ C |t| Y(t)^2  + C  \int_0^t \widetilde{\varepsilon}^{M} q^{M}(\widetilde{\varepsilon} \tau) d\tau  
\\
& \leq C  \widetilde{\varepsilon} Y(t) 
+ C |t| Y(t)^2  + C   \widetilde{\varepsilon}^{M-1} .
\end{align*}
Summarizing the estimates yields 
$$ 
Y(\xi_0)  \leq C \widetilde{\varepsilon} Y(\xi_0) + C \widetilde{\varepsilon}^{-1} Y(\xi_0)^2 + C |\xi_0| Y(\xi_0)^2  + C   \widetilde{\varepsilon}^{M-1}
$$ 
and so $ Y(\xi_0) \leq  C   \varepsilon^{M-1} $ for $ \xi_0 \leq \widetilde{C} \varepsilon^{-M+2} $ and $M \geq 3$. This completes the proof of the theorem in view of the normal form transformations. 
\end{proof}

\section{Proof of Theorem \ref{theorem-2a} and Theorem \ref{theorem-2}}
\label{sec7}

Here we consider the case of small damping $ c> 0 $ where reversibility no longer holds. For consistency of our analysis, we assume that  $ c = \widetilde{c} \widetilde{\varepsilon} $ with $  \widetilde{c}  = \mathcal{O}(1) > 0 $ fixed. With the same analysis as for \eqref{ABeqneu} and \eqref{ampl1}
we end up in $ \{ W_{h,M}  = 0 \}$ at the 
reduced system for the rescaled variables which is now given by 
\begin{subequations}\label{ampl1_damped}
\begin{eqnarray} 
\partial_{\tau}\widetilde{A} & =&  \widetilde{B}  + \mathcal{O}(\widetilde{\varepsilon}), \\
\partial_{\tau}\widetilde{B} & =& \widetilde{A} -  \frac{\widetilde{c}}{\underline{m}} \widetilde{B}+ f_{B,3,3,-1,2} \widetilde{A}^3 + \mathcal{O}(\widetilde{\varepsilon}),
\end{eqnarray}
\end{subequations}
where we used the expression $\gamma = \upsilon - \frac{c}{2\underline{m}} $ for the Floquet exponents of the mapping 
\eqref{mapping} in Section \ref{sec:Floquet}. Ignoring the terms of order $ \mathcal{O}(\widetilde{\varepsilon}) $ we first find two 
fixed points $ (\widetilde{A},\widetilde{B})_{\pm} = (\pm 1/ \sqrt{|f_{B,3,3,-1,2}}|,0) $ if $  f_{B,3,3,-1,2} < 0 $. 
Since the fixed points are hyperbolic they persist as fixed points of the time-$2T$-mapping  
even if the terms of order $ \mathcal{O}(\widetilde{\varepsilon}) $ are included.
This proves Theorem \ref{theorem-2a}.

\begin{figure}[htbp] %  figure placement: here, top, bottom, or page
	\centerline{
		\begin{tabular}{@{}p{0.5\linewidth}@{}p{0.5\linewidth}@{}}
			\rlap{\hspace*{5pt}\raisebox{\dimexpr\ht1-.1\baselineskip}{\bf (a)}}
			\includegraphics[height=6cm]{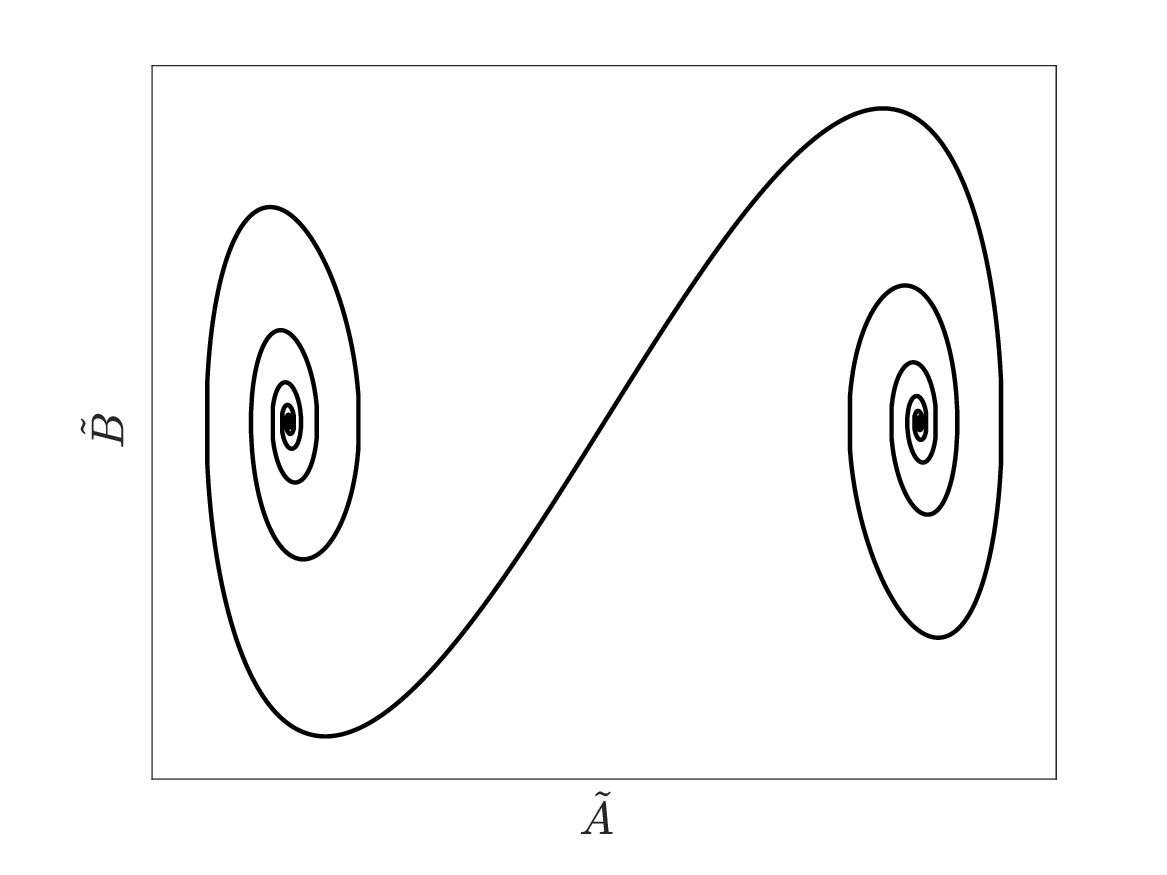}  &
			\rlap{\hspace*{5pt}\raisebox{\dimexpr\ht1-.1\baselineskip}{\bf (b)}}
			\includegraphics[height=6cm]{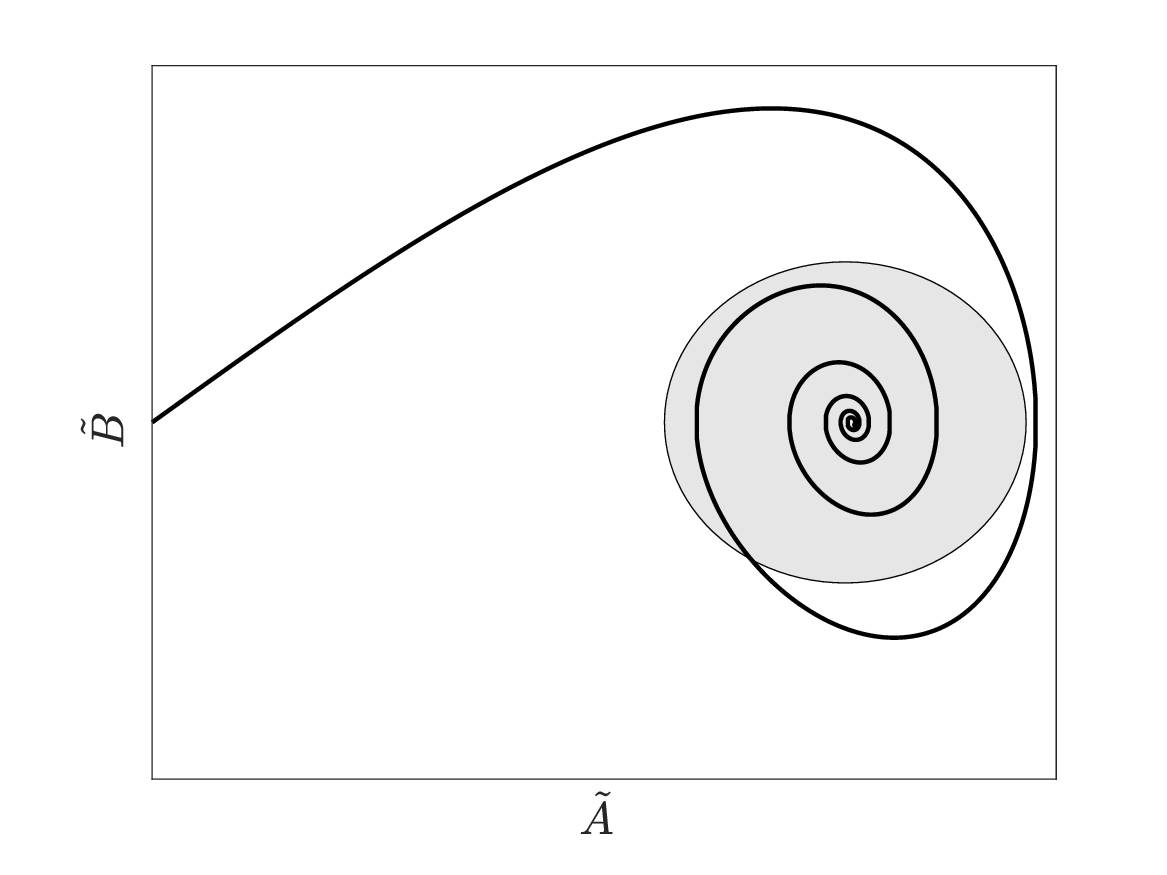} 
		\end{tabular}
	}
   \caption{\textbf{(a)} Two heteroclinic solutions for small $ c > 0 $ of Eq.~\eqref{ampl1_damped} with $\mathcal{O}(\widetilde{\varepsilon})$ terms neglected.
   \textbf{(b)} The heteroclinic solution of Eq.~\eqref{ampl1_damped} in the intersection of the one-dimensional unstable manifold from the zero equilibrium point (in black) and the stable manifold from the positive fixed point (in gray).
   }
   \label{fig:heteroclinic}
\end{figure}

Next we find two heteroclinic solutions
connecting  the origin with the two fixed points $ (\widetilde{A},\widetilde{B})_{\pm} $.
In order to prove the persistence of these heteroclinic solutions we use that  the one-dimensional unstable manifold of the origin transversely  intersects the two-dimensional stable manifold of the two other fixed points 
of the reduced system. These heteroclinic solutions are shown on Figure \ref{fig:heteroclinic}.

To prove the persistence of heteroclinic solutions 
if the terms $ H_{M}(W_0) = \mathcal{O}(|W_0|^{M}) $ are taken into account 
we use that the stable manifold of the two other fixed points is $ 2 N $-dimensional 
and so the one-dimensional unstable manifold of the origin transversally intersects 
the $ 2 N $-dimensional stable manifold of the two anti-periodic solutions 
from Theorem \ref{theorem-2a}. Theorem \ref{theorem-2} is proven by using the transformations 
of Sections \ref{sec3} and \ref{sec4}, with the decomposition 
$W = (W_0, W_h)$.

\section{Multiple-scale analysis and checking the assumption {\bf (Coeff)}}
\label{sec8}

For the system \eqref{FPU-pert} with \eqref{FPUpoly} we already checked 
the spectral condition {\bf (Spec)} used in Theorems \ref{theorem-1}-\ref{theorem-2}. Here we establish the validity of the normal form coefficient condition {\bf (Coeff)}.

We do so by formally deriving the reduced systems \eqref{ampl1} and \eqref{ampl1_damped} 
via a multiple-scale analysis, which will yield an explicit and 
convenient formula for $f_{B,3,3,-1,2}$ after adjusting the notations. 
Truncating \eqref{ampl1_damped} with $\mathcal{O}(\widetilde{\varepsilon})$ terms neglected yields the scalar equation:
\begin{equation}
\label{stat-NLS-eq}
\partial_{\tau}^2\widetilde{A}  + \frac{\widetilde{c}}{\underline{m}} \partial_{\tau} \widetilde{A} =  \widetilde{A} + f_{B,3,3,-1,2} \widetilde{A}^3
\end{equation}

\begin{remark}
\label{rem-normal-form}   
The scalar equation \eqref{stat-NLS-eq} is recovered in Eq. \eqref{amplitude-nonlinear} below 
with the correspondence between $\varepsilon$ and $\widetilde{\varepsilon}$ given by \eqref{varepsilon-def}. Note that the definition $A = \tilde{\varepsilon} \widetilde{A}(\tilde{\varepsilon} t)$, see Eq. \eqref{A-def}, is different from the definition $A(\varepsilon t)$, see Eq. \eqref{A-g-def} below. For Eq. \eqref{stat-NLS-eq}, the amplitude $\widetilde{A}$ is introduced based on the Floquet theorem, the decomposition into subspaces, and the normal form transformations. For Eq. \eqref{amplitude-nonlinear}, the amplitude $A$ is introduced directly by the perturbation expansions in powers of $\varepsilon$, see Eq. \eqref{eq:nls_ansatz}.
\end{remark}

We define as in \eqref{k-varepsilon},
\begin{equation}
\label{bif-k(t)}
k(t) = k_0(t) + \delta \varepsilon^2,
\end{equation}
where $k_0(t)$ is the potential for the bifurcation \eqref{bifurcating-mode} or \eqref{bifurcating-mode-alternative}, $\varepsilon$ is a small parameter for the asymptotic expansion, and $\delta$ is a proper sign factor.

\subsection{Bloch modes of the unperturbed linear problem}

We start with the linear problem for $c = 0$. In order to construct the asymptotic expansion, we define the 
Bloch function  $v(t) = e^{i \ell t} f_j(\ell,t)$ for the $j$-th spectral band $\lambda_j(\ell)$ of $\mathcal{L}_0 v = \lambda v$, 
where $f_j(\ell,t+T) = f_j(\ell,t)$. The parameter $\ell$ is defined in the Brillouin zone $[0,\frac{2\pi}{T})$ 
with $\ell = 0$ and $\ell = \frac{\pi}{T}$ being at the ends of each spectral band corresponding to periodic and anti-periodic solutions respectively. 

Let $L^2_{\rm per}$ be the Hilbert space of $2\pi$-periodic functions with the inner product $\langle \cdot, \cdot \rangle$ and the induced norm $\| \cdot \|_{L^2_{\rm per}}$. The $L^2_{\rm per}$-normalized 
Bloch function $f_j(\ell,t)$ is a $T$-periodic solution 
of the spectral problem 
\begin{equation}
\label{Bloch-1}
\left[ -\underline{m} (\partial_t + i \ell)^2 - k_0(t) \right] f_j(\ell,t) = \lambda_j(\ell) f_j(\ell,t),
\end{equation}
which can be differentiated in $\ell$ as
\begin{align}
& \left[ -\underline{m} (\partial_t + i \ell)^2 - k_0(t) - \lambda_j(\ell) \right] \partial_{\ell} f_j(\ell,t) 
 = 2 i \underline{m} (\partial_t + i \ell) f_j(\ell,t)  + \lambda_j'(\ell) f_j(\ell,t) 
\label{Bloch-2} 
\end{align}
and
\begin{align}
& \left[ -\underline{m} (\partial_t + i \ell)^2 - k_0(t) - \lambda_j(\ell) \right] \partial^2_{\ell} f_j(\ell,t) = 4 i \underline{m} (\partial_t + i \ell) \partial_{\ell} f_j(\ell,t)  \notag \\
& \qquad + 2 \lambda_j'(\ell) \partial_{\ell} f_j(\ell,t) + \left[ \lambda_j''(\ell) -2 \underline{m}\right] f_j(\ell,t),
\label{Bloch-3}
\end{align}
Projecting to $f(\ell,t)$ in $L^2_{\rm per}$ yields from \eqref{Bloch-2} and \eqref{Bloch-3}:
\begin{align}
\lambda_j'(\ell) &= {\color{red} - } 2\underline{m} \langle f_j(\ell,\cdot), i \partial_t f_j(\ell,\cdot) \rangle {\color{red} + } 2 \underline{m} \ell, \label{projection-1} 
\end{align}
and
\begin{align}
\lambda_j''(\ell) & + 2 \lambda_j'(\ell) \langle f_j(\ell,\cdot), \partial_{\ell} f_j(\ell,\cdot) \rangle = -4\underline{m} \langle f_j(\ell,\cdot), i \partial_t \partial_{\ell} f_j(\ell,\cdot) \rangle \notag \\
& \qquad + 4\underline{m} \langle f_j(\ell,\cdot), \partial_{\ell} f_j(\ell,\cdot) \rangle 
\langle f_j(\ell,\cdot),i \partial_t f_j(\ell,\cdot) \rangle + 2 \underline{m}, \label{projection-2}
\end{align}
where the normalization condition $\| f_j(\ell,\cdot) \|_{L^2_{\rm per}} = 1$ has been used.

\begin{remark}
If $\ell_0 = \frac{\pi}{T}$ for the bifurcating mode and the band gap $(\mu_1,\mu_2)$ has a non-zero width, 
then necessarily, $\lambda_j'(\ell_0) = 0$. If the bifurcation \eqref{bifurcating-mode} occurs at $\mu_1$, then 
$j = 1$ and $\lambda_1''(\ell_0) < 0$. See Fig.~\ref{fig:FMs}(c)
for an example. If the bifurcation \eqref{bifurcating-mode-alternative} occurs at $\mu_2$, then $j = 2$ and $\lambda_2''(\ell_0) > 0$.
See Fig.~\ref{fig:simulation_K30_hard}(a).
\end{remark} 

\subsection{Perturbation of the linear problem} \label{sec:NLS_lin}

Let us consider asymptotic solutions of the linear equation 
\begin{equation}
\label{lin-theory}
(\mathcal{L}_0 - K_2 \omega^2(q_{m_0}) - \delta \varepsilon^2 ) \hat{u}(t) = 0,
\end{equation}
which follows from \eqref{Schr} and \eqref{bif-k(t)} at $m = m_0$. As in \eqref{bifurcating-mode}, we take 
$\mu_1 = K_2 \omega^2(q_{m_0})$ for which $\ell_0 = \frac{\pi}{T}$ is selected in the first spectral band $\{ \lambda_1(\ell) \}_{\ell \in [0,\frac{2\pi}{T})}$. The set-up for the second spectral band $\{ \lambda_2(\ell) \}_{\ell \in [0,\frac{2\pi}{T})}$ when $\mu_2 = K_2 \omega^2(q_{m_0})$ as in \eqref{bifurcating-mode-alternative} 
is essentially identical (see
remark \ref{rm:delta}). Expanding 
$$
\hat{u}(t) = A(\varepsilon t) e^{i \ell_0 t} f_1(\ell_0,t) + \varepsilon 
B(\varepsilon t) e^{i \ell_0 t} g_1(\ell_0,t) + \varepsilon^2 e^{i \ell_0 t} C(\epsilon t) h_1(\ell_0,t) + \mathcal{O}(\varepsilon^3),
$$
with $A,B,C$ and $g_1,h_1$ to be determined, 
we obtain for $\mu_1 = K_2 \omega^2(q_{m_0})$ at the order of $\mathcal{O}(\varepsilon) $ that 
\begin{align*}
B \left[ -\underline{m} (\partial_t + i \ell_0)^2 - k_0(t) - \mu_1 \right] g_1(\ell_0,t) = 2m A' (\partial_t + i \ell_0) f_1(\ell_0,t), 
\end{align*}
Since $\lambda_1(\ell_0) = \mu_1$ and $\lambda_1'(\ell_0) = 0$, comparing with \eqref{Bloch-2} yields 
$$
g_1(t) = \partial_{\ell} f_1(\ell_0,t) \quad \mbox{\rm and} \quad 
B(\varepsilon t) = -iA'(\varepsilon t).
$$ 
At the order of $\mathcal{O}(\varepsilon^2)$, we obtain the linear inhomogeneous equation, 
\begin{align}
C \left[ -\underline{m} (\partial_t + i \ell_0)^2 - k_0(t) - \mu_1 \right] h_1(t) &= 
-2\underline{m} i A'' (\partial_t + i \ell_0) \partial_{\ell} f_1(\ell_0,t) 
\notag \\
& \quad + \underline{m} A'' f_1(\ell_0,t) + \delta A f_1(\ell_0,t). 
\label{second-order-eq}
\end{align}
Comparing \eqref{second-order-eq} with \eqref{Bloch-3} yields $h_1(t) = \partial_{\ell}^2 f_1(\ell_0,t)$ and $C(\varepsilon t) = -\frac{1}{2} A''(\varepsilon t)$ if and only if $A(\tau)$, $\tau := \varepsilon t$ satisfies the amplitude equation 
\begin{equation}
\label{amplitude-linear}
\frac{1}{2} \lambda_1''(\ell_0) A''(\tau) + \delta A(\tau) = 0.
\end{equation}
Alternatively, the amplitude equation \eqref{amplitude-linear} can be obtained by projecting the linear inhomogeneous equation \eqref{second-order-eq} to $f_1(\ell_0,t)$ in $L^2_{\rm per}$ and using equations \eqref{projection-1} and \eqref{projection-2} with $\lambda_1'(\ell_0) = 0$.

Since $\delta$ and $\lambda_1''(\ell_0)$ have opposite signs,
$A(\tau)$ of Eq.~\eqref{amplitude-linear} will experience exponential growth with rate 
\begin{equation}\label{varepsilon-again}
\varepsilon \frac{\sqrt{2}}{\sqrt{|\lambda_1''(\ell_0)|}},
\end{equation}
which is the leading order for Eq.~\eqref{varepsilon-def}. This is an approximation
of the real part of the Floquet exponent associated
to the $m_0$-th mode of the
linear problem with $k(t) = k_0(t) + \delta \epsilon^2$, see Eq.~\eqref{gamma-varepsilon}
and Fig.~\ref{fig:simulation_K30}(c) for example.

\begin{remark} \label{rm:delta}
If $\lambda_1''(\ell_0) < 0$ for the bifurcating mode \eqref{bifurcating-mode}, then $\delta = +1$ is selected from the condition that $\mu_1 = K_2 \omega^2(q_{m_0})$ is inside the band gap of the perturbed operator $\mathcal{L} = \mathcal{L}_0 - \delta \varepsilon^2$. On the other hand, if $\lambda_2''(\ell_0) > 0$ for the bifurcating mode  \eqref{bifurcating-mode-alternative}, then $\delta = -1$ is selected from the condition that $\mu_2 = K_2 \omega^2(q_{m_0})$ is inside the band gap of $\mathcal{L}$.
\end{remark}

\subsection{Perturbation of the nonlinear problem}

Let us now consider asymptotic solutions of the nonlinear equation 
\eqref{FPU-pert} with \eqref{bif-k(t)} and $c = \varepsilon \tilde{c}$, 
where $\tilde{c} \geq 0$ is fixed and $\delta = -{\rm sign}(\lambda_1''(\ell_0))$. As in \eqref{bifurcating-mode}, we take $\mu_1 = K_2 \omega^2(q_{m_0})$, for which $\ell_0 = \frac{\pi}{T}$ is selected in the first spectral band $\{ \lambda_1(\ell) \}_{\ell \in [0,\frac{2\pi}{T})}$. To simplify computations, we will use expansions in terms of real-valued functions only. Expanding 
\begin{equation}\label{eq:nls_ansatz}
u_n(t) = \varepsilon U_n^{(1)}(t) + \varepsilon^2 
U_n^{(2)}(t) + \varepsilon^3 U_n^{(3)}(t) + \mathcal{O}(\varepsilon^4),
\end{equation}
we select the leading order in the form 
\begin{equation} 
\label{A-g-def}
U_n^{(1)}(t) =  A(\varepsilon t) g_1(t) \sin(q_{m_0} n),
\end{equation}
where the amplitude $A(\varepsilon t)$ is real according to \eqref{stat-NLS-eq} and the eigenfunction 
	\begin{equation} \label{g1}
	g_1(t) := \left[ e^{i \ell_0 t} f_1(\ell_0,t) + 
	e^{-i \ell_0 t } \bar{f}_1(\ell_0,t) \right] 
\end{equation}
is a real-valued, $T$-anti-periodic solution of $\mathcal{L}_0 g_1 = \mu_1 g_1$ satisfying $g_1(t+T) = - g_1(t)$. At the order of $\mathcal{O}(\varepsilon^2) $ we obtain
\begin{align*}
\mathcal{L}_0 U_n^{(2)} + K_2( U_{n+1}^{(2)} - 2 U_n^{(2)} + U_{n-1}^{(2)}) = H_n^{(2)},
\end{align*}
with 
\begin{align*}
H_n^{(2)} &= 2\underline{m} \partial_{\tau} \partial_t U_n^{(1)} + \tilde{c} \partial_t U^{(1)}_n +  K_3 \left[ (U_{n+1}^{(1)} - U_n^{(1)})^2 - (U_n^{(1)} - U_{n-1}^{(1)})^2 \right],
\end{align*}
where $\tau = \varepsilon t$ and $K_3$ is the coefficient of the quadratic term in \eqref{FPU-pert}. In the explicit form, we obtain 
\begin{align*}
H_n^{(2)} &= [2 \underline{m} A'(\tau)  + \tilde{c}A(\tau)  ]g_1'(t) \sin(q_{m_0} n)  + K_3  A(\tau)^2 g_1(t)^2 F_n^{(2)},
\end{align*}
with
\begin{align*}
F_n^{(2)} &= \left[ \sin(q_{m_0} (n+1)) - \sin(q_{m_0} n) \right]^2 - 
\left[ \sin(q_{m_0}n) - \sin(q_{m_0}(n-1)) \right]^2  \\
&= - 2 \sin(q_{m_0}) (1-\cos(q_{m_0})) \sin(2 q_{m_0} n).
\end{align*}
The solution for $U_n^{(2)}(t)$ can be written in the form
\begin{align*}
U_n^{(2)}(t) &= \left[A'(\tau) + \frac{\tilde{c}}{2\underline{m}} \right]h_1(t) \sin(q_{m_0} n) + 
	K_3 A(\tau)^2 h_2(t) \sin(2 q_{m_0} n),  
\end{align*}
where $h_1$ and $h_2$ are solutions of the linear inhomogeneous equations:
\begin{align}  \label{eq:h1}
(\mathcal{L}_0 - K_2 \omega^2(q_{m_0})) h_1 &= 2 \underline{m} g_1'(t), \\ 
(\mathcal{L}_0 - K_2 \omega^2(2 q_{m_0})) h_2 &= -2 \sin(q_{m_0}) (1 - \cos(q_{m_0})) g_1(t)^2.\label{eq:h2}
\end{align}
It follows from the linear theory that the real, $T$-anti-periodic solution for $h_1(t)$ exists in the form:
$$
h_1(t) = - i e^{i \ell_0 t} \partial_{\ell} f_1(\ell_0,t) + i e^{-i \ell_0 t} \partial_{\ell} \bar{f}_1(\ell_0,t).
$$
There exists a unique $T$-periodic solutions for $h_2(t)$ if and only if the non-resonance condition is met: 
\begin{equation}
\label{non-resonance}
K_2 \omega^2(2 q_{m_0}) \notin \cup_{j = 1}^{\infty} \lambda_j(0).
\end{equation}
This non-resonance condition is satisfied if the spectral assumption ({\bf Spec}) is satisfied. 

At the order of $\mathcal{O}(\varepsilon^3)$, we obtain 
\begin{align*}
\mathcal{L}_0 U_n^{(3)} + K_2( U_{n+1}^{(3)} - 2 U_n^{(3)} + U_{n-1}^{(3)}) = H_n^{(3)},
\end{align*}
with 
\begin{align*}
H_n^{(3)} &= \delta U_n^{(1)} + 2 m \partial_{\tau} \partial_t U_n^{(2)} 
+ m \partial_{\tau}^2 U_n^{(1)} +  \tilde{c} \partial_t U_n^{(2)} + \tilde{c} \partial_{\tau} U_n^{(1)}
\\
& \quad + 2K_3 \left[ (U_{n+1}^{(1)} - U_n^{(1)}) (U_{n+1}^{(2)} - U_n^{(2)}) - (U_n^{(1)} - U_{n-1}^{(1)}) (U_n^{(2)} - U_{n-1}^{(2)}) \right] \\
& \quad 
- K_4 \left[ (U_{n+1}^{(1)} - U_n^{(1)})^3 - (U_n^{(1)} - U_{n-1}^{(1)})^3 \right],
\end{align*}
where $K_4$ is the coefficient of the cubic term in \eqref{FPU-pert}. 

By using Euler's formulas, we obtain 
\begin{align*}
& \left[ \sin(q_{m_0} (n+1)) - \sin(q_{m_0} n) \right]^3 - \left[ \sin(q_{m_0}n) - \sin(q_{m_0}(n-1)) \right]^3  \\
& \qquad = -3 (1-\cos(q_{m_0}))^2 \sin(q_{m_0} n) \\
& \qquad \qquad +\frac{1}{2} 
\left[ 1 - 3 \cos(q_{m_0}) + 3 \cos(2 q_{m_0}) - \cos(3 q_{m_0}) \right] \sin(3 q_{m_0} n)
\end{align*}
and 
\begin{align*}
&  \left[ \sin(q_{m_0} (n+1)) - \sin(q_{m_0} n) \right] 
\left[ \sin(2 q_{m_0} (n+1)) - \sin(2 q_{m_0} n) \right] \\
& \qquad \qquad - \left[ \sin(q_{m_0}n) - \sin(q_{m_0}(n-1)) \right] \left[ \sin(2 q_{m_0}n) - \sin(2 q_{m_0}(n-1)) \right]
 \\
& \qquad = - \left[ 2 \sin(q_{m_0}) - \sin(2 q_{m_0}) \right] \sin(q_{m_0} n) \\
& \qquad \qquad - \left[ \sin(q_{m_0}) + \sin(2 q_{m_0}) - \sin(3 q_{m_0}) \right] \sin(3 q_{m_0} n)
\end{align*}
Hence, we obtain
\begin{align*}
H_n^{(3)} = D_1(t) \sin(q_{m_0} n) +  D_2(t) \sin(2 q_{m_0} n) + D_3(t) \sin(3 q_{m_0} n),
\end{align*}
where we are only interested to write explicitly 
	the coefficient for the bifurcating mode:
\begin{align*}
D_1(t) &= \delta A(\tau) g_1(t) + m A''(\tau) g_1(t) + 2 m A''(\tau) h_1'(t) \\
& \quad { + \tilde{c} \left[ A'(\tau) g_1(t) + 2 A'(\tau) h_1'(t) \right] + \frac{\tilde{c}^2}{2 \underline{m}} A(\tau) h_1'(t) } \\
& \quad + 3 K_4 (1-\cos(q_{m_0}))^2 
A(\tau)^3 g_1(t)^3 - 2 K_3^2 \left[ 2 \sin(q_{m_0}) - \sin(2 q_{m_0}) \right] 
A(\tau)^3 g_1(t) h_2(t).
\end{align*}
Projecting $D_1(t)$  to $g_1(t)$
gives the amplitude equation for $A(\tau)$:
\begin{equation}
\label{amplitude-nonlinear}
\frac{1}{2} \lambda_1''(\ell_0) 
\left[ A''(\tau) + \frac{\tilde{c}}{\underline{m}} A'(\tau) 
	+ \frac{\tilde{c}^2}{4 \underline{m}^2} A(\tau) \right] 
+ \left[ \delta - \frac{\tilde{c}^2}{4 \underline{m}} \right] A(\tau) + \chi 
A(\tau)^3 = 0, 
\end{equation}
where 
\begin{equation} \label{lambda_pp_again}
\lambda_1''(\ell_0) = 2m + 4m \frac{\langle g_1, h_1' \rangle}{\| g_1 \|^2}    
\end{equation}
and
\begin{align} \label{eq:chi}
\chi &= 3 K_4 (1-\cos(q_{m_0}))^2  \frac{\langle g_1^2, g_1^2 \rangle}{\| g_1 \|^2}
- 2K_3^2 \left[ 2 \sin(q_{m_0}) - \sin(2 q_{m_0}) \right] \frac{\langle g_1^2, h_2 \rangle}{\| g_1\|^2}.
\end{align}
Since the linear part of \eqref{amplitude-nonlinear} should be identical to the linear amplitude equation \eqref{amplitude-linear}, the new formula for $\lambda_1''(\ell_0)$ must be identical to the previous equation \eqref{projection-2} with $\lambda_1'(\ell_0) = 0$. The expression for $\chi$ is defined in terms of real quantities only.

\begin{remark}
Equation \eqref{amplitude-nonlinear}  for $\tilde{c}= 0$ is analogous to the stationary NLS equation that can be derived in the context of spatially periodic media for the description of breathers \cite{Huang2}. A similar equation
was derived in \cite{Superluminal} for a (space-time continuous) photonic time crystal. Eq.~\eqref{amplitude-nonlinear} is equivalent to \eqref{stat-NLS-eq} with the correspondence \eqref{varepsilon-def} and the appropriate definitions of amplitudes $A$ and $\widetilde{A}$. 
The coefficient $f_{B,3,3,-1,2}$ is constant proportional to $\chi/\lambda_1''(\ell_0)$.
\end{remark}

\begin{remark}
	The coefficient in front of the second derivative in the amplitude equation 
	\eqref{amplitude-nonlinear} comes from an expansion of the imaginary parts
	of the spectral curves at the spectral gaps, see Fig. \ref{fig:FMs}(c). The coefficient in front of the second derivative
	changes sign at every spectral boundary. Since the
	coefficient in front of the
	cubic coefficient in \eqref{amplitude-nonlinear} does not change
	sign, the homoclinic and heteroclinic solutions exist as bifurcating solutions at every spectral gap associated with the anti-periodic eigenfunctions. In other words, if $\chi < 0$, then we pick 
	the bifurcating mode at \eqref{bifurcating-mode}  with 
	$\lambda_1''(\ell_0) < 0$ and $\delta = +1$. If $\chi > 0$, 
	then we pick the bifurcating mode at \eqref{bifurcating-mode-alternative} 
	with $\lambda_2''(\ell_0) > 0$ and  $\delta = -1$.
\end{remark}

\section{Comparison with numerical simulations}
\label{secdisc}

We now conduct a number of numerical simulations to illustrate the main results of the paper. We start with the simplest case, and work up in complexity.

Equation \eqref{amplitude-nonlinear} with $\tilde{c}=0$ and $\delta = -{\rm sign}(\lambda_1''(\ell_0)) = -{\rm sign}(\chi)$ has the following homoclinic solution
\begin{equation}
    A(\tau) = \sqrt{ \frac{2}{|\chi|} } \sech\left(\sqrt{\frac{2}{|\lambda''_1(\ell_0)|}} \tau\right)
    \label{A-soliton}
\end{equation}
See Fig.~\ref{fig:homoclinic}(a) for an example plot of this solution in the $(A,A')$
phase plane. Returning to ansatz Eq.~\eqref{eq:nls_ansatz}, we have the following leading order approximation 
in terms of the original lattice variables,
\begin{equation}\label{eq:soliton}
    u_n(t) =   \epsilon \sqrt{ \frac{2}{|\chi|} } \sech\left(\sqrt{\frac{2}{|\lambda''_1(\ell_0)|}} \varepsilon t \right)   g_1(t) \sin(q_{m_0} n).   
\end{equation}

To make practical use of this approximation, 
the first step is to identify
the bifurcation value $k_0$ (namely the critical modulation amplitude values $k_a^0$ and $k_b^0$) and critical mode number $m_0$
such that $\mathrm{trace}(J)=-2$ where
$J$ is the monodromy matrix defined in Eq.~\eqref{eq:monodromy}.
This corresponds
to the bifurcation scenario shown in Fig.~\ref{fig:FMs}. In this case,
the Floquet exponent corresponding to $m_0$ will be purely imaginary and will be of the form
$i \ell_0 = \frac{i \pi}{T} $. The corresponding Bloch mode $e^{i \ell_0 t} f_1(\ell_0,t)$ is obtained by solving Eq.~\eqref{Bloch-1}. This can be done
explicitly, see Sec.~\ref{sec:Floquet} or the appendix of \cite{chong2}
for details. Next, we compute $\lambda''_1(\ell_0)$
using Eq.~\eqref{projection-2} with $j=1$ which depends only on $f_1$
and its derivatives, which can also be computed explicitly. Equivalently,
one can determine $\lambda''_1(\ell_0)$ using Eq.~\eqref{lambda_pp_again}.

\subsection{Examples with $c=0$ and $K_3=0$}

The nonlinear coefficient $\chi$ can be computed from $g_1$ if $K_3 = 0$,
i.e., if there is no quadratic nonlinearity (the case of $K_3 \neq 0$ is discussed below). To compute $\chi$ in this case, we substitute $g_1(t)= 2 {\rm Re}\left( e^{i \ell_0 t} f_1(\ell_0,t) \right) $ into Eq.~\eqref{eq:chi} and evaluate.

As our first example, we chose parameters that correspond to the spectral
picture in Fig.~\ref{fig:FMs} and $K_3=0$ and $K_4 = -0.8$.
In this case the critical mode
$m_0 = 3$ lies at the top of the first spectral band, namely $(\ell,\lambda) = (\pi/T,\mu_1)$.
This demonstrates that the spectral condition \textbf{(Spec)} is satisfied 
and we choose $\delta = +1$. It can be seen from Fig.~\ref{fig:FMs}(c), or via direct calculation,
that $\lambda''(\ell_0) < 0$. By choosing $K_4< 0 $, we have that
$\chi<0$, and thus \textbf{(Coeff)} is satisfied. 

\begin{figure}[htbp] %  figure placement: here, top, bottom, or page
	\centerline{
		\begin{tabular}{@{}p{0.33\linewidth}@{}p{0.33\linewidth}@{}p{0.33\linewidth}@{} }
			\rlap{\hspace*{5pt}\raisebox{\dimexpr\ht1-.1\baselineskip}{\bf (a)}}
			\includegraphics[height=4cm]{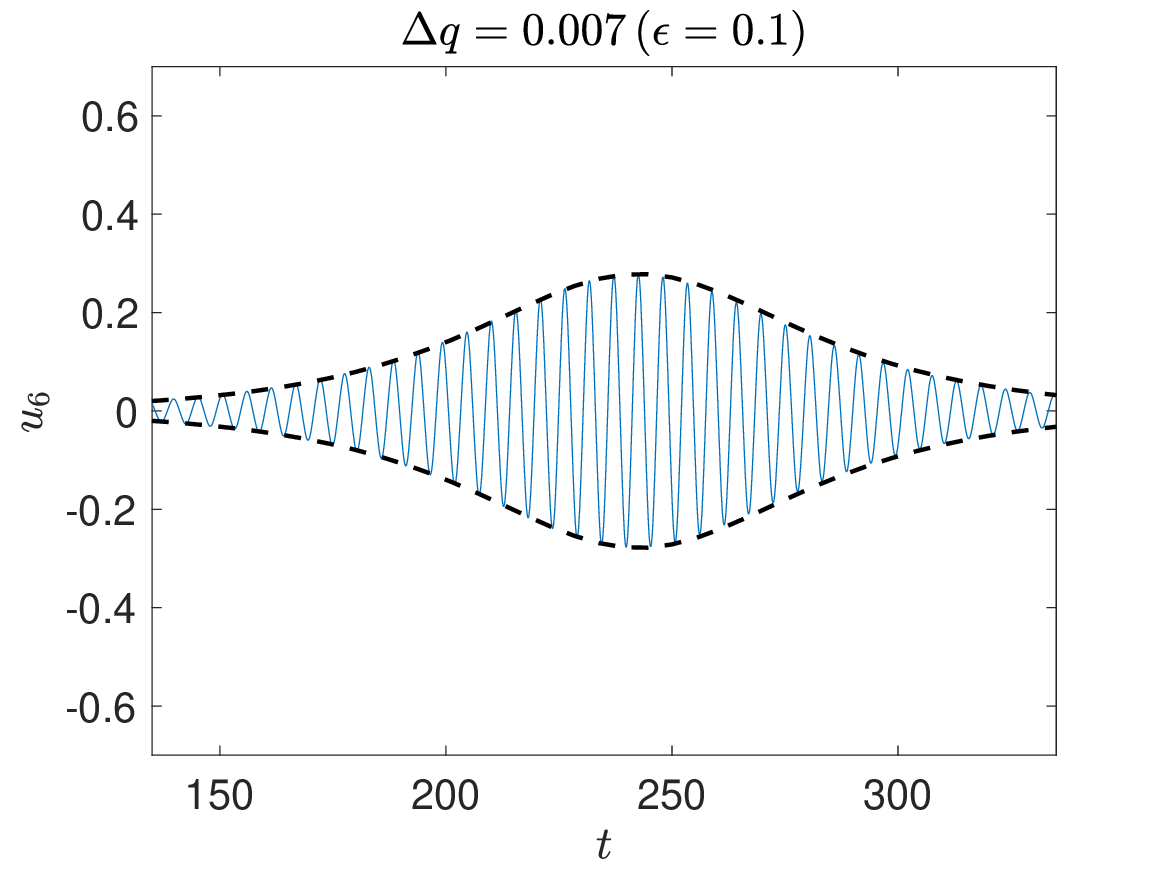}  &
			\rlap{\hspace*{5pt}\raisebox{\dimexpr\ht1-.1\baselineskip}{\bf (b)}}
			\includegraphics[height=4cm]{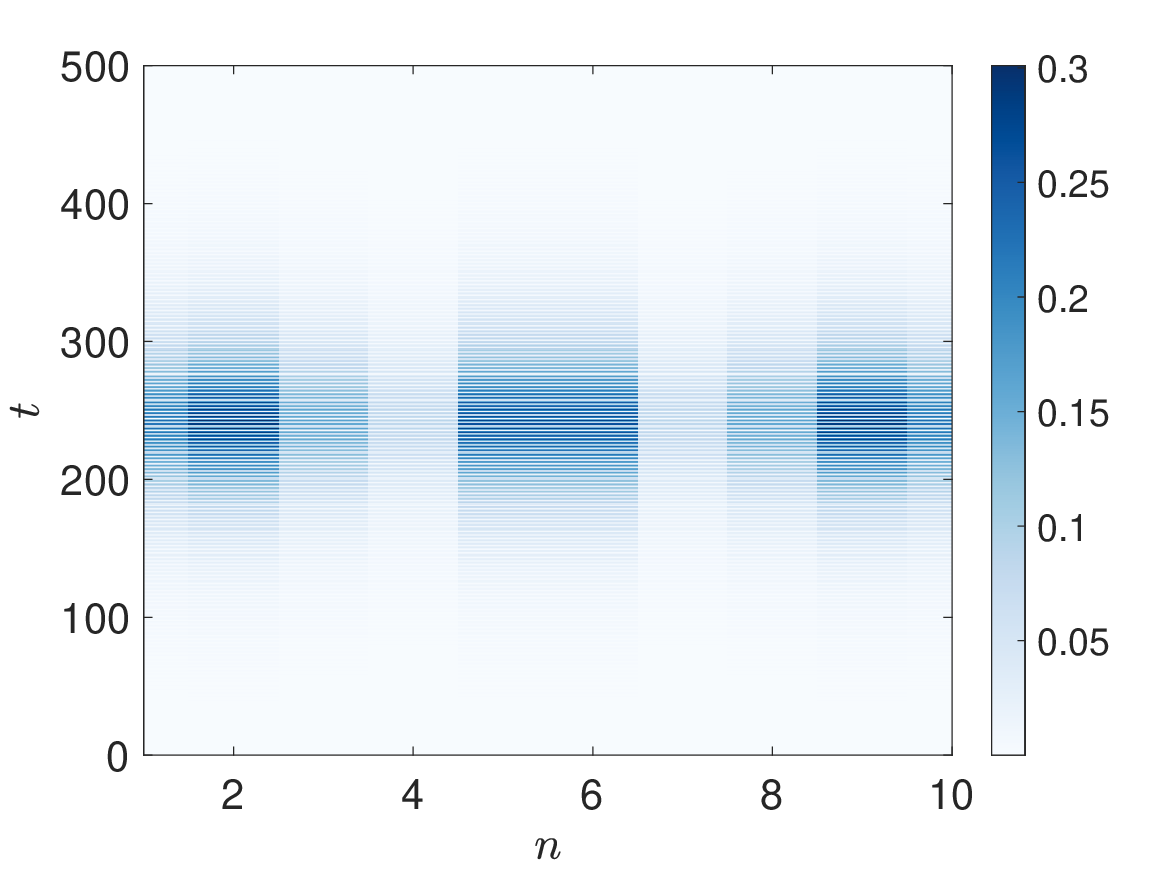} &
			\rlap{\hspace*{5pt}\raisebox{\dimexpr\ht1-.1\baselineskip}{\bf (c)}}
			\includegraphics[height=4cm]{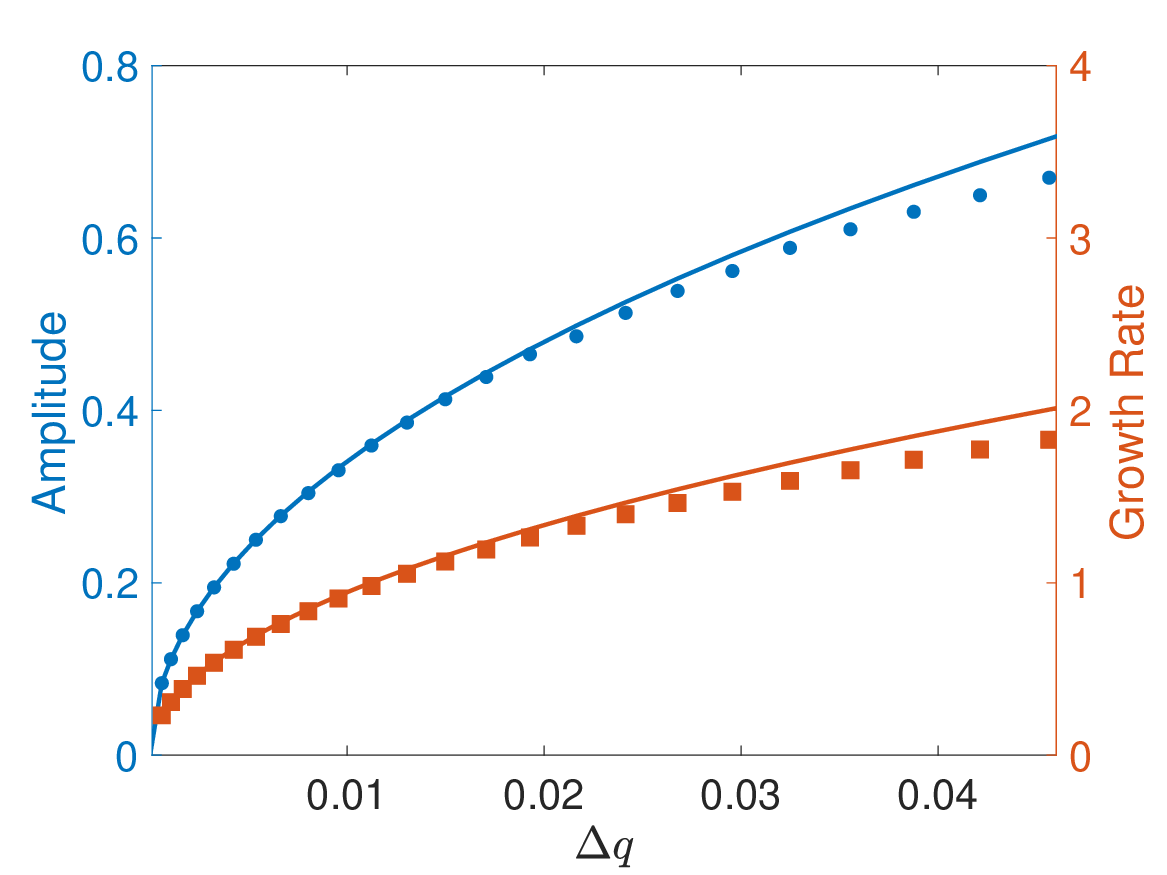} 
		\end{tabular}
	}
	
	\caption{A generalized $q$-gap breather bifurcating from $\mu_1$. The parameter values are $c=0$, $T=1/0.37$, $K_3=0$, $K_4 = -0.8$ and $K_2 = \underline{m} = 1$. The critical modulation amplitude parameters are $k_a^0 = 0.5$ and $k_b^0 =0.79$. The critical mode number 
		is $m_0=3$, which lies at the top of the first spectral band, namely $\lambda(q_3) = \mu_1$, (see Fig.~\ref{fig:FMs}(a)).
		\textbf{(a)} Numerical simulation with initial value $u_n(0) = 10^{-4} \sin(q_3 n)$ and $\epsilon = 0.1$. The displacement of the 6th particle is shown as a function of time. The modulation amplitude is $k_a = k_a^0 + \epsilon^2 = 0.06$, $k_b = k_b^0 + \epsilon^2 = 0.8$. The dashed line shows an approximation of the envelope given by Eq.~\eqref{eq:soliton}
		with $\chi =  -0.6530$, $\lambda''(\ell_0) = -21.0222$ and $\epsilon = 0.1$.
		For this value of $\epsilon$, the distance of the wavenumber to the edge of
		the gap is $\Delta q = q_3 - q_\ell = 0.007$.
		\textbf{(b)} Intensity plot of the solution shown in panel (a). Color intensity corresponds to $|u_n|$.
		\textbf{(c)} Plot
		of the amplitude of the breather ($\mathrm{max}_t u_6(t)$) for the numerical simulation (blue dots)
		and prediction based on Eq.~\eqref{eq:soliton} (blue line) as a function of $\Delta q$.
		The real part of the Floquet exponent corresponding to $q_3$ (solid red squares) and asymptotic approximation \eqref{varepsilon-again} (red line) are also shown, which indicate the growth (decay) rate of the breather.
	}
	\label{fig:simulation_K30}
\end{figure}

To generate the generalized $q$-gap breather, we keep all parameters fixed,
but select $\epsilon=0.1$ and $k_a = k_a^0 + \epsilon^2 = 0.06$, $k_b = k_b^0 + \epsilon^2 = 0.8$. With these parameter values, the $m_0=3$ mode lies
in the spectral gap. The corresponding Floquet multipliers and exponents and are shown
in Fig.~\ref{fig:breather_idea}(a,b). We initialize the numerical simulation with
$$
u_n(0) = 10^{-4}\sin(q_3 n) \quad \mbox{\rm and} \quad \dot{u}_n=0.
$$ 
For initial data
with such small amplitude, the dynamics will initially be nearly linear,
and hence the solution will grow exponentially with rate given
by Re($\gamma$), which is the real part the Floquet exponent
associated to mode $m=3$ (see the larger black dot in Fig.~\ref{fig:breather_idea}(a)).
According to Eq.~\eqref{amplitude-linear}, an approximation of this growth rate
is given by \eqref{varepsilon-again} with $\epsilon = 0.1$. As the amplitude increases in the dynamic evolution, the affect of the
nonlinearity comes into play, which will cause the solution
to experience decay, such that the resulting waveform is localized
in time. The time series of the $u_6(t)$ node is shown in Fig.~\ref{fig:simulation_K30}(a). The temporal localization occurs
uniformly throughout the lattice, as seen in the intensity plot of
Fig.~\ref{fig:simulation_K30}(b). By construction, the wavenumber $q_3$
lies in a wavenumber bandgap. Thus, the solution shown in Fig.~\ref{fig:simulation_K30}(a,b) is a generalized $q$-gap breather. The analytical prediction based on Eq.~\eqref{eq:soliton}
is shown as the dashed-line in panel (a). For the sake of clarity,
only the envelope of the approximation is shown, which is simply
a plot of the local maximums (and minimums) of Eq.~\eqref{eq:soliton}.

\begin{figure}[htbp]
	\centerline{
		\begin{tabular}{@{}p{0.5\linewidth}@{}p{0.5\linewidth}@{}}
			\rlap{\hspace*{5pt}\raisebox{\dimexpr\ht1-.1\baselineskip}{\bf (a)}}
			\includegraphics[height=6cm]{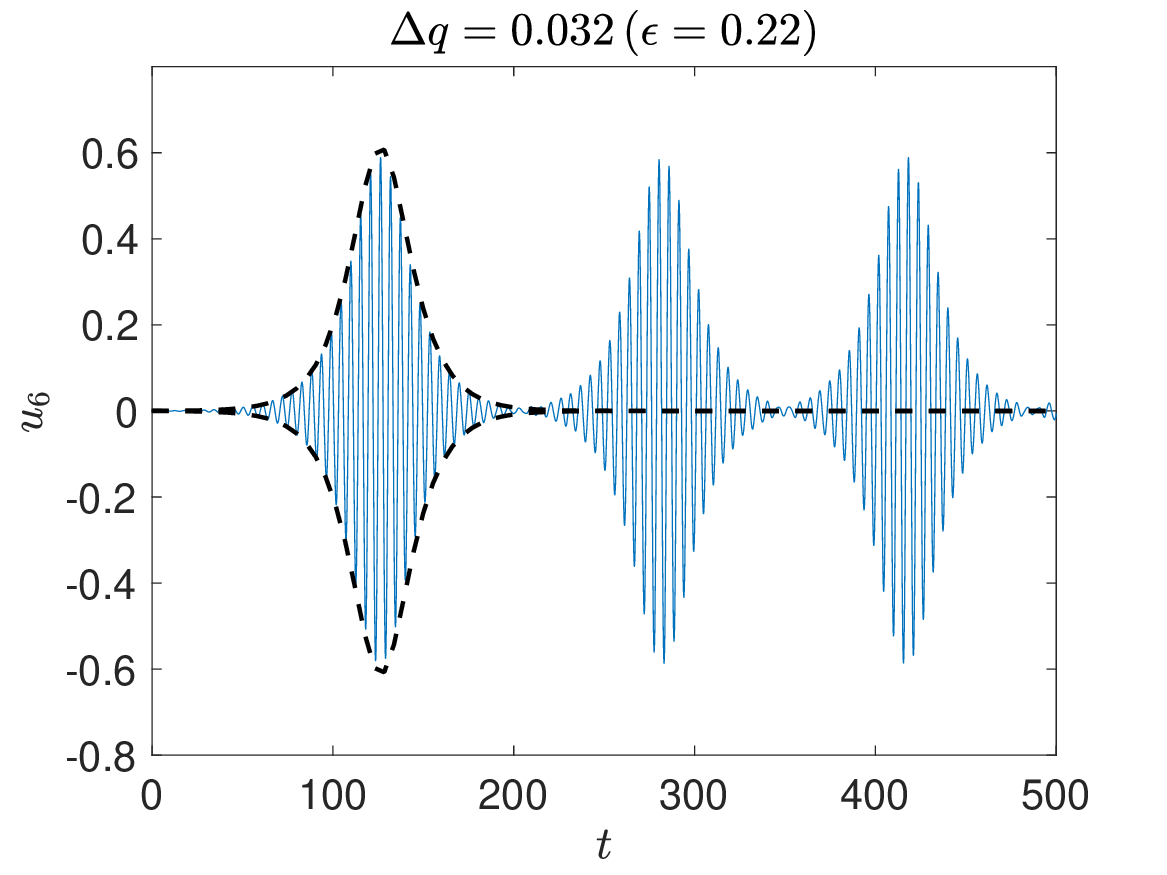}  &
			\rlap{\hspace*{5pt}\raisebox{\dimexpr\ht1-.1\baselineskip}{\bf (b)}}
			\includegraphics[height=6cm]{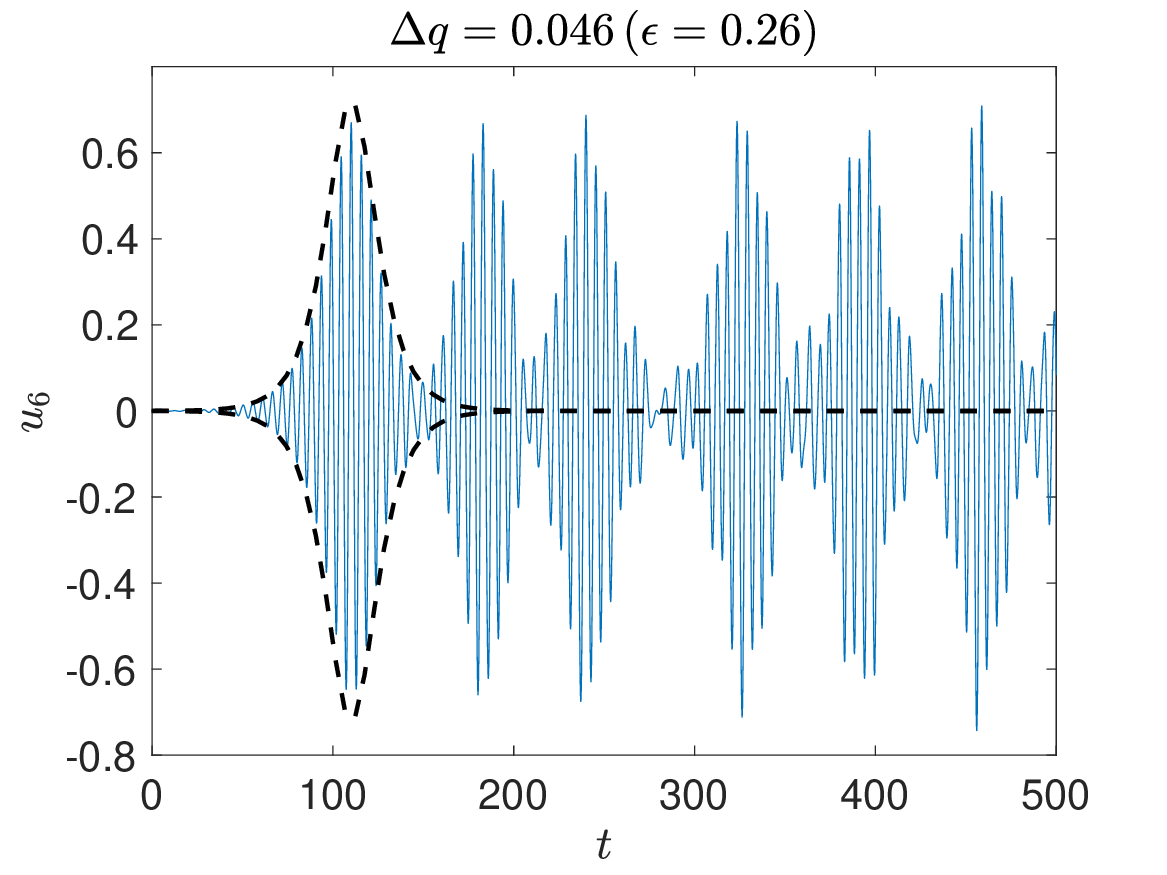} 
		\end{tabular}
	}
	\caption{Numerical simulation with all parameters as in
		Fig.~\ref{fig:simulation_K30} but larger values of $\epsilon$. 
		\textbf{(a)} With $\Delta q = 0.032$ ($\epsilon = 0.22)$
		a temporally localized structure is initially observed, but
		over a longer time interval a recurrence of breathing patterns emerges.  
		The dashed line shows an approximation of the envelope given by Eq.~\eqref{eq:soliton}.
		\textbf{(b)} With $\Delta q = 0.046$ ($\epsilon = 0.26)$ the second breather like structure
		emerges before the temporal localization of the first pulse can be achieved.
		The envelope approximation given by Eq.~\eqref{eq:soliton} is still quite good
		until the breakdown of the temporal localization.
	}
	\label{fig:longtime}
\end{figure}

For $\epsilon=0.1$, the distance of the wavenumber to the edge of the gap is $\Delta q = q_\ell - q_3 = 0.007$, where $q_\ell$ is the wavenumber
at the (left) edge of the gap. Recall from Eq.~\eqref{eq:Dq} that $\Delta q = \mathcal{O}(\epsilon^2)$.
Since the amplitude of the breather is $\mathcal{O}(\epsilon)$, see Eq.~\eqref{eq:soliton},
the amplitude grows like $\mathcal{O}(\sqrt{\Delta q})$. This observation was made numerically
and experimentally in \cite{chong2}, which we have now proved. 
This amplitude trend is consistent with the trend found for discrete breathers in space-periodic systems where it is well known that the breather amplitude grows like $\mathcal{O}\left(\sqrt{\Delta \omega}\right)$,
 where $\Delta \omega$ is the difference between the breather frequency 
 and the edge of the frequency spectrum \cite{flach_discrete_2008}.
Using \eqref{eq:Dq} and Eq.~\eqref{eq:soliton}
allows us to obtain an analytical prediction of the breather amplitude dependence
of the distance to the band edge, see Fig.~\ref{fig:simulation_K30}(c).
For small values of $\Delta q$ (and hence $\epsilon$) the agreement is very good.
The growth parameter gives an indication of how wide or narrow the breather will be,
with larger growth parameters corresponding to narrower solutions. The  prediction from the linear theory
is given by the real part of the Floquet exponent corresponding to mode $m_0$, and the
approximation from the perturbation analysis is \eqref{varepsilon-again}. 
A comparison of these two quantities are shown as the red markers and lines, respectively
of Fig.~\ref{fig:simulation_K30}(c). The trends in Fig.~\ref{fig:simulation_K30}(c)
demonstrate that the $q$-gap breathers becomes larger in amplitude and more narrow
as the wavenumber goes deeper into the gap.

In Fig.~\ref{fig:simulation_K30}(c) the $q$-gap breathers are generated
up until $\Delta q = 0.042$ (which corresponds to $\epsilon = 0.25$).
For $\Delta q = 0$, the width of the wavenumber bandgap is
$q_r - q_\ell \approx 0.28$ (where $q_r$, $q_\ell$ are the right
and left edges of the bandgap, respectively). Thus, the branch of solutions
shown Fig.~\ref{fig:simulation_K30}(c) extends to roughly 15\% of the width of the bandgap.
For $\Delta q > 0.042$ we did not observe a coherent temporal localization. 
Indeed, for all the breathers observed numerically, the localization is obtained for a finite interval of time. For longer time simulations the amplitude of the breather can grow again (leading to a repeated appearance of breathers), see Fig.~\ref{fig:longtime}(a). Similar observations
have been made for k-gap solitons in photonic systems \cite{Superluminal}.
While Theorem \ref{theorem-1} guarantees that a temporally localized structures exists
over a finite temporal interval, there is no statement about the dynamics beyond
this interval. The numerical simulations suggest the tail of the breather
can experience repeated growth. We observed as $\Delta q$ becomes larger,
the time between consecutive peaks of the pulses becomes smaller. In other words,
the emergence of the ``second" breather occurs faster as $\Delta q$ becomes larger.
Thus, for sufficiently large $\Delta q$ the structure is not temporally localized
since the second breather emerges ``too soon", see Fig.~\ref{fig:longtime}(b). This is
the reason why we only show $\Delta q \leq 0.042$ in Fig.~\ref{fig:simulation_K30}(c).

\begin{figure}[htbp]
  \centerline{
   \begin{tabular}{@{}p{0.33\linewidth}@{}p{0.33\linewidth}@{}p{0.33\linewidth}@{} }
     \rlap{\hspace*{5pt}\raisebox{\dimexpr\ht1-.1\baselineskip}{\bf (a)}}
 \includegraphics[height=4cm]{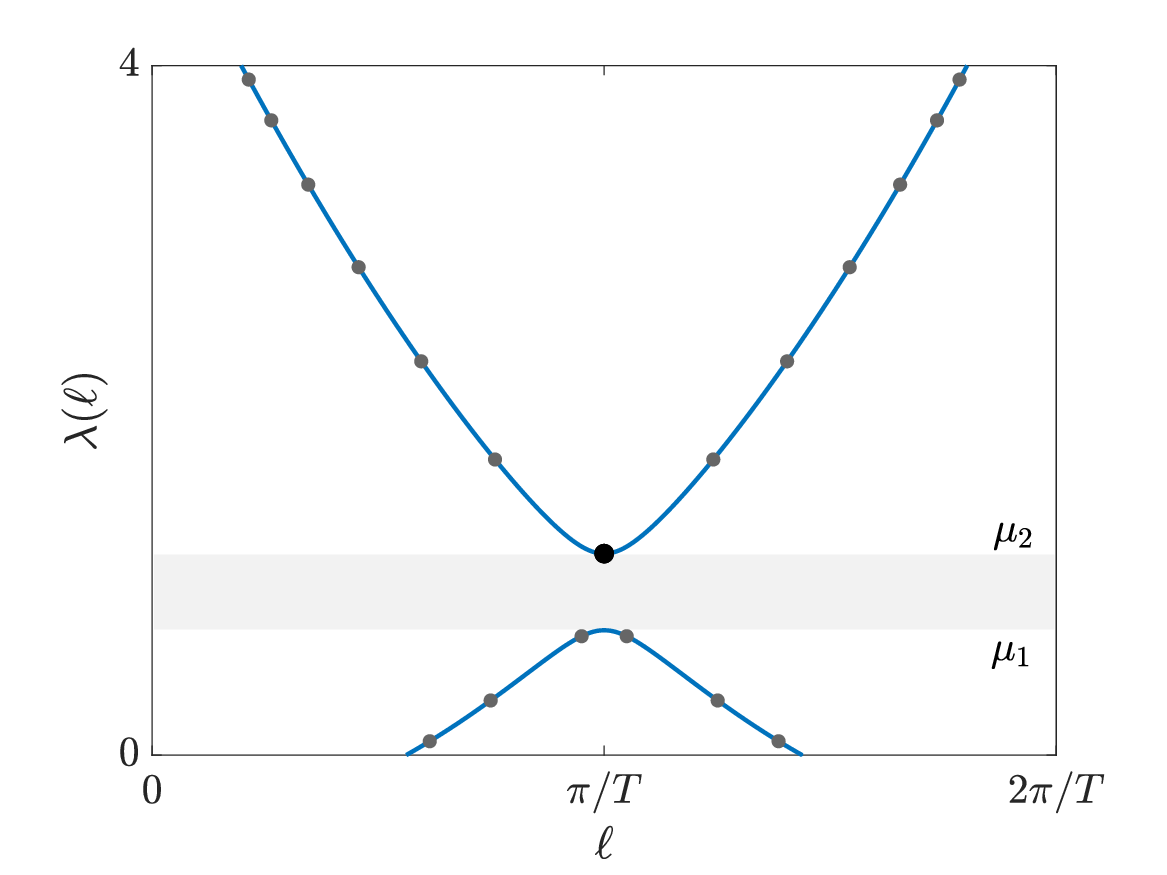}  &
  \rlap{\hspace*{5pt}\raisebox{\dimexpr\ht1-.1\baselineskip}{\bf (b)}}
 \includegraphics[height=4cm]{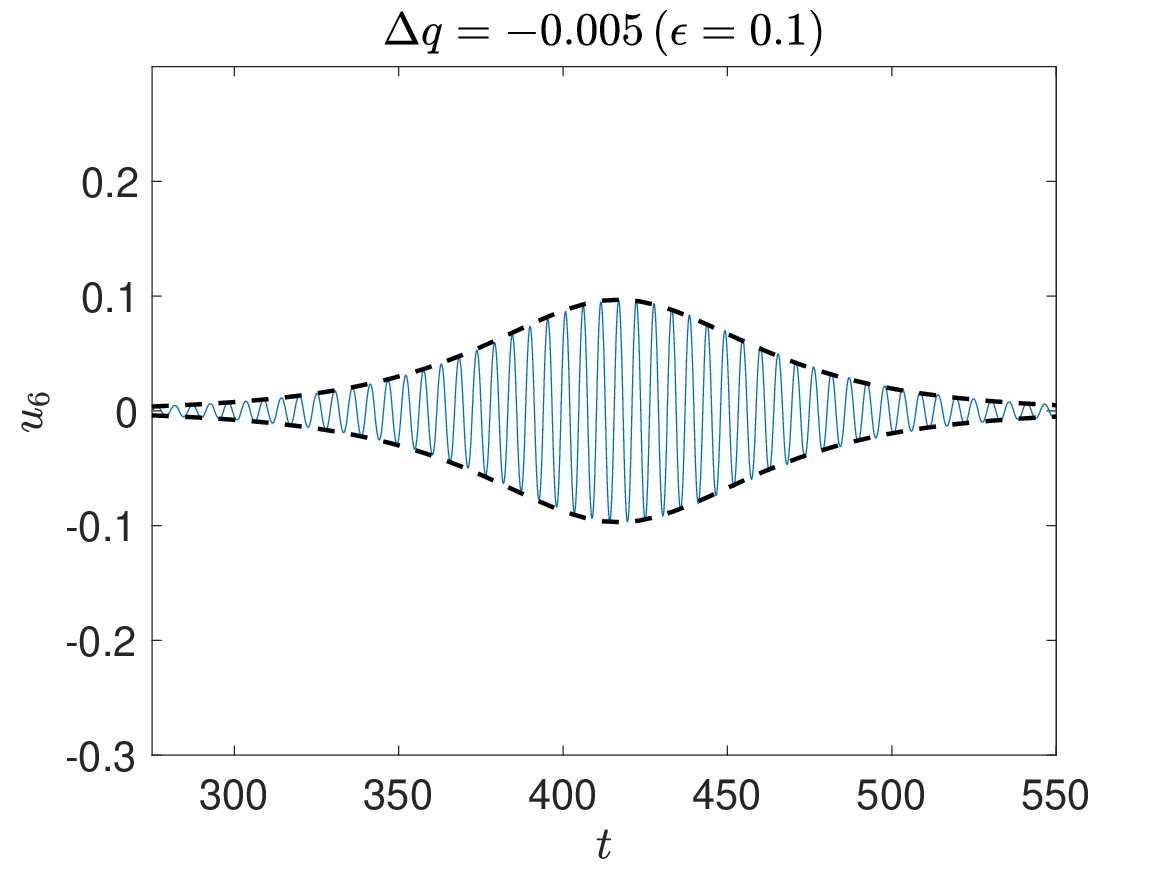} &
   \rlap{\hspace*{5pt}\raisebox{\dimexpr\ht1-.1\baselineskip}{\bf (c)}}
 \includegraphics[height=4cm]{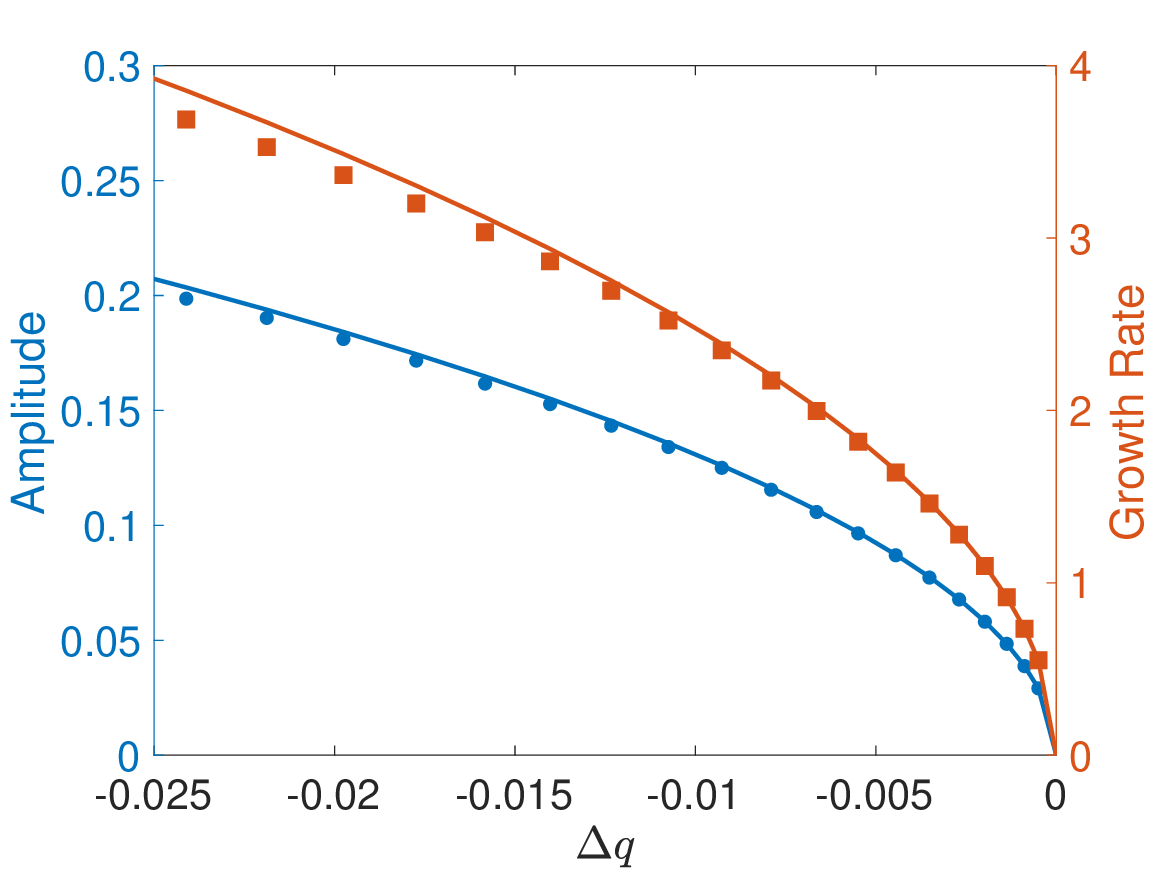} 
  \end{tabular}
  }

   \caption{\textbf{(a)} Spectral bands (blue curves) of the Schr\"odinger operator as a function of $\ell$, the imaginary part
   of the Floquet exponent. The gray dots shown the corresponding values
   in the finite lattice with $N=10$. The parameter values are $c=0$, $T=1/0.37$ and $K_2 = \underline{m} = 1$. The critical modulation amplitude parameters are $k_a^0 = 0.5$ and $k_b^0 =0.7487$. The critical mode number 
   is $m_0=4$, which lies at the bottom of the second spectral band, namely $\lambda(\ell(q_4)) = \mu_2$ (see larger black marker).
    The eigenvalues $\mu_1$ and $\mu_2$ define
   the edges of the band gap, shown as the gray shaded region. 
   \textbf{(b)} Example $q$-gap breather bifurcating from $\mu_2$ with nonlinear coefficients $K_3=0$ and $K_4 = 0.8$.
  The initial value is $u_n(0) = 10^{-4} \sin(q_4 n)$ and $\epsilon = 0.1$. The displacement of the 6th particle is shown as a function of time. The modulation amplitude is $k_a = k_a^0 + \epsilon^2 = 0.06$, $k_b = k_b^0 + \epsilon^2 = 0.7587$. The dashed line shows an approximation of the envelope given by Eq.~\eqref{eq:soliton}
   with $\chi =  1.7707$, $\lambda''(\ell_0) = 26.3821$ and $\epsilon = 0.1$.
For this value of $\epsilon$, the distance of the wavenumber to the edge of
the gap is $\Delta q = q_4 - q_r = - 0.005$.
  % \textbf{(b)} Same as panel (a) with $\epsilon = 0.22$. 
   \textbf{(c)} Plot
   of the amplitude of the breather ($\mathrm{max}_t u_6(t)$) for the numerical simulation (blue dots)
   and prediction based on Eq.~\eqref{eq:soliton} (blue line) as a function of $\Delta q$.
   The real part of the Floquet exponent corresponding to $q_4$ (solid red squares) and asymptotic approximation
   $\epsilon \sqrt{\delta \lambda''(\ell_0)/2} $ (red line) are also shown, which indicate the growth (decay) rate of the q-gap breather.
  }
   \label{fig:simulation_K30_hard}
\end{figure}

Next, we consider an example where the breathers bifurcate from $\mu_2$. The spectral bands
corresponding to $c=0$, $T=1/0.37$, $K_2 = \underline{m} = 1$, $k_a^0 = 0.5$ and $k_b^0 =0.7487$
are shown in Fig.~\ref{fig:simulation_K30_hard}(a). For these parameter values
the mode $m_0=4$ lies at the bottom of the second spectral band, namely $\lambda(q_4) = \mu_2$.
Notice that the concavity of the second spectral band is opposite of the first band,
namely $\lambda_2''(\ell_0) > 0 $. Thus, in order to satisfy the \textbf{(Coeff)} condition,
we require $\chi > 0$. If $K_3 = 0$ this implies that $K_4 > 0$ and the sign parameter
is now $\delta = -1.$ Thus, for the next numerical simulations,
we fix $K_3=0$ and $K_4 = 0.8.$ The approximation given in Eq.~\eqref{eq:soliton}
is identical in this case, but we replace $\lambda_1''(\ell)$ with $\lambda_2''(\ell)$, and likewise for the underlying Bloch modes
(where $f_1$ should be replaced by $f_2$, etc.).

Figure~\ref{fig:simulation_K30_hard}(b) shows
an example of the generalized $q$-gap breather with $\epsilon = 0.1$, with corresponding envelope
prediction given by Eq.~\eqref{eq:soliton}. Qualitatively, the results are similar to the
example shown in Fig.~\ref{fig:simulation_K30}(a). Figure~\ref{fig:simulation_K30_hard}(c)
shows the dependence of the breather amplitude and growth rate on the parameter
$\Delta q = q_r - q_{m_0}$. Note, since the breather is bifurcating from the right
edge of the wavenumber bandgap, the quantity $\Delta q$ will be negative.
The breather amplitude grows like $\mathcal{O}(\sqrt{|\Delta q|})$.

\subsection{Examples with $c=0$ and $K_3\neq0$}

Here we will consider $K_3 \neq0$. In particular, we will chose
values of the nonlinear coefficients in \eqref{Taylor} that correspond to the modulated magnetic lattice
described in Sec.~\ref{sec:model} so that the results obtained here are directly relevant
for the experimental set-up described in \cite{chong2}. In the re-scaled variables the
interaction coefficients are $K_2= K_3 = 1$ and $K_4 = 0.8$. Since $K_3 \neq0$ the sign of $\chi$ must be computed
directly to see if the relevant eigenvalue to bifurcate from is $\mu_1$ or $\mu_2$.
$\chi$ will depend on the function $h_2(t)$,
which we can obtain by solving Eq.~\eqref{eq:h2}. It will be convenient to
estimate $h_2(t)$ numerically under the constraint that $h_2(t)$ is $T$ periodic,
which we achieve using a shooting method. In particular, we apply Newton iterations
on the map $F(h^0) =h_2(0;h^0) - h_2(T;h^0)$ where $h_2(t;h^0)$ is the solution of
Eq.~\eqref{eq:h2} with initial condition $h^0 = (h_2(0), \dot{h}_2(0) )^T$.
The Jacobian of the map $F$ is simply
$I-V(T)$, where $I$ is the 2x2 identity matrix and $V(T)$ is the solution to
the variational equation $\dot{V} = \frac{df}{dh} V$ with initial value $V(0) = I$ where $\frac{df}{dh}$ is the Jacobian corresponding to Eq.~\eqref{eq:h2} \cite{Smale}.

\begin{figure}[htbp]
	\centerline{
		\begin{tabular}{@{}p{0.33\linewidth}@{}p{0.33\linewidth}@{}p{0.33\linewidth}@{} }
			\rlap{\hspace*{5pt}\raisebox{\dimexpr\ht1-.1\baselineskip}{\bf (a)}}
			\includegraphics[height=4cm]{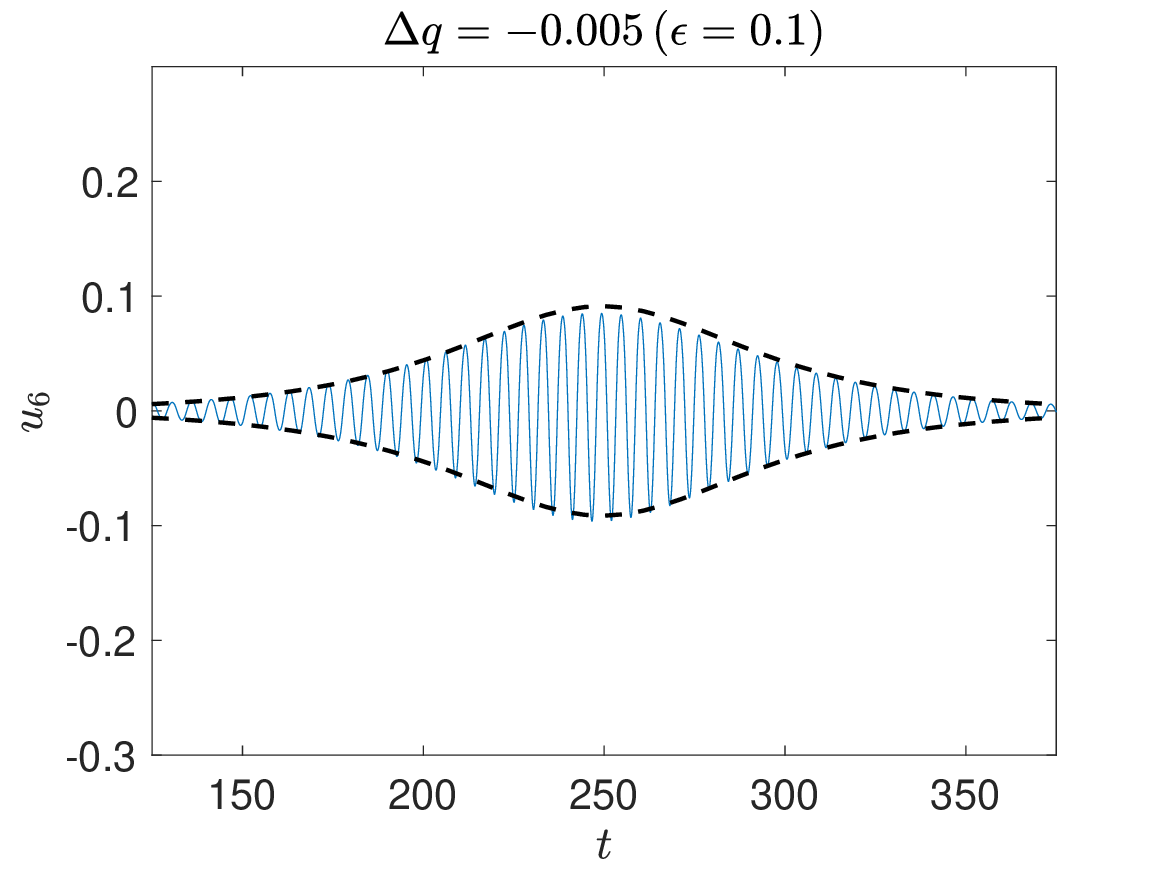}  &
			\rlap{\hspace*{5pt}\raisebox{\dimexpr\ht1-.1\baselineskip}{\bf (b)}}
			\includegraphics[height=4cm]{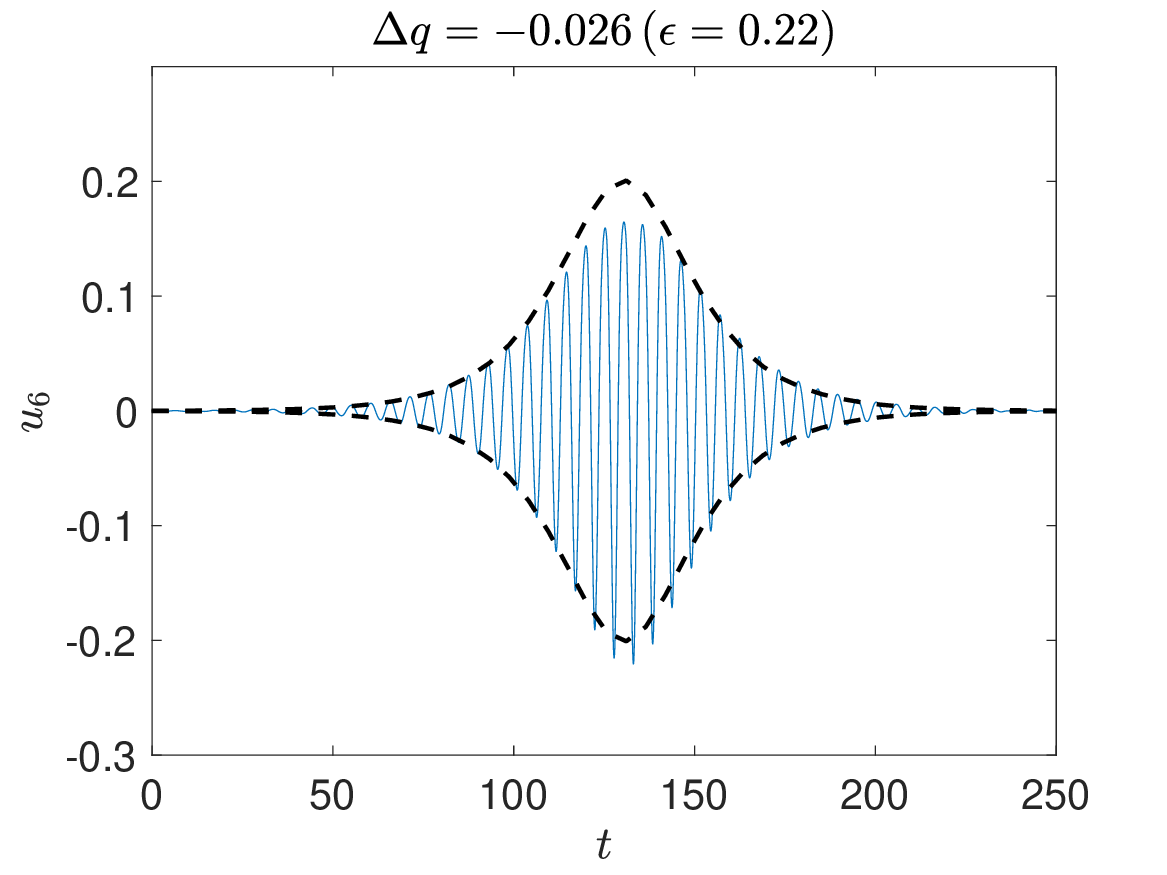} &
			\rlap{\hspace*{5pt}\raisebox{\dimexpr\ht1-.1\baselineskip}{\bf (c)}}
			\includegraphics[height=4cm]{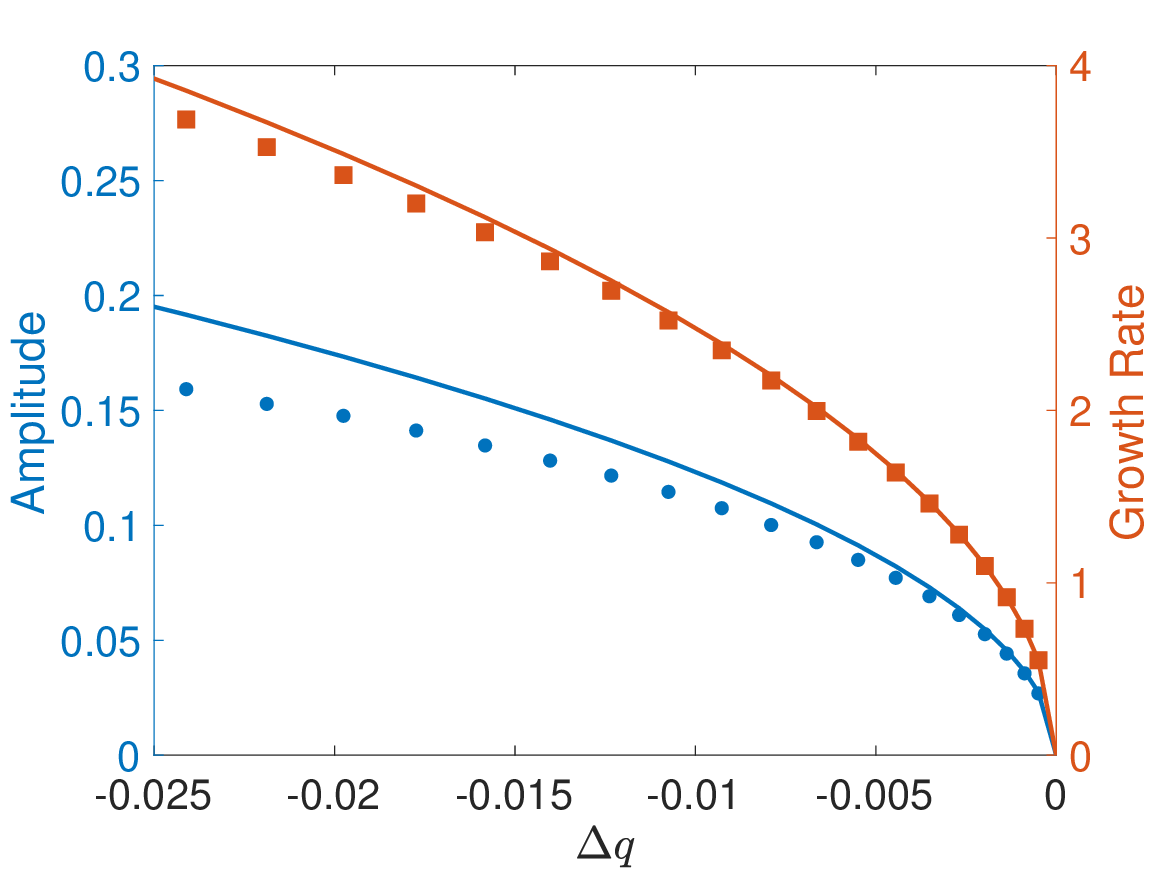} 
		\end{tabular}
	}
	
	\caption{\textbf{(a)} A generalized $q$-gap breather bifurcating from $\mu_2$ with nonlinear coefficients 
		corresponding to a magnetic lattice, namely $K_3=1$ and $K_4 = 0.8$. The other parameter values
		and spectral picture is identical to Fig.~\ref{fig:simulation_K30_hard}(a).
		The initial value is $u_n(0) = 10^{-4} \sin(q_4 n)$ and $\epsilon = 0.1$. The displacement of the 6th particle is shown as a function of time. The modulation amplitude is $k_a = k_a^0 + \epsilon^2 = 0.06$, $k_b = k_b^0 + \epsilon^2 = 0.7587$. The dashed line shows an approximation of the envelope given by Eq.~\eqref{eq:soliton}
		with $\chi = 1.997$, $\lambda''(\ell_0) = 26.3821$ and $\epsilon = 0.1$.
		For this value of $\epsilon$, the distance of the wavenumber to the edge of
		the gap is $\Delta q = q_4 - q_r = - 0.005$.
		\textbf{(b)} Same as panel (a) with $\epsilon = 0.22$. 
		\textbf{(c)} Plot
		of the amplitude of the breather ($\mathrm{max}_t u_6(t)$) for the numerical simulation (blue dots)
		and prediction based on Eq.~\eqref{eq:soliton} (blue line) as a function of $\Delta q$.
		The real part of the Floquet exponent corresponding to $q_4$ (solid red squares) and asymptotic approximation \eqref{varepsilon-again} (red line) are also shown, which indicate the growth (decay) rate of the breather.
	}
	\label{fig:simulation_magnetic}
\end{figure}

In this example, we consider the spectral situation as shown in Fig.~\ref{fig:simulation_K30_hard}(a), such that $m_0=4$ is the  critical mode
bifurcating from $\mu_2$. In this case, $\lambda''(\ell_0) > 0$, so we chose
$\delta=-1$ and we must have $\chi > 0$ to satisfy (\textbf{Coeff}).
Upon computing $h_2(t)$ with the shooting method and substituting into Eq.~\eqref{eq:chi} with $K_3 =1$ and $K_4 = 0.8$
we find $\chi = 1.997 > 0$, as desired. 

Figure \ref{fig:simulation_magnetic}(a) shows a numerical simulation of the lattice with initial displacement $u_n(0) = 10^{-4} \sin(q_4 n)$ and $\epsilon = 0.1$, and panel (b) shows a simulation with $\epsilon = 0.22$. By comparing
panels (a) and (b), we see once again that the $q$-gap breather becomes more narrow and larger in amplitude as $\Delta q$ (and thus $\epsilon$) increases in magnitude. What is also apparent,
especially in panel (b), is that the numerical simulation is asymmetric, namely, the maximum
is not simply the minimum reflected about the $u=0$ line. Evidently, the asymmetric nature
of the FPUT potential with $K_3 \neq 0$ is manifested through a lack of reflection symmetry in the $q$-gap breather profile.
Asymmetric breathing profiles are well known in space-periodic FPUT systems with quadratic nonlinearities \cite{Huang2}. The approximation given by Eq.~\eqref{eq:soliton} remains symmetric, however, and thus one would expect the approximation not to do as well
as in the $K_3 = 0$ case. Indeed, inspection of Fig.~\ref{fig:simulation_K30_hard}(c)
confirms this, where the difference in amplitude between simulation and theory is larger
than in the $K_3 = 0$ case (see e.g., Fig.~\ref{fig:simulation_K30_hard}(c)). Nonetheless,
the asymptotic behavior as $\Delta q \rightarrow 0$ is correct, and in particular
the breather amplitude grows like $\mathcal{O}(\sqrt{|\Delta q|})$.

\subsection{Examples with $c\neq0$ and $K_3\neq0$}

In our final example, we include the effect of damping and select $\tilde{c} = 0.1$.
By definition, $c = \epsilon \tilde{c}$, so the critical parameter set (when $\epsilon=0$) will
have $c=0$, like before. Thus, we consider once again the parameter set 
that corresponds to Fig.~\ref{fig:simulation_magnetic}(a). However,
the numerical solutions and asymptotic approximations will have non-zero damping
affect for $\epsilon > 0$. Since the bifurcation scenario is the same
as in Fig.~\ref{fig:simulation_magnetic}(a) the initial condition
for simulations will be of the same form, namely $u_n(0) = 10^{-4} \sin(q_4 n)$.
An example lattice simulation with $\epsilon = 0.1$ is shown in Fig.~\ref{fig:simulation_magnetic_damped}(a)
where the underlying damping constant is $c = \epsilon \tilde{c} = 0.01$. The solution
experiences an initial growth, with growth rate given by the real part
of the $m_0 = 4$ Floquet exponent, but rather then decaying to a near zero amplitude,
like in all the previous examples with $c=0$, the solution approaches steady
periodic motion with period $2T$. 
A longer time evolution of the same solution
is shown in Fig.~\ref{fig:longtime_damped}. In particular, panel (b)
shows the dynamics are essentially periodic for $t$ sufficiently larger.

In terms of the Poincar\'e
map $F_j = U(2 T j )$, the trivial solution $U(t)=0$ is clearly
a fixed point. The $2T$-periodic solution that is approached in the dynamic
simulation is another fixed point. Thus, the solution shown in 
Fig.~\ref{fig:simulation_magnetic_damped}(a) is a transition front, since it
connects two different fixed points.

Another example of the transition front for a larger value of $\epsilon$ is shown in
Fig.~\ref{fig:simulation_magnetic_damped}(b).  Despite the fact
that the structure is not temporally localized, the initial dynamics
still resemble the ``left" side of the $q$-gap breather. Indeed,
the homoclinic approximation from Eq.~\eqref{eq:soliton} is
quite close to the initial front dynamics (see the solid gray line
of Fig.~\ref{fig:simulation_magnetic_damped}(a)). For this reason,
it is still reasonable to measure the amplitude of the front
in the same way we measured the amplitude for the $q$-gap breathers.
A plot of the front amplitude
and real part of the $m_0 = 4$ Floquet exponent is shown in
 \ref{fig:simulation_magnetic_damped}(c). The amplitude trend is similar to the non-damped case,
but the magnitude of the amplitude is smaller, as expected
(compare panel (c) of Figs.~\ref{fig:simulation_magnetic} and \ref{fig:simulation_magnetic_damped}).

\begin{figure}[htbp] %  figure placement: here, top, bottom, or page
	\centerline{
		\begin{tabular}{@{}p{0.33\linewidth}@{}p{0.33\linewidth}@{}p{0.33\linewidth}@{} }
			\rlap{\hspace*{5pt}\raisebox{\dimexpr\ht1-.1\baselineskip}{\bf (a)}}
			\includegraphics[height=4cm]{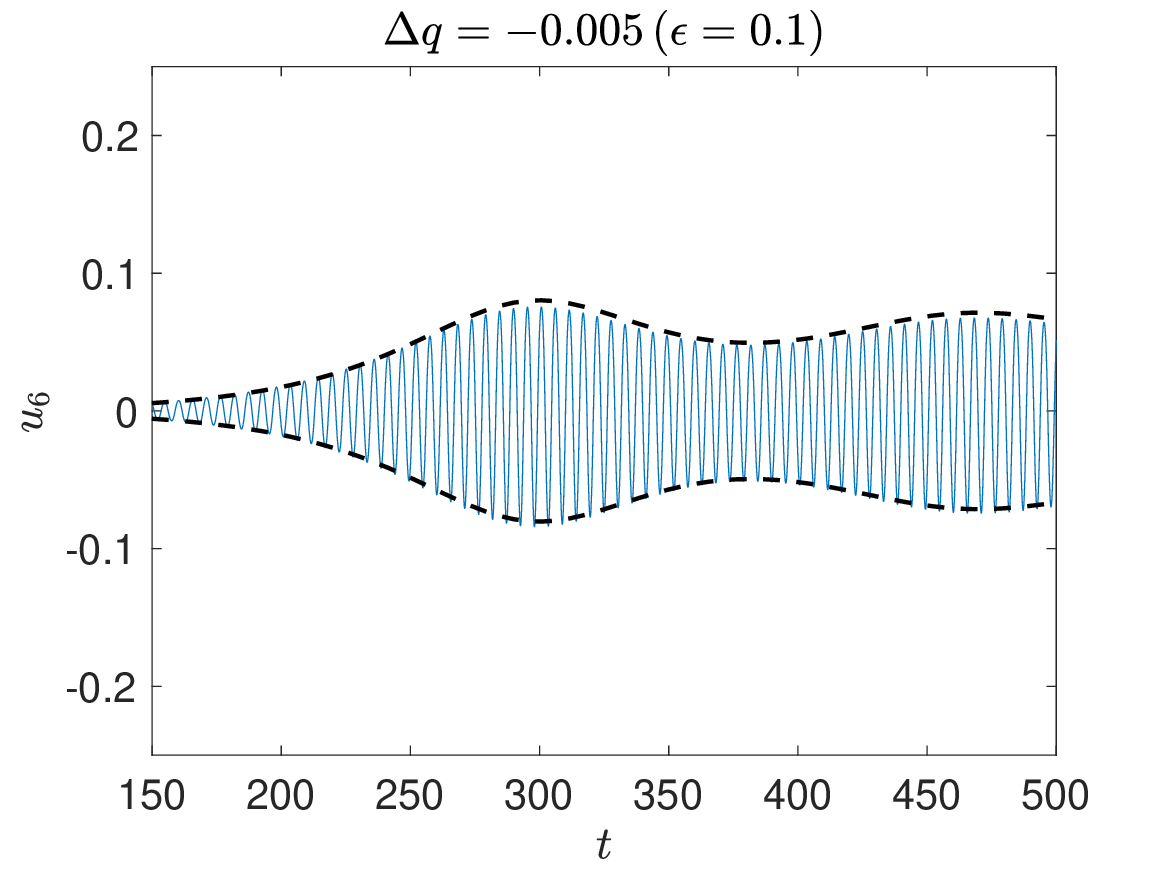}  &
			\rlap{\hspace*{5pt}\raisebox{\dimexpr\ht1-.1\baselineskip}{\bf (b)}}
			\includegraphics[height=4cm]{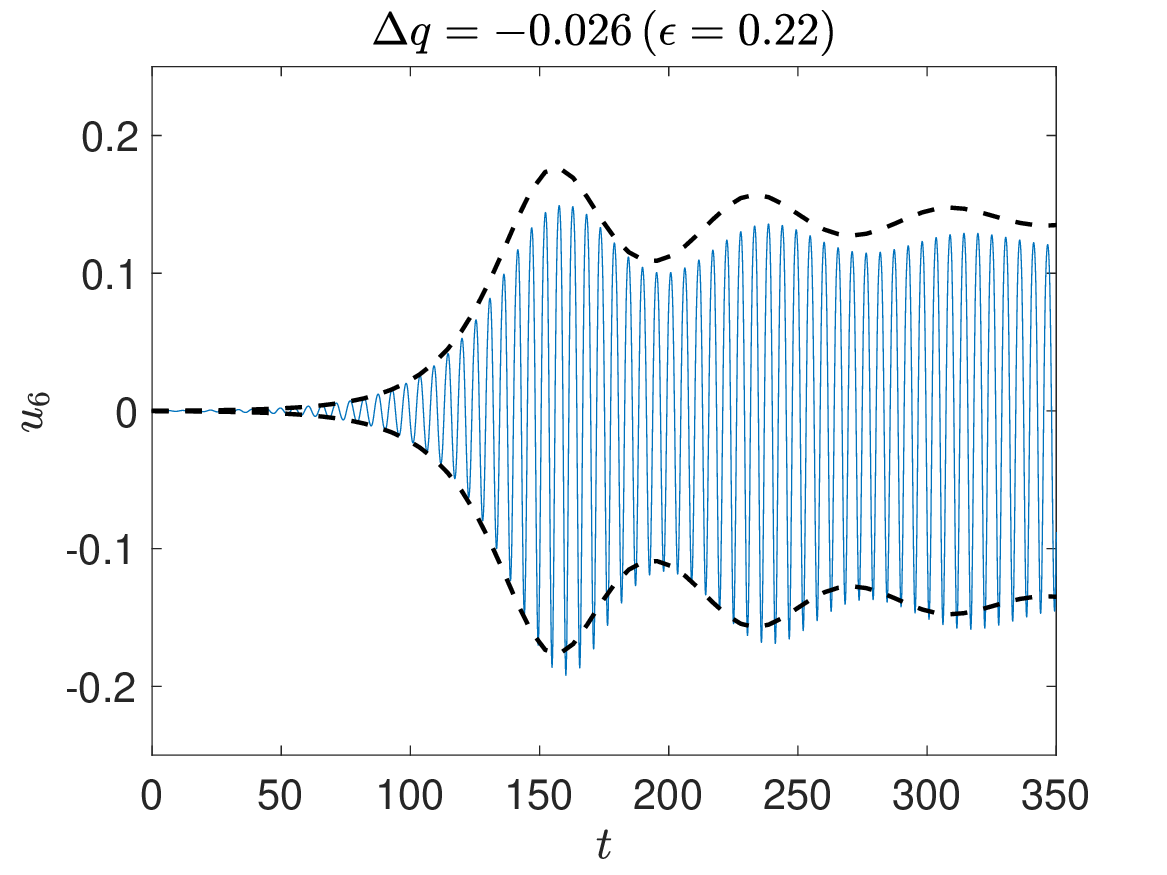} &
			\rlap{\hspace*{5pt}\raisebox{\dimexpr\ht1-.1\baselineskip}{\bf (c)}}
			\includegraphics[height=4cm]{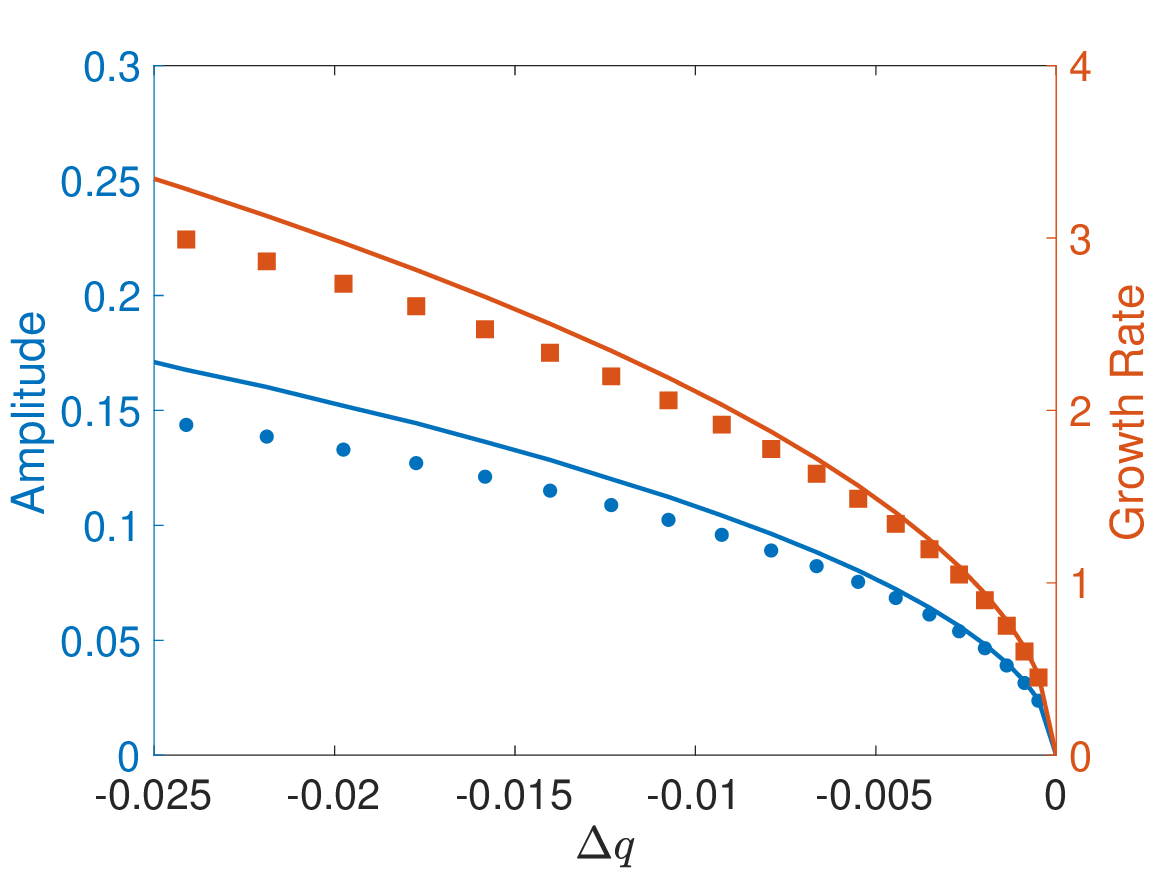} 
		\end{tabular}
	}
	\caption{\textbf{(a)} A transition front bifurcating from $\mu_2$ with nonlinear coefficients 
		corresponding to a damped magnetic lattice, namely $K_3=1$, $K_4 = 0.8$ and $\tilde{c} = 0.1$. The other parameter values
		and spectral picture is identical to Fig.~\ref{fig:simulation_K30_hard}(a).
		The initial value is $u_n(0) = 10^{-4} \sin(q_4 n)$ and $\epsilon = 0.1$. The displacement of the 6th particle is shown as a function of time. The modulation amplitude is $k_a = k_a^0 + \epsilon^2 = 0.06$, $k_b = k_b^0 + \epsilon^2 = 0.7587$ and the damping constant is $c = \epsilon \tilde{c} = 0.01$. The dashed line shows an approximation of the envelope given by Eq.~\eqref{eq:nls_ansatz}
		with $\chi = 1.997$, $\lambda''(\ell_0) = 26.3821$ and $\epsilon = 0.1$.
		For this value of $\epsilon$, the distance of the wavenumber to the edge of
		the gap is $\Delta q = q_4 - q_r = - 0.005$.
		\textbf{(b)} Same as panel (a) with $\epsilon = 0.22$. 
		\textbf{(c)} Plot
		of the amplitude of the front ($\mathrm{max}_t u_6(t)$) for the numerical simulation (blue dots)
		and prediction based on Eq.~\eqref{eq:nls_ansatz} (blue line) as a function of $\Delta q$.
		The real part of the Floquet exponent corresponding to $q_4$ (solid red squares) and asymptotic approximation $r_0^+ \epsilon$ (red line) are also shown, which indicates the initial growth rate.
	}
	\label{fig:simulation_magnetic_damped}
\end{figure}

\begin{figure} %  figure placement: here, top, bottom, or page
	\centerline{
		\begin{tabular}{@{}p{0.5\linewidth}@{}p{0.5\linewidth}@{}}
			\rlap{\hspace*{5pt}\raisebox{\dimexpr\ht1-.1\baselineskip}{\bf (a)}}
			\includegraphics[height=6cm]{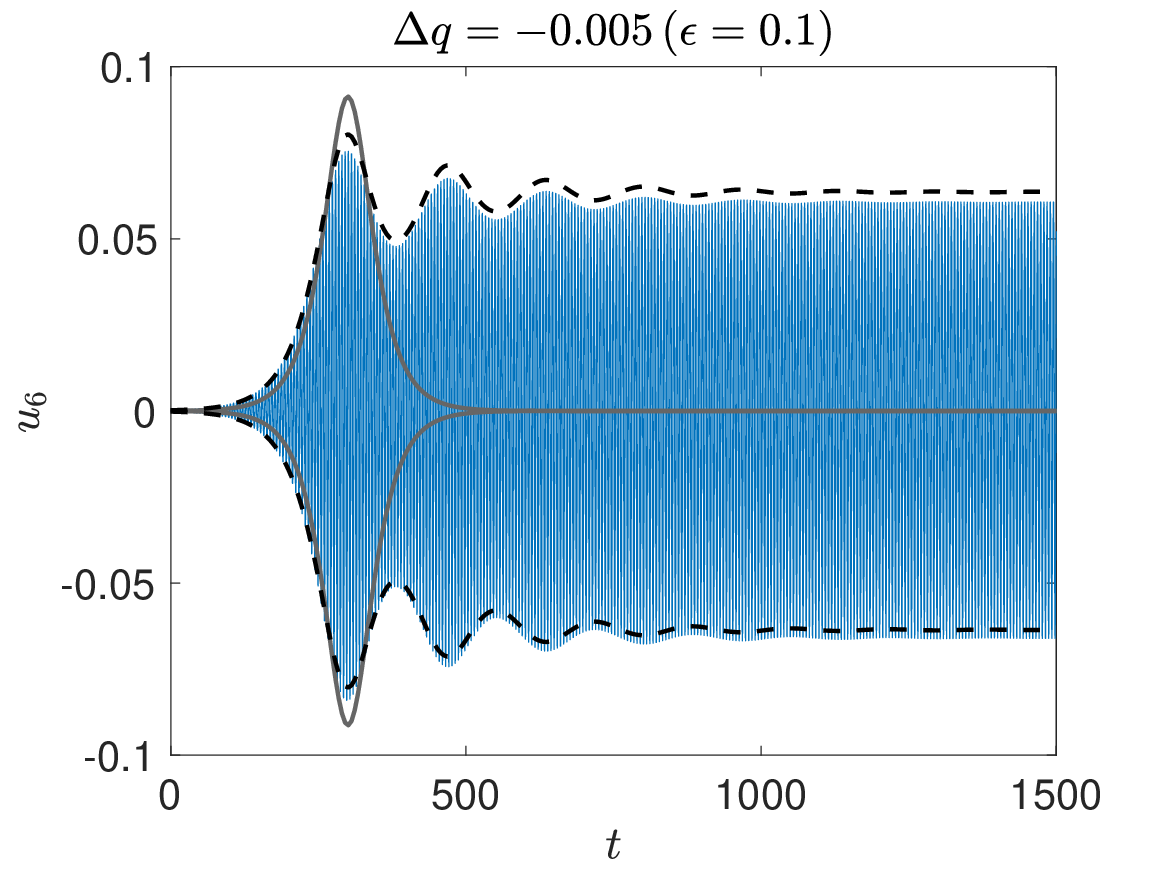}  &
			\rlap{\hspace*{5pt}\raisebox{\dimexpr\ht1-.1\baselineskip}{\bf (b)}}
			\includegraphics[height=6cm]{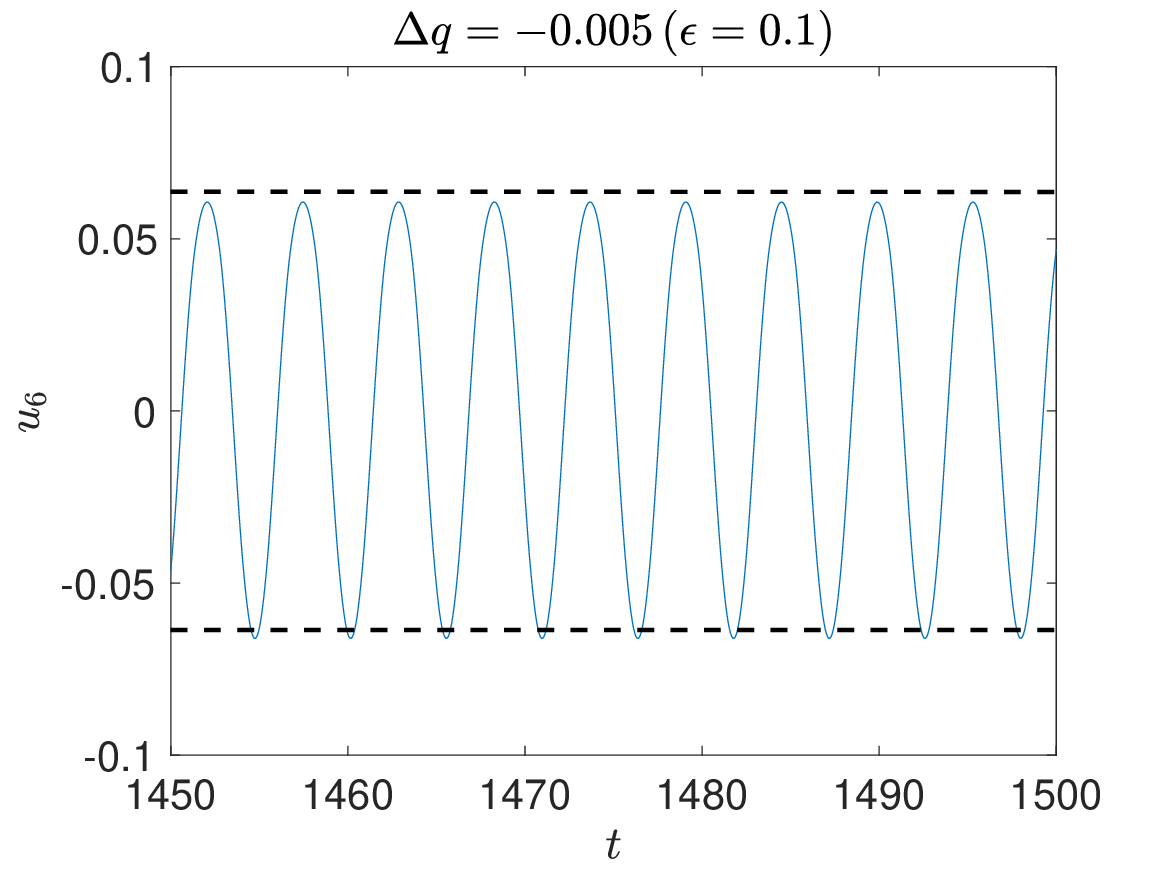} 
		\end{tabular}
	}
	\caption{
		\textbf{(a)} Same as Fig.~\ref{fig:simulation_magnetic_damped}(a) but over a longer
		time interval. The dashed line shows an approximation of the envelope given by Eq.~\eqref{eq:nls_ansatz},
		which accounts for damping. The solid gray line shows the homoclinic approximation (i.e., with no damping) of the envelope given by Eq.~\eqref{eq:soliton}.
		\textbf{(b)} A zoom of panel (a) for large values of $t$. Here it can be
		seen the dynamics are very close to periodic.
	}
	\label{fig:longtime_damped}
\end{figure}

The amplitude equation Eq.~\eqref{amplitude-nonlinear} can
 be used to approximate the front dynamics. However, the equation
does not yield an explicit solution in the presence of damping and so we will employ
a qualitative and numerical analysis of Eq.~\eqref{amplitude-nonlinear}.
A straightforward phase plane analysis shows that the trivial fixed point $(A,A') = (0,0)$ is a saddle
with corresponding eigenvalues $r_0^{\pm} = (-s_3 \pm \sqrt{s_3^2 + 4 s_1})/2$ and the fixed points
$(A,A') = (\pm \sqrt{s_1/s_2},0)$ are spiral-sinks with corresponding eigenvalues 
$r_{1}^{\pm}=(-s_3 \pm i \sqrt{8 s_1 - s_3^2 })/2$
where the $s_j$
are the coefficients of Eq.~\eqref{amplitude-nonlinear}, namely,
$$ s_1 =\left(\frac{\tilde{c}^2}{4\underline{m}} - \delta \right ) \frac{2}{\lambda''(\ell_0)} + \frac{\tilde{c}^2}{4\underline{m}^2} > 0,
\qquad s_2 =  \frac{2 \chi}{\lambda''(\ell_0)} > 0,
\qquad s_3 = \frac{\tilde{c}}{\underline{m}} > 0.$$
In the phase plane, there is a heteroclinic orbit that leaves the trivial fixed point along
the unstable eigenvector $(1,r_+^0)^T$ and approaches the $(\sqrt{s_1/s_2},0)$ fixed point.
There is another heteroclinic orbit that leaves the trivial fixed point along
the unstable eigenvector $(-1,-r_+^0)^T$ and approaches the $(-\sqrt{s_1/s_2},0)$ fixed point,
see Fig.~\ref{fig:heteroclinic}. To approximate the heteroclinic orbit, we numerically
solve Eq.~\eqref{amplitude-nonlinear} with initial condition $A(0) = 10^{-4}, A'(0) = 10^{-4} \, r_+^0$.
The resulting solution $A(\tau)$ is then used in Eq.~\eqref{eq:nls_ansatz} to generate the approximation of the lattice dynamics. 

Examples are shown in Fig.~\ref{fig:simulation_magnetic_damped}(a,b), where
the envelopes are shown for $\epsilon = 0.1$ and $\epsilon = 0.22$,
respectively. Once again, the envelope dynamics are captured
well by Eq.~\eqref{eq:nls_ansatz}, especially for small $\epsilon$.
The periodic oscillation of the envelope can be approximated by 
the imaginary part of the eigenvalue associated to the non-trivial
fixed point, namely $\tau_{\rm env} = 2\pi/( \sqrt{8 s_1 - s_3^2 }/2)$. In terms
of the original lattice variables this translates to $\tau_{\rm env}/\epsilon$.
For the example shown in Fig.~\ref{fig:longtime_damped}(a) with $\epsilon = 0.1$, 
the average peak-to-peak time of the envelope is $165.3$ time units,
where as $\tau_{\rm env}/\epsilon = 159.3$, which is quite close.

The front amplitude as a function of $\Delta q$ is shown as the solid blue line in panel (c),
and an approximation of the initial growth rate $r_+^0 \epsilon$ is shown as
the red line. Once again the asymptotic behavior as $\Delta q \rightarrow 0$ is correct. Despite the presence of damping, the front amplitude grows like $\mathcal{O}(\sqrt{|\Delta q|})$.

\section{Conclusions} \label{theend}

Generalized $q$-gap breathers are coherent structures that are localized in time, periodic in space, and have wavenumber in a $q$-gap. They are the natural counterpart of the discrete breathers of spatially periodic lattices, which themselves are of fundamental importance in a diverse range of fields. 

In the absence of damping, we proved rigorously the existence of generalized $q$-gap breathers in a time-periodic FPUT lattice using normal form theory. In particular, we proved
the existence of oscillating homoclinic solutions over a finite time interval with tails that can be made arbitrarily small, but finite. These solutions bifurcate from
one edge of the $q$-gap. Which of the edges is determined by the nonlinear coefficients $K_3,K_4$ and the concavity of the spectral band from which
the solution bifurcates. The amplitude of the $q$-gap breather grows like $\mathcal{O}(\sqrt{\Delta q})$,
where $\Delta q$ is the distance of the underlying wavenumber to the band edge.
This result makes rigorous the numerical and experimental observations of such $q$-gap breathers in \cite{chong2}.
We also provided a tractable analytical approximation of such solutions using a multiple-scale
analysis and corroborated results with direct numerical simulations. 

In the presence of damping we proved
the existence of solutions that connect the zero state to a time-periodic one,
which we called the transition fronts. The multiple-scale analysis
also provided an accurate description of the front solutions, although
the underlying amplitude equation needed to be solved numerically. The initial
stage of the front dynamics were well described by the undamped $q$-gap breather approximations.

Generalized $q$-gap breathers and transition fronts represent new types of nonlinear wave structures. This work
provided the first rigorous results in their study, complementing earlier experimental and numerical work.
Nonetheless, there are still many open questions regarding $q$-gap breathers and transition fronts.
This includes the possible existence of genuine $q$-gap breathers (i.e., with both tails decaying to zero),
the numerically exact computation of $q$-gap breathers (i.e., numerical roots of the appropriate map up to a user-prescribed tolerance) and
the exploration of such structures in higher spatial dimensions 
or in settings beyond the FPUT realm. Indeed, any system that is already described by a nonlinear wave equation that could be adapted to be time-varying
(in order to induce a $q$-gap) would be a candidate for the implementation of $q$-gap breathers. This suggests
that $q$-gap breathers' relevance, and hence the results of this work, could extend to a wide range of fields.

\bibliographystyle{siam}
\bibliography{GLbib,LatticeManuscript2}

\end{document}